\documentclass{article}

\usepackage[main, preprint]{neurips_2026}

\usepackage[utf8]{inputenc} 
\usepackage[T1]{fontenc}    
\usepackage{hyperref}       
\usepackage{url}            
\usepackage{booktabs}       
\usepackage{amsfonts}       
\usepackage{nicefrac}       
\usepackage{microtype}      
\usepackage{xcolor}         
\usepackage{colortbl}       
\usepackage{multirow}

\usepackage{graphicx}
\usepackage{algorithm}
\usepackage[noend]{algpseudocode}
\usepackage{amsmath}
\usepackage{amssymb}
\usepackage{amsthm}
\usepackage{subcaption}
\usepackage{wrapfig}
\usepackage{enumitem}
\usepackage{xspace}
\usepackage[normalem]{ulem}

\newtheorem{lemma}{Lemma}

    \newcommand{\bfx}{\boldsymbol{x}}

    \newcommand{\eps}{\varepsilon}
    \newcommand{\R}{\mathbb{R}}
    
    \newcommand{\Iab}{I_{a,b}}
    \newcommand{\F}{\mathcal{F}}
    \newcommand{\I}{\mathcal{I}}
    \newcommand{\Iw}{\mathcal{I}_w}
    
    \newcommand{\all}{\textsf{all}}
    \newcommand{\inn}{\textsf{in}}
    \newcommand{\out}{\textsf{out}}
    \newcommand{\KRR}{\textsc{krr}}
    \newcommand{\KR}{\textsc{nwkr}}

    \newcommand{\SRA}{\mathsf{SRA}}
    \newcommand{\SRI}{\mathsf{SRI}}
    \newcommand{\SRO}{\mathsf{SRO}}
    \newcommand{\neigh}{\mathcal{N}}

\title{Efficient Regression Models for Scan Statistics}

\author{%
  Gazi Abdur Rakib \\
  University of Utah \& CosmicAI \\
  \texttt{gaziabdur.rakib@utah.edu} \\
  \And
  Tristan Ashton \\
  National Radio Astronomy Observatory \& CosmicAI \\
  \texttt{tashton@nrao.edu} \\
  \And
  Ryan A. Loomis \\
  National Radio Astronomy Observatory \& CosmicAI \\
  \texttt{rloomis@nrao.edu} \\
  \And
  Brian S. Mason \\
  National Radio Astronomy Observatory \& CosmicAI \\
  \texttt{bmason@nrao.edu} \\
  \And
  Eric J. Murphy \\
  National Radio Astronomy Observatory \& CosmicAI \\
  \texttt{emurphy@nrao.edu} \\
  \And
  Ci Xue \\
  National Radio Astronomy Observatory \& CosmicAI \\
  \texttt{cxue@nrao.edu} \\
  \And
  Jeff M. Phillips \\
  University of Utah \& CosmicAI \\
  \texttt{jeffp@cs.utah.edu} \\
}

\begin{document}

\maketitle

\begin{abstract}
We introduce a new class of regression models for scan statistics on real-valued signals.  These allow for improved fitting of non-stationary signals to contrast with the interval anomalies identified by the scan statistics.  Our models can represent generalized likelihood ratio statistics.  While these methods naively require $O(n^4)$ for a length $n$ signal, we provide algorithmic improvements which lead to linear time algorithms (with assumptions on max interval width).  Our methods, especially ones based on Nadaraya-Watson kernel regression, are demonstrated as especially effective in detecting both synthetically planted anomalies, and for identifying a real ``platforming'' issue in interferometric astronomy.  
\end{abstract}

\clearpage

\section{Introduction}
\label{sec:intro}

For a signal $\bfx = \langle x[1], x[2], \ldots, x[n] \rangle$, measured at regular intervals $[n] = \{1,2, \ldots, n\}$ we tackle the challenge of finding an anomalous interval.  This task occurs in widespread applications in modeling such as for time series (e.g., stocks~\citep{braun2018impact} or weather data~\citep{horel2002mesowest}), interferometric sensing (e.g., from radio telescopes~\cite{escoffier2007alma} or satelites~\citep{suto2013characterization}), and genomics (e.g., for copy-number~\citep{zack2013pan}).  Such anomalous intervals often manifest from instrumental error; that is, these anomalous interval subsets are interruptions in the proper collection of the raw measurements. While some cases can be easily spotted by eye, these data are often directly aggregated and passed to complex downstream analyses, so it is critical to detect these anomalies early and robustly so that they can be dealt with before they corrupt or degrade the complex processes that follow.

Variants of this problem have been studied under the framework of \emph{scan statistics}~\cite{abolhassani2021up,glaz2024handbook}   which "scan" every possible interval $\Iab = \{a,a+1,\ldots, b\}$ of the signal, scoring each one, and returning the interval and score of the most extreme one.  This most extreme score is the scan statistic.  These models traditionally searched for dense clusters of data, often in irregularly-spaced sequences.  Regardless, these approaches fit a probabilistic model to the full signal and observe how much better the fit can be if separate models are fit inside and outside the interval.  The most relevant formalization of this setting has that each observation $x[i]$ is in $\R$ and is drawn independently from a normal distribution $\mathcal{N}(\mu,\sigma)$ with unknown mean $\mu$.  From here a closed form log-likelihood ratio score (c.f., \cite{huang2007spatial,agarwal2006spatial,kulldorff2009normal}) can be devised, which we review (and extend) in Section \ref{sec:scan}.  

The interval anomaly -- an extreme $\Iab$ in the set of all such intervals $\I$ -- is an essential modeling component for the situations we care about, since it corresponds to a rare, and temporary instrumental issue in an otherwise useful signal.  However, a more widely studied setting called \emph{change point detection} has otherwise similar modeling.  It also considers regular 1-dimensional signals, and derives statistical scores to determine when at certain indexes $i \in [n]$ the fit of an interval $\langle a, \ldots, i\rangle$ of the signal likely changes to a new fit in $\langle i+1, \ldots, b\rangle$.  This broader area~\cite{aminikhanghahi2017survey,niu2016multiple} has more developed methods and richer models for the underlying patterns of the signals, and these will resemble the ones we derive and evaluate~\cite{siegmund1995using,yang2020change,yu2022localising,harchaoui2008kernel}; see Appendix \ref{sec:related_work} for more in depth review.  One could imagine using it to detect each boundary of the anomalous interval.  However, this change point setting is structurally different from the scan statistic: it does not enforce that the regions before and after the anomalies adhere to the same model.  

In this framing, we make the \textbf{following contributions}:  
\begin{enumerate}

    \vspace{-3mm}
    \item We formulate new closed-form models of scan statistics that still assume normal noise, but rather than restricting to a constant value, the base model can now fit a regression model -- either polynomial or kernel-based.  Under the assumed generative and noise models, the proposed polynomial regression anomaly score reduces to a log-likelihood ratio, so ranking and filtering by this score inherits the classical Neyman–Pearson optimality property for the associated detection problem. 
    \item We show that the Nadaraya-Watson kernel regression (NWKR) model has favorable modeling properties, requiring only controlled local dependence between consecutive signal values $x[i]$ and $x[i+1]$, as governed by the kernel bandwidth.  The resulting anomaly score is a natural analog to a  generalized likelihood ratio statistic for detecting local departures in a fully nonparametric setting.
    \item We devise and analyze efficient algorithms for most of these models when we restrict intervals considered to be of width at most $w$.  For $d$ degree polynomial regression models the scan statistic can be computed in $O(nwd^3)$ time.  For NWKR we can compute it in $O(nwr)$ time where $r$ is the width of a truncated kernel.  This provides orders of magnitude improvement over direct implementations of these statistics which would take $O(n^4)$ time.  
    \item  Finally, we evaluate these methods on synthetic and real world data sets demonstrating the efficiency of algorithms and effectiveness of the new models, especially the NWKR variant.  A key motivating example is finding ``platforming" anomalies~\citep{alma_pipeline_team_2025_users_guide} that occur in bandpass data from radio telescopes~\citep{escoffier2007alma}.  
\end{enumerate}





\section{Basics of Scan Statistic Model via Regression Families}
\label{sec:scan}
In this section we review various function families from which we fit models.  
Our starting point is the Gaussian Scan Statistic (as modeled by \cite{huang2007spatial}; \cite{agarwal2006spatial}), in which each measurement is the sum of a latent signal $\mu$ and additive Gaussian noise: for every index $i$,
\[
x[i] = \mu + \eps_i,\qquad \eps_i \stackrel{\text{iid}}{\sim} \mathcal{N}(0,\sigma^2),
\]
with an unknown but fixed variance $\sigma^2$. Let $\F$ be the family of functions mapping indices to predicted values, 
$f:\{1,\ldots,n\}\rightarrow\mathbb{R}$. Classical Gaussian scan statistics take $\F = \F_0$ to be constant functions, $f(i)\equiv \hat \mu$; we will later allow richer families.  

The premise of Kulldorff's perspective on scan statistics \citep{kulldorff1999spatial} is that of a hypothesis testing framework. It defines a baseline null distribution $H_0$ that posits no anomalous region; a single $f_\all \in \F$ is fit on all indices $[n]=\{1,\ldots,n\}$. For each candidate \(\Iab \in \I\), the alternative \(H_1(\Iab)\) fits \(f_\inn\in\F\) on \(\Iab\) and $f_\out\in\F$ on $[n]\setminus \Iab$. If there is an interval $\Iab$ which deviates significantly from what one would expect from the outside data, then the joint fit from the pair $f_\inn, f_\out$ should be much better than the global fit from $f_\all$.
Given the Normal noise model, the likelihood (up to constant normalizing factors) of the data $\bfx$ under a function $f \in \F$ is
$
 L(\bfx \mid f) \;=\; \prod_{i=1}^n \exp\!\big( - \frac{(x[i] - f(i) )^2}{\sigma^2}\big).
$
The profile likelihood under $H_0$ is
\[
 L_0(\bfx)
 \;=\;
 \max_{f \in \F} L(\bfx \mid f),
 \quad\quad
\ln L_0(\bfx) = - \frac{1}{\sigma^2} \min_{f \in \F} \sum_{i=1}^n (x[i] - f(i))^2.
\]
Similarly, for a chosen interval \(\Iab\) the alternative hypothesis \(H_1(\Iab)\) uses separate functions on the inside and outside:
\[
  \ln L_1(\bfx; \Iab)
  = - \frac{1}{\sigma^2}
  \left[
    \min_{f_\inn \in\F} \sum_{i=a}^b (x[i] - f_\inn(i))^2
    +
    \min_{f_\out \in\F} \sum_{i\notin \Iab} (x[i] - f_\out(i))^2
  \right].
\]
The log-likelihood ratio is
$      
\mathrm{LLR}(\bfx;\Iab) \;=\; \ln \frac{  L_1(\bfx; \Iab)   }{  L_0(\bfx)  }
$.
The Gaussian normalizing constants and the factor $-1/\sigma^2$ cancel in the log-likelihood ratio, so only the \emph{sums of squared residuals} (SSE) matter. It is convenient to name these SSEs explicitly. We define
\[
  \SRA\!=\!\min_{f\in\F} \! \sum_{i=1}^n (x[i]-f(i))^2; \; 
  \SRI(\Iab)\!=\!\min_{f_\inn\in\F} \! \sum_{i=a}^b (x[i]-f_\inn(i))^2;  \;
  \SRO(\Iab)\!=\!\min_{f_\out\in\F} \! \sum_{i\notin\Iab} (x[i]-f_\out(i))^2.
\]
Under the model above, then the $\mathrm{LLR}(\bfx;\Iab)$ is monotone (up to a positive constant factor) in
$
  \SRA - (\SRI(\Iab) + \SRO(\Iab)).
$
Any monotone transform of this difference yields the same ranking over windows. In practice we use a \emph{normalized} score
\begin{equation}
  S(\Iab)
  \;=\;
  1 - \textstyle{\frac{\SRI(\Iab) + \SRO(\Iab)}{\SRA}},
  \label{eq:normalized-score}
\end{equation}
which is dimensionless and lies in \([0, 1]\). When the inside/outside split does not improve the fit, $\SRI+\SRO \approx \SRA$ and $S(\Iab)\approx 0$; when the split yields a much better joint fit, $\SRI+\SRO \ll \SRA$ and $S(\Iab)$ tends towards 1. Finally, the scan statistic discrepancy used to decide whether there is an anomaly is
\begin{equation}\label{eq:scan-score}
 \Phi(\bfx) \;=\; \max_{\Iab \in \I} S(\Iab),
\end{equation}
which is equivalent (for fixed $\F$ and $\sigma^2$) to maximizing $\mathrm{LLR}(\bfx;\Iab)$.  We will often restrict to a family of intervals $\Iw$ which are not too wide; formally they ensure for all $\Iab \in \Iw$ that $b-a+1 \leq w$.

\subsection{Function Families}
We instantiate $\F$ with three canonical choices.   We observe in this paper that this choice does not affect the above derivation of the score or scan statistics discrepancy function, other than what class $\F$ is optimized within the null ($H_0$) or alternative ($H_1$) hypothesis.  
\begin{enumerate}\vspace{-2mm}
\item \emph{Constant (mean) model} $\F_0$:  In this classic \citep{huang2007spatial,agarwal2006spatial} setting $f(i)\equiv\mu$, where $\mu$ is a constant mean parameter capturing a stationary background.   And classically, the best fit solution is the mean of the data.  

\item \emph{Fixed-degree polynomials} $\F_d$: This can be viewed as a direct extension of $\F_0$ to allow $f(i)=\sum_{k=0}^{d}\beta_k i^k$.  Note for $d=0$ then $\F_0 = \F_d$.  
For small $d$ we can obtain coefficients $\beta$ by unregularized least squares after a linearization expansion; it fits a global model and allows for gentle drifts across the data. 

\item \emph{Kernel ridge regression (KRR)} $\F_\KRR$:   This typically uses a positive-definite kernel $K(i,j)$, so that
$
  f(i) \;=\; \sum_{j=1}^n \alpha_j K(i,j),
$
with
$
   (K + \lambda I) \boldsymbol{\alpha} = \bfx,
$
where $K_{i,j}=K(i,j)$, $\lambda>0$ is a ridge parameter, and $\boldsymbol{\alpha} \in \R^n$ are the kernel weights.

\item \emph{Nadaraya–Watson kernel regression} $\F_\KR$ with symmetric kernel $K(i,j)$:
\begin{equation}
  f(i)\;=\; \textstyle{\frac{\sum_{j=1}^{n} K(i,j)\,x[j]}{\sum_{j=1}^{n} K(i,j)}}.
  \label{eq:nwkr}
\end{equation}
Unlike other families $\F$, this has no parameters to optimize; it is fully non-parametric.  
\end{enumerate}

For the kernel-based families 
we use two kernels in analysis: 
\emph{Gaussian} $K(i,j)=\exp(-\tfrac{(i-j)^2}{2r^2})$ and 
\emph{Laplace} $K(i,j)=\exp(-\tfrac{|i-j|}{r})$, 
each with a bandwidth parameter $r$.

\paragraph{Statistical justification.}

Neyman-Pearson optimality~\citep{neyman1933efficient,lehmann2022testing} implies if one fixes an allowable false-positive rate, then among all tests at that level, the likelihood-ratio test achieves the largest detection probability against the specified alternative.  Any scoring function that is increasing monotonic with the likelihood ratio inherits this optimality; this includes $S(\Iab)$ under the $\F_d$ model.  Thus it is the statistically most powerful way to score evidence for an anomalous interval $\Iab$ under the assumed polynomial model $\F_d$.  

At the scan level, $\Phi(\bfx) = \max_{\Iab \in \I} S(\Iab)$ is an exact generalized likelihood ratio scan statistic~\citep{wilks1938largesample,lehmann2022testing}.  This statistic gives the strongest evidence, over all candidate anomalous intervals, for the best-fitting split model relative to the best-fitting null model, thereby yielding a principled omnibus test for whether any anomalous region is present.  



For the Nadaraya-Watson family $\F_\KR$, because the NWKR model is non-parametric, we cannot claim it is a maximum likelihood estimate, and the Neyman-Pearson optimality statement does not apply for the corresponding $S(\Iab)$.  However, it is the natural analog to a generalized likelihood ratio test statistic for the Nadaraya-Watson smoother.  Because the kernels we consider have exponential decay (or finite range if we truncate) they represent a weighted average over localized regions, and allow the models $f \in \F_\KR$ (in equation \eqref{eq:nwkr}) to locally adapt signal variation.  Moreover, the procedure is tailored to compact, spatially localized departures---precisely the anomaly regime considered in scan statistics---while being relatively insensitive to distant variation that should be explained by the smooth background model.

\section{Scanning Algorithms}
\label{sec:families}

Our computation proceeds in three passes that echo the logic of scan statistics, as illustrated in Algorithm \ref{alg:scan-alg}. First, we fit a single model $f_\all \in \F$ on the whole index set \([n]\) and cache its sum of squared errors,
$
\SRA
\;=\;
\min_{f\in\F}\ \sum_{i\in[n]} \big(x[i]-f(i)\big)^2,
$
which is the SSE under the null model $H_0$. This global fit is performed once and reused for all candidates. 

Next, we \emph{scan} through all possible intervals $\Iab \in \I_w$; this scanning step is an essential part of the scan statistics framework, but an algorithmic nightmare in that it iterates through all options.  

For a specific $\Iab$, we restrict the modeling domain to the inside and the outside and fit $f_\inn$ on $\Iab$ and $f_\out$ on the remainder $[n] \setminus \Iab$.  These two fits yield $\mathrm{SRI}(\Iab)$ and $\mathrm{SRO}(\Iab)$, and thus a score $S(\Iab)$. Finally, we obtain the maximizer
$
\operatorname*{arg\,max}_{\Iab \in \Iw}
S(\bfx;\Iab),
$
which is our scan statistic's estimate of the most discrepant contiguous region.

\begin{algorithm}
\caption{Regression-Scan($x, \F$)}
\label{alg:scan-alg}
\begin{algorithmic}
    \State  Best-fit $\textsf{Cost}_1 \gets\SRA = \min_{f \in \F} \sum_{i \in [n]} (f(i) - x[i])^2$
    \For{ $\Iab \in \I_w$ }
        \State Fit inside $f_\inn = \arg\min_{f \in \F} \sum_{i \in \Iab} (f(i) - x[i])^2$
        \State Fit outside $f_\out = \arg\min_{f \in \F} \sum_{i \in [n] \setminus \Iab} (f(i) - x[i])^2$
        \State Score: $S_{a,b} = \sum_{i \in \Iab} (f_\inn(i) - x[i])^2 + \sum_{i \in [n] \setminus \Iab} (f_\out(i) - x[i])^2$
        \State \textbf{if} ($S_{a,b} < \textsf{Cost}_1$) \textbf{then} $\textsf{Cost}_1 \gets S_{a,b}$
    \EndFor
    \State \textbf{return} $1 - \frac{\textsf{Cost}_1}{\SRA}$
\end{algorithmic}
\end{algorithm}

This algorithm is inherently at least cubic $\Omega(n^3)$ in runtime if followed directly.  There may be $\Omega(n^2)$ intervals $\Iab \in \Iw$ to consider, and within that double loop, just computing the score $S_{a,b}$ for each $\Iab$ sums over $n$ terms.  
In the coming section we will see that this can be even worse for kernel methods where computing $f$ may require $\Omega(n)$ time.  However, there is also a lot of repeated calculations that can be reduced with carefully precomputing and caching partial results.




\subsection{Basic Algorithmic Analysis}
\label{sec:algo-anal}


As a warm up, we consider $f_\mu \in \F_0$ where $f_\mu(j) = \mu$ for all $j$.  The maximum likelihood estimator (MLE) $f^*$ is determined by the choice of $\mu = \frac{1}{n}\sum_{i=1}^n x[i]$.  It can be computed in $O(n)$ time and evaluated in $O(1)$ time.  By updating the domain by increasing it or decreasing it by size $1$ takes $O(1)$ time as well by separately maintaining $\sum_{i=1}^n x[i]$ and $n$, each of which are easy to update and then recombine in $O(1)$ time.  
These algorithmic insights are implicit in \cite{agarwal2006spatial}.  

\textbf{Polynomial regression. }
We next consider the function family $\F_d$ where each $f \in \F_d$ has the form
$
f(i) = \sum_{k=0}^d \alpha_k i^k.
$
Thus each function in this family is parameterized by $\alpha \in \R^{d+1}$.  
The MLE (least squares) model can classically be solved by a standard linear expansion, and representing each $i$ as a $(d+1)$-dimensional vector $v_i = (1, i, i^2, \ldots, i^d) \in \R^{d+1}$.  Then we can solve multi-linear regression for the optimal $\alpha^* \in \R^{d+1}$ by stacking these vectors into the Vandermonde matrix $V \in \R^{n \times {d+1}}$ where the $i$th row is $v_i$, and computing $\alpha^* = (V^T V)^{-1} V^T x$.  The inverse is well-defined when $d+1 \leq n$, and at least $d+1$ rows $v_i$ are linearly independent (which should be true if they are observed with independent noise).  

For the runtime, observe that $V^T V$ is $d \times d$, so the inverse operation takes $O(d^3)$ time, which is not a bottleneck for the common case where $d$ is a small constant like $2,3,4$.   While it takes $O(nd^2)$ time to compute $V^T V$, it can be written as $V^T V = \sum_{i=1}^n v_i^T v_i$ and so can be updated (like $V$ itself) in $O(d^2)$ time.  Thus, the update step also takes $O(d^3)$ time.  Evaluating $f \in \F_d$ takes $O(d)$ time.

\paragraph{Kernel ridge regression (KRR).}
To provide a stronger nonparametric baseline than polynomials while remaining a standard comparator, we implement kernel ridge regression. KRR fits a function in the reproducing kernel Hilbert space (RKHS) associated with a positive semidefinite kernel $K(\cdot,\cdot)$ by trading off squared error with an $\ell_2$ penalty on the RKHS norm. On a subset $S=\{i_1,\ldots,i_m\}$, KRR solves
$
\widehat{f}_S \;\in\;\arg\min_{f\in\mathcal{H}_K}\ \sum_{r=1}^{m}\bigl(x[i_r]-f(i_r)\bigr)^2 \;+\; \lambda\,\|f\|_{\mathcal{H}_K}^2,
$
where $\lambda>0$ is the regularization parameter. By the representer theorem, $\widehat{f}_S$ has the form
$
\widehat{f}_S(i)\;=\;\sum_{r=1}^{m}\alpha_r\,K(i,i_r),
$
and the coefficients $\alpha\in\mathbb{R}^m$ are obtained by solving the linear system
$
(G+\lambda I)\,\alpha \;=\; x_S$, where the gram matrix $G$ is defined $G_{r\ell}=K(i_r,i_\ell)$.  
Predictions on the subset are $\widehat{x}_S[i_r]=(\,G\alpha\,)_r$ (equivalently, evaluate $\widehat{f}_S$ at the training points), and the subset SSE is
$
\mathrm{SSE}(S)\;=\;\sum_{r=1}^{m}\bigl(x[i_r]-\widehat{x}_S[i_r]\bigr)^2
\;=\;\|x_S-G\alpha\|_2^2.
$
For fairness in scanning, KRR is fit separately on the inside and outside sets for each window (as required by the scan-statistic alternative), using the same $(G+\lambda I)^{-1}$ solve on the corresponding subset.  
This directly takes $O(n^3)$ for the matrix inverse, and update time is also slow at $O(n^3)$ because while we can update $K$ in $O(n)$ time, the inverse is still takes $O(n^3)$ time.  And while faster approximate algorithms exist~\citep{MuscoMusco2017Recursive,AvronEtAl2017RFF} the do not give much benefit at the scales we consider.  
Evaluation also is a slow $O(n)$ time.  


%


\textbf{Nadaraya-Watson Kernel Regression. }
We next provide a non-parametric method
Nadaraya-Watson Kernel Regression~\citep{nadaraya1964estimating,watson1964smooth}.  This leverages a kernel $K : \R \times \R \to \R$, but does not solve for an optimal solution.  Instead it can be viewed as a smoothed moving average.  The model is
$
  f_K(i) = \textstyle{\frac{\sum_{j =1}^n K(i,j) x[j]}{\sum_{j =1}^n K(i,j)}}.
$
Other than the choice of kernel (including its bandwidth parameter $r$), there are not model parameters and no optimization to be solved, it simply enforced a sort of weighted neighborhood over which to take a moving average.  

\textbf{Runtime Summary. }
If we restrict to $\Iab \in \Iw$ then in Algorithm \ref{alg:scan-alg} there are $O(nw)$ model windows to consider.  The initial windows are small, and so the costs per interval is asymptotically dominated by (1) updating the model to increase or decrease one additional point to the $f_\inn$ and one fewer point in the $f_\out$ models, and (2) evaluating the model on all $n$ points to compute the $S_{a,b}$ value.  Thus the total runtime of the Algorithm \ref{alg:scan-alg} is $O(nw) \times (\textsf{update time}  +  n \times \textsf{evaluation time})$.  We summarize this for the four models we consider in Table \ref{tab:models}.  
The next subsection will show how improve some runtimes, generating the last two columns.  

\begin{table}[h]
\centering
\begin{tabular}{lcccccc}
\hline
Model & Solve & Update & Evaluate & Total & Improved Total & Truncated Total\\
\hline
$\F_0$      & $O(n)$      & $O(1)$      & $O(1)$      & $O(n^2w)$     & $O(nw)$  & $O(nw)$    \\
$\F_d$      & $O(nd^2)$   & $O(d^3)$    & $O(d)$      & $O(n^2w d^3)$ & $O(nw d^3)$ & $O(nw d^3)$ \\
$\F_{\KRR}$ & $O(n^3)$    & $O(n^3)$    & $O(n)$      & $O(n^4w)$     & $O(n^4w)$ & $O(n^4 w)$ \\
$\F_{\KR}$  & $O(1)$      & $O(1)$      & $O(n)$      & $O(n^3w)$     & $O(n^2w)$ & $O(nwr)$ \\
\hline
\end{tabular}
\caption{\label{tab:models}Runtime Comparison of models in Algorithm \ref{alg:scan-alg} for a sequence of length $n$, with max interval size $w$, and kernels truncated at radius $O(r)$.  Improved results derived in Section \ref{sec:faster}.}
\end{table}


\subsection{Algorthmic Improvements}
\label{sec:faster}

We can improve the runtime in two significant ways.  First, we can precompute aspects of the evaluation of $S_{a,b}$ so that the update time is improved; this applies to $\F_0, \F_d, \F_\KR$.  Second we can consider truncated kernels $K_r$ so that when the bandwidth of the kernels is $r$, and we only evaluate a points so $|i-j| \leq O(r)$ (typically within $3r$) which only has $O(r)$ points total.  This impacts the runtime for models $\F_\KRR, \F_\KR$.

\textbf{Improved Precomputation for Polynomials.  }
First we sketch an improved runtime for the $\F_d$ class; the idea is we can partially pre-compute the evaluation of $S_{a,b+1}$ using $S_{a,b}$. 

Again as a warm-up, we consider $\F_0$ and examine just the term $\sum_{i \in \Iab} (f_\inn(i) - x[i])^2$, and show how we can update this to the same term over $I_{a,b+1}$.  Recall that $f_\inn(i) = \mu_{a,b}$ is constant; it is the mean of all values in the interval $\Iab$.  So we can rewrite:
\[
\sum_{i \in \Iab} (f_\inn(i) - x[i])^2 = \sum_{i \in \Iab} (\mu_{a,b} - x[i])^2 
  = \sum_{i \in \Iab} \mu_{a,b}^2 + \sum_{i \in \Iab} x[i]^2 - 2 \mu_{a,b} \sum_{i \in \Iab} x[i].
\]
Then we observe that as we move from $\Iab$ to $I_{a,b+1}$ we can update $\mu_{a,b}$ in $O(1)$ time (by maintaining the sum and dividing by the size $(b-a+1)$).   Also we can maintain the quantities $\sum_{i \in \Iab} x[i]^2$ and $\sum_{i \in \Iab} x[i]$ in $O(1)$ time.  Finally, re-assembling these terms allows us to efficiently compute that first term in $S_{a,b}$.  The outer term can be decomposed the same way, and hence also updated in $O(1)$ time. This improves the total time for $\F_0$ to $O(nw)$.

For the $\F_d$ case, the analysis is similar, but is more involved since the functions $f \in \F_d$ are not constant.  Yet we can still apply a similar precomputation with a careful analysis of their structure.  

\begin{lemma}
    For $\F_d$ we can update score $S_{a,b}$ to $S_{a,b+1}$ in $O(d^3+1)$ time.  
\end{lemma}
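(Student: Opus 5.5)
The plan is to express each of the two residual sums in $S_{a,b}$ through a constant number of sufficient statistics of size $O(d^2)$, each of which changes by a rank-one or vector update when index $b+1$ moves from outside to inside.

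For any index set $T$, write $V_T$ for the Vandermonde rows $v_i=(1,i,\ldots,i^d)$ with $i\in T$, and $x_T$ for the matching entries of $\bfx$. Maintain the statistics
\[
G_T=\textstyle\sum_{i\in T} v_i^T v_i,\qquad h_T=\textstyle\sum_{i\in T} x[i]\,v_i^T,\qquad q_T=\textstyle\sum_{i\in T} x[i]^2 .
\]
The least-squares solution is $\alpha_T=G_T^{-1}h_T$. Expanding the residual exactly as in the $\F_0$ warm-up gives
\[
\sum_{i\in T}(f_T(i)-x[i])^2=\alpha_T^TG_T\alpha_T-2\alpha_T^Th_T+q_T=q_T-h_T^TG_T^{-1}h_T .
\]
This is the polynomial analogue of $\sum\mu^2+\sum x^2-2\mu\sum x$, and it shows the SSE on $T$ depends on the data only through $(G_T,h_T,q_T)$.

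We apply this with $T=\Iab$ for the inside term and $T=[n]\setminus\Iab$ for the outside term. The outside statistics never need to be recomputed from scratch. Precompute once the global totals $G_{[n]},h_{[n]},q_{[n]}$, at cost $O(nd^2)$ before the scan. Then use
\[
G_{[n]\setminus \Iab}=G_{[n]}-G_{\Iab},
\]
and likewise for $h$ and $q$.

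Moving from $\Iab$ to $I_{a,b+1}$ changes the statistics as follows:
\begin{itemize}
\item $G_{\Iab}$ gains the rank-one term $v_{b+1}^Tv_{b+1}$, which costs $O(d^2)$ to add, or $O(d)$ to form $v_{b+1}$ by repeated multiplication.
\item $h$ gains $x[b+1]v_{b+1}^T$, at cost $O(d)$.
\item $q$ gains $x[b+1]^2$, at cost $O(1)$.
\end{itemize}
The outside statistics are obtained by subtraction in $O(d^2)$. Then we form $h^TG^{-1}h$ for both sets. We solve the two $(d+1)\times(d+1)$ systems $G\alpha=h$ directly, at cost $O(d^3)$, and assemble
\[
S_{a,b+1}=q_{\inn}-h_{\inn}^T\alpha_{\inn}+q_{\out}-h_{\out}^T\alpha_{\out}
\]
in $O(d)$. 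The total is $O(d^3+1)$, where the $+1$ covers $d=0$. The normalization by $\SRA$ is a single division by a precomputed constant.

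The main obstacle is well-definedness and numerical stability, not the running time. For small intervals with $b-a+1\le d$, $G_{\Iab}$ is singular, so the inside fit interpolates exactly. I would handle this by defining the inside SSE as $0$ there, or by using a pseudo-inverse of a $(d+1)\times(d+1)$ matrix, which is still $O(d^3)$. Raw monomials $i^k$ up to $n^d$ are badly conditioned. I would note that centering indices, for example using $(i-c)^k$ with a fixed $c$, or rescaling by $n$, is a fixed change of basis applied uniformly to all sets. It therefore leaves every SSE unchanged and does not affect the asymptotic bound. One could instead reduce the per-step cost to $O(d^2)$ with Sherman--Morrison updates, but the claimed $O(d^3)$ bound does not require this.
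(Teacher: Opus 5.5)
Your proposal is correct and takes essentially the same route as the paper: the paper also maintains the power sums $\sum_{i\in\Iab} i^{h}$ for $h\le 2d$ (the entries of your Gram matrix $G_T$), the moments $\sum_{i\in\Iab} i^k x[i]$ (your $h_T$), and $\sum_{i\in\Iab} x[i]^2$ (your $q_T$), re-solves for $\alpha$ in $O(d^3)$ time for both the inside and outside fits, and reassembles the two SSEs from these statistics in $O(d^2)$ time. Your closed form $q_T-h_T^TG_T^{-1}h_T$ is a slightly tidier way to do the reassembly, and it correctly keeps the factor $2$ on the cross term that the paper's displayed expansion drops.
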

\begin{proof}
The key analysis surrounds evaluating the sum of squared errors cost from $\Iab$ to $I_{a,b+1}$.  We now use that for $f \in \F_d$ that
$
f(i) = \sum_{k=0}^d \alpha_k i^k.
$
Then we expand 
\[
\sum_{i \in \Iab} (f_\inn(i) - x[i])^2 = \sum_{i \in \Iab} (\sum_{k=0}^d \alpha_k i^k - x[i])^2
 = \sum_{k = 0}^d \sum_{j = 0}^d \alpha_k \alpha_j (\sum_{i \in \Iab} i^{k+j}) + \sum_{i \in \Iab} x[i]^2  - \sum_{i \in \Iab} x[i] (\sum_{k=0}^d \alpha_k i^k)
\]
Now as we update from $\Iab$ to $I_{a,b+1}$, we can recompute the optimal $f_\inn$ parametrized by $\alpha \in \R^{d+1}$ in $O(d^3)$ time.
Then the first term of the expansion can be computed from $\alpha$ in $O(d^2)$ time from the maintained $2d$ terms in $\sum_{i \in \Iab} i^{h}$ time for $k+j=h \in [0, \ldots, 2d]$.  The second term can be maintained in $O(1)$ time per update since it does not depend on $\alpha$ and $f$.  
The third term can be re-written as 
\[
\textstyle{\sum_{i \in \Iab} x[i] (\sum_{k=0}^d \alpha_k i^k) = \sum_{k=0}^d \alpha_k (\sum_{i \in \Iab} i^k  x[i] )}
\]
so the value $\sum_{i \in \Iab} i^k x[i]$ can be maintained in $O(d)$ time across all values $k \in [0 \ldots d]$.  After this re-organization, this term can be computed from $\alpha$ in $O(d)$ time.  

Using the same function decomposition, $f_\out$ can be solved for and its sum of squared errors can be recomputed in $O(d^3)$ time when $\Iab$ changes $I_{a,b+1}$.  Thus the bottleneck per step is recomputing $f_\inn, f_\out \in \F_d$ in $O(d^3)$ time, and the rest of the maintenance and recomputation can be done in $O(d^2)$ time to obtain $S_{a,b+1}$ from $S_{a,b}$.  
\end{proof}

This computing $\mathsf{Cost}_1$ takes $O(nw (d^3+1))$ time.  
Since we can also solve for $\SRA$ directly in $O(n d^2 + d^3)$ time, this also implies that we can solve for $\SRA$ within that time bound, and Algorithm \ref{alg:scan-alg} takes $O(nw (d^3+1))$ total time for $\F_d$, including when $d=0$.


\textbf{Improved Precomputation for NW Kernel Regression. }
We cannot directly apply this approach for the kernel based methods, since expanding the function $f \in \F_\KRR$ or $\F_\KR$ in a similar ways leads to $\Omega(n)$ terms in the expansion.  However, we can speed up the $n \times \mathsf{evaluation}$ step as a whole, since each evaluation will be similar to the previous one.  

\begin{lemma}\label{lem:KR-update}
    For $\F_\KR$ we can update score $S_{a,b}$ to $S_{a,b+1}$ in $O(n)$ time.  
\end{lemma}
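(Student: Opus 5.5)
The plan is to maintain, for every index $i \in [n]$, the numerator and denominator of the Nadaraya--Watson fit as running sums. Then moving from $\Iab$ to $I_{a,b+1}$ changes each of these sums by a single kernel term, and the score is reassembled with one pass over the indices. Concretely, for the inside model on $\Iab$ I would keep, for each $i \in \Iab$,
\[
N^\inn_{a,b}(i) = \textstyle\sum_{j=a}^{b} K(i,j)\,x[j], \qquad D^\inn_{a,b}(i) = \textstyle\sum_{j=a}^{b} K(i,j),
\]
so that $f_\inn(i) = N^\inn_{a,b}(i)/D^\inn_{a,b}(i)$. For the outside model I would keep, for each $i \notin \Iab$,
\[
N^\out_{a,b}(i) = \textstyle\sum_{j \notin \Iab} K(i,j)\,x[j] = N_\all(i) - N^\inn_{a,b}(i), \qquad D^\out_{a,b}(i) = D_\all(i) - D^\inn_{a,b}(i),
\]
where $N_\all(i)=\sum_{j=1}^n K(i,j)x[j]$ and $D_\all(i)=\sum_{j=1}^n K(i,j)$. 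These global sums are computed once, before the scan, and reused for every interval; the same sums also give $f_\all$ and hence $\SRA$.

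The update step uses the fact that extending $\Iab$ to $I_{a,b+1}$ adds exactly the one index $b+1$ to the inside and removes it from the outside. For every $i \in [n]$ I would set
\[
N^\inn_{a,b+1}(i) = N^\inn_{a,b}(i) + K(i,b+1)\,x[b+1], \qquad D^\inn_{a,b+1}(i) = D^\inn_{a,b}(i) + K(i,b+1).
\]
The outside quantities then follow by subtraction from the global sums. Each of the $n$ entries costs $O(1)$, assuming $O(1)$ kernel evaluation, so the update takes $O(n)$ time. For the newly added point $i=b+1$, its inside sums $N^\inn(b+1), D^\inn(b+1)$ are the only ones not already available incrementally. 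I would handle this by maintaining the inside sums for all $i\in[n]$, not only for $i\in\Iab$, so they are valid whenever an index joins the interval. This is the same $O(n)$ array updated above, so it adds no cost.

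To recompute the score, I would evaluate
\[
S_{a,b+1} = \sum_{i=a}^{b+1}\Bigl(\tfrac{N^\inn_{a,b+1}(i)}{D^\inn_{a,b+1}(i)} - x[i]\Bigr)^2 + \sum_{i\notin I_{a,b+1}}\Bigl(\tfrac{N^\out_{a,b+1}(i)}{D^\out_{a,b+1}(i)} - x[i]\Bigr)^2
\]
directly with one $O(n)$ pass. The main obstacle is that, unlike the $\F_d$ case, there is no finite-dimensional sufficient statistic. Adding $b+1$ changes the fitted value $f_\inn(i)$ and $f_\out(i)$ at essentially every index, because the ratio changes nonlinearly, so the squared errors cannot be patched from $S_{a,b}$ in $O(1)$. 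I would not try to update the SSE itself incrementally. I would only maintain the linear quantities $N$ and $D$ incrementally and accept the $O(n)$ re-evaluation, which already matches the claimed bound.

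Two small checks remain. When restarting the scan at a new left endpoint $a$ (the case $b=a$), the arrays are re-initialised in $O(n)$ from the single point $a$, which fits within the bound. Denominators are positive for the Gaussian and Laplace kernels, so every ratio is defined.
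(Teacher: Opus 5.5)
Your proposal is correct and follows essentially the same route as the paper. In both, you keep per-index numerators $N$ and denominators $D$ for the inside and outside Nadaraya--Watson fits, change each by one kernel term in $O(1)$ time when index $b+1$ moves inside, and re-sum the squared residuals in one $O(n)$ pass. Your choice to keep the inside sums for every $i\in[n]$ and derive the outside ones as $N_{\all}-N^{\inn}$ and $D_{\all}-D^{\inn}$ is a minor refinement: it makes explicit how the newly added index $b+1$ gets valid inside sums, which the paper's proof leaves implicit.
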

\begin{proof}
Recall a Nadaraya Watson Kernel Regression function $f_K \in \F_\KR$ can be written as:
\[
f_K(i \mid \Iab)  = \textstyle{\frac{\sum_{j \in \Iab} K(i,j) x[j]}{\sum_{j \in \Iab} K(i,j)} = \frac{N(i \mid \Iab)}{D(i \mid \Iab)}}
\]
where the numerator $N(i \mid \Iab) = \sum_{j \in \Iab} K(i,j) x[j]$ and denominator $D(i \mid \Iab) = \sum_{j \in \Iab} K(i,j)$ both have a linear number of terms to sum up.  

For each $i \in [n]$ we can store these numerators $N(i \mid \Iab)$ and denominators $D(i \mid \Iab)$ if $i \in \Iab$; and the same for the complementary ones $N(i \mid [n] \setminus \Iab)$ and $D(i \mid [n] \setminus \Iab)$ if $i \notin \Iab$ which contributes to $f_\out$.  Then when $\Iab$ shifts to $I_{a,b+1}$ it takes $O(1)$ time to update each numerator and denominator, for each $i \in [n]$, in total $O(n)$ time.  Summing these in $O(n)$ time yields $S_{a,b+1}$.  
\end{proof}

To run Algorithm \ref{alg:scan-alg} on $\F_\KR$ now takes $O(n^2)$ time to compute $\SRA$, and then each of the $O(nw)$ iterations of the for loop takes $O(n)$ time, so the total improved runtime is now $O(n^2w)$.  
Computing $\SRA$ for the Laplace kernel can be reduced to $O(n)$ using the special structure of that kernel, but this does not improve the overall runtime.  


\textbf{Improved Runtime for Truncated Kernels. }
Finally, we note that because of the very structured nature of the data we consider, there are rarely many pairs $i,j \in [n]$ that have a significant effect on functions $f$ and the resulting scan statistic.  Both Gaussian and Laplace kernels have (squared) exponential decay in effect as $|i-j|$ increases beyond the bandwidth parameter $r$.  As a result, it is common to use truncated kernels where we set $K(i,j) = 0$ if $|i-j| \geq C \cdot r$ for some constant $C$ (e.g., $C=3$) multiple of the bandwidth. 
We can improve efficiency for $\F_\KR$ 
with key insight that changing $\Iab \to I_{a,b+1}$ only updates the function values $f_\inn$ and $f_\out$ in $O(r)$ locations.  
\begin{lemma}\label{lem:KR-update-r}
    For $\F_\KR$ using a kernel $K_r$ truncated at radius $O(r)$, we can update score $S_{a,b}$ to $S_{a,b+1}$ in $O(r)$ time.  
\end{lemma}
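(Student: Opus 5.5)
The plan is to refine the bookkeeping of Lemma~\ref{lem:KR-update}. We keep all numerators $N(i\mid\cdot)$ and denominators $D(i\mid\cdot)$, and the per-index squared errors $e_{a,b}(i) = (x[i]-f(i))^2$, where $f=f_\inn$ if $i\in\Iab$ and $f=f_\out$ otherwise. We also keep the running total $T_{a,b} = \sum_{i\in[n]} e_{a,b}(i) = \SRI(\Iab)+\SRO(\Iab)$. The claim then follows once we show that only $O(r)$ entries change when $\Iab$ becomes $I_{a,b+1}$, and that each can be fixed in $O(1)$ time: we subtract the old $e(i)$ from $T$ and add the new one.

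\textbf{Step 1: locality of the change.} Moving from $\Iab$ to $I_{a,b+1}$ transfers exactly the index $b+1$ from the outside set to the inside set. Under the truncated kernel $K_r$, the quantity $N(i\mid\Iab)=\sum_{j\in\Iab}K_r(i,j)x[j]$ changes only by the term $K_r(i,b+1)x[b+1]$. That term is zero unless $|i-b+1|$, more precisely $|i-(b+1)|$, is less than $Cr$. The same holds for $D(i\mid\Iab)$, $N(i\mid[n]\setminus\Iab)$ and $D(i\mid[n]\setminus\Iab)$. So the only indices whose inside or outside sums change lie in the window $W=\{i: |i-(b+1)|<Cr\}$, which has $O(r)$ elements. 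For each $i\in W$ we add $K_r(i,b+1)x[b+1]$ and $K_r(i,b+1)$ to the inside numerator and denominator, and subtract them from the outside ones, in $O(1)$ time. Membership of $i$ itself changes only for $i=b+1$, and $b+1\in W$.

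\textbf{Step 2: updating the score.} For $i\notin W$ the fitted value $f(i)$ and its membership are both unchanged, so $e(i)$ is unchanged. For each $i\in W$ we recompute $f(i)=N/D$ using the correct side's sums, then form the new $e(i)$ and apply the difference to $T$. We take the convention $f=0$ when $D=0$; this can happen when the inside interval lies far from $i$. The total cost is $O(|W|)=O(r)$ operations, which gives $T_{a,b+1}$. The normalized score $S_{a,b+1}=1-T_{a,b+1}/\SRA$ then takes $O(1)$ more time.

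\textbf{Main obstacle.} The delicate point is that the maintained arrays must stay consistent across the whole scan, not just over a single step. A new left endpoint $a$ must be initialized without paying $\Theta(n)$. I would handle this by resetting from the all-outside state: for $\Iab$ with $b=a-1$ (empty inside), the outside sums equal the global ones from the $\SRA$ computation. Adding each index costs $O(r)$, and removing indices after finishing a row costs the same. Restoring the state after each row of $w$ extensions therefore costs $O(wr)$, so the amortized cost per update stays $O(r)$, consistent with the lemma.
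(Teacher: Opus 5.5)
Your proposal is correct and takes essentially the same route as the paper: maintain the inside/outside numerators, denominators, fitted ratios and per-index squared errors; note that moving $b+1$ from outside to inside only affects the $O(r)$ indices within the truncation radius; and for those indices subtract the old error, update $N$ and $D$, recompute the ratio, and add back the new error. Two of your choices go a little beyond the paper's proof: you keep both inside and outside sums for every index, so $b+1$ already has a correct inside sum when it joins, and you restore the all-outside state after each row in $O(wr)$ time, which justifies the $O(nwr)$ total without an $O(n)$ reset per row.
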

\begin{proof}
As in the proof of Lemma \ref{lem:KR-update}, we can maintain numerators $N(i \mid \Iab)$ and denominators $D(i \mid \Iab)$ for each $i \in \Iab$ and similar for $i \notin \Iab$.  We now also save their ratio $f_K(i \mid \Iab)$, and their sum of squared errors from $x[i]$.  

Now in transition from interval $\Iab$ to $I_{a,b+1}$, the observation at location $b+1$ moves from part of the $f_\out$ to the $f_\inn$ model.  This means we need to subtract its effect on $f_\out$ and add its effect to $f_\inn$.  But due to kernel truncation, this will only effect $O(r)$ entries in each.  
We first subtract the effect of those $O(r)$ entries from the sum of square errors using the stored ratio $f_K(i \mid \Iab)$, in $O(r)$ time.  
Next for each of those entries, we update their numerators and denominators, again in total $O(r)$ time.  
Finally, we recompute their ratio and updates the sum of square errors in $O(r)$ time.  
\end{proof}

Computing $\SRA$ for the $\F_\KR$ is also faster.  Determining the numerators and denominators for each $i$ only sums over $O(r)$ terms, so this takes $O(nr)$ time in total.  Then computing their estimates and sum of squared errors for $\SRA$ only takes $O(n)$ additional time.  As result for a kernel truncated at $O(r)$, the spatial scan statistics $1-\frac{\textsf{Cost}_1}{\SRA}$ can be completed in $O(nwr)$ time.  We outline this full process in Algorithm \ref{alg:nwkr-scan}, with $\neigh_r(j) = \{i \in [n] \mid |i-j| \leq Cr\}$ as the truncated neighborhood.  

\begin{algorithm}
\caption{Truncated-NWKR-Scan($\bfx, w, r$)}
\label{alg:nwkr-scan}
\begin{algorithmic}
    \For{$i \in [n]$}
        \State Compute $N_r(i) = \displaystyle{\sum_{j \in [i-Cr,i+Cr]}} K_r(i,j) x[j]$, 
        $D_r(i) = \displaystyle{\sum_{j \in [i-Cr,i+Cr]}} K_r(i,j)$, 
        $\hat{x}_{\text{all}}(i) = \frac{N_r(i)}{D_r(i)}$
    \EndFor
    \State $\textsf{Cost}_1 \gets \SRA = \sum_{i=1}^n \bigl(x[i]-\hat{x}_{\text{all}}(i)\bigr)^2$

    \For{each starting index $a \in [n-1]$}
        \State Initialize $\I_{a,a-1}=\{\}$; \; $\Omega_{a,a-1} = [n]$ and $N_\text{out} = N_r$; \; $D_\text{out} = D_r$; \; $\hat x_{\text{out}} = \hat x_{\text{all}}$

        \For{$b \in [a,\dots,a+w]$}
            \State Add the new index:
            $
            \I_{a,b}=\I_{a,b-1}\cup\{b\},
            \qquad
            \Omega_{a,b}=\Omega_{a,b-1}\setminus\{b\}.
            $


            \State  $S^r_{a,b} = S_{a,b-1} - \sum_{i \in \Iab \cap \neigh_r(b)} (x[i]-\hat{x}_{\text{in}}(i))^2  - \sum_{i \in \Omega_{a,b} \cap \neigh_r(b)} (x[i]-\hat{x}_{\text{out}}(i))^2$
            
            \For{$i \in \neigh_r(b)$}
                \State  \textbf{if}($i \in \Iab$):  
                $N_{\text{in}}(i) \mathrel{+}= K_r(i,b) x[b]$; \;\;\;
                $D_{\text{in}}(i) \mathrel{+}= K_r(i,b)$; \;\;\;
                $\hat{x}_{\text{in}}(i)= \frac{N_{\text{in}}(i)}{D_{\text{in}}(i)}$

                \State \textbf{if}($i \notin \Iab$): $N_{\text{out}}(i) \mathrel{-}= K_r(i,b) x[b]$; \;\;
                $D_{\text{out}}(i) \mathrel{-}= K_r(i,b)$; \;\;
                $\hat{x}_{\text{out}}(i)= \frac{N_{\text{out}}(i)}{D_{\text{out}}(i)}$
            \EndFor

            \State $S_{a,b} = S^r_{a,b} + \sum_{i \in \Iab \cap \neigh_r(b)} (x[i]-\hat{x}_{\text{in}}(i))^2  + \sum_{i \in \Omega_{a,b} \cap \neigh_r(b)} (x[i]-\hat{x}_{\text{out}}(i))^2$

            \State \textbf{if} $S_{a,b}<\textsf{Cost}_1$ \textbf{then} $\textsf{Cost}_1 \gets S_{a,b}$
        \EndFor
    \EndFor

    \State \textbf{return} $1-\frac{\textsf{Cost}_1}{\SRA}$
\end{algorithmic}
\end{algorithm}


\section{Evaluation}
\label{sec:eval}
We next evaluate the efficiency and effectiveness of our methods at identifying interval anomalies in noisy smoothly varying signals; we evaluate both how well it localizes known anomalies and how well if flags anomalous signals from non-anomalous ones.  We use synthetic (for controlled experiments) and real world examples (frequency-domain radio telescope signals and solar monitoring time series, in Section \ref{app:solar}).  
We generate synthetic data from two baseline models: one with a polynomial model (of degree 2) and another with an AR(2) model with parameters $\phi_1 = 1.985; \phi_2 = -0.985056$ (derived via Vieta's formula with roots $r_1 = 0.993$, $r_2 = 0.992$; 
so the process is stationary, and slowly drifting).  For both we add iid normal noise $\mathcal{N}(0,\sigma^2)$ to simulate sensing variability about 5\% of the natural signal variation; see Figure \ref{fig:poly_ar2}.  The default signal length is $n=500$.  
Then for some of the generated signals we ``plant" an anomalous interval that deviates from the original signal in a continuous block, with signal to noise ratio (SNR) varying from $1\!\times\!\sigma$ to $5\!\times\!\sigma$, and interval width in 1\% to 20\% of the signal.  Additional details and plots are shown in Appendix \ref{sec:synthetic-data}.
\begin{figure}[htbp]
    \centering
    \begin{subfigure}[b]{0.47\textwidth}
        \centering
        \includegraphics[width=\linewidth]{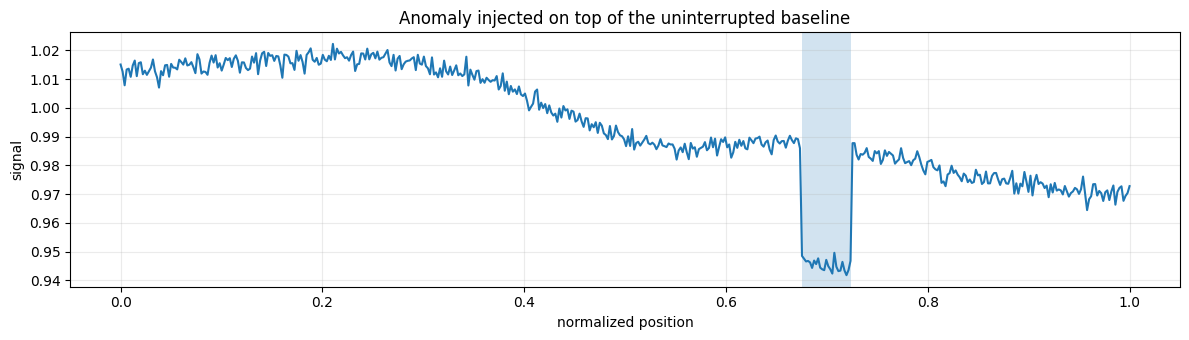}
        \caption{Polynomial signal}
        \label{fig:signal4}
    \end{subfigure}
    \hfill
    \begin{subfigure}[b]{0.47\textwidth}
        \centering
        \includegraphics[width=\linewidth]{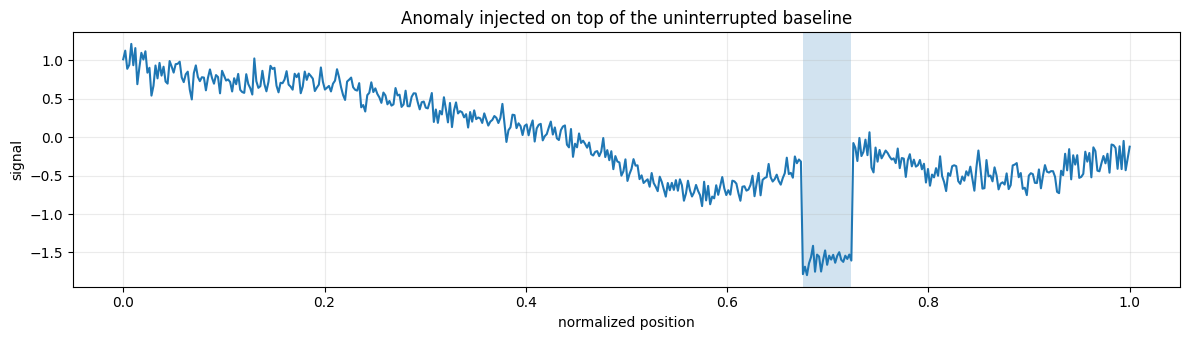}
        \caption{Ar(2) signal}
        \label{fig:signal10}
    \end{subfigure}

    \caption{Example synthetic signal. Anomaly width at $5\%$ and SNR=$2.5\!\times\!\sigma$.}
    \label{fig:poly_ar2}
\end{figure}

In addition to comparing to scan statistics for the $\F_0$ (representing models of \cite{huang2007spatial,agarwal2006hunting}), we consider four common methods for change point detection 
 (Ruptures KernelCPD~\cite{truong2020selective};  
 Bayesian Online Changepoint Detection (BOCPD)~\citep{adams2007bayesian};
 Gaussian LRT~\citep{siegmund1995using}; and
 Collective and Point Anomaly detection (CAPA)~\citep{fisch2022capa}) 
 adapted to find an interval between the two most promising changepoints.  
 Second, we consider STUMPY \citep{law2019stumpy} a method for discord~\citep{yeh2016matrixprofile} interval mining (STUMPY \citep{law2019stumpy}).  
 Third we consider three deep unsupervised detectors (USAD~\citep{audibert2020usad}, TranAD~\citep{tuli2022tranad}, M2N2~\cite{kim2024model}) that build a model of good data, and then identify points which deviate from it.  We adapt all three to find interval anomalies by finding large contiguous intervals of excess score.  
 Finally, we consider two variants which use NWKR based on our own code:  
   NWKR-CPD fits a single NWKR model, subtracts it, and feeds the residual to ruptures, and 
   NWKR-FL, uses our code to find a single change point by fixing the first endpoint to be the left boundary and only searching over the right boundary.  
 See Appendix \ref{sec:related_work} for more details.

\subsection{Recovering and Localizing Planted Anomalies}
\label{sec:recover}
We first demonstrate the effectiveness of our methods in recovering planted anomalies on synthetic data, the polynomial data results are shown in Figure \ref{fig:poly-localization} and the AR(2) data results are shown in Figure \ref{fig:ar2-localization}.
The first experiment in Figure \ref{fig:poly-localization}(left) shows the effect of changing the signal-to-noise ratio of the planted anomalies from $1\times$ to $5\times$  the normal noise in the original signals.  Using $N=1000$ generated signals with randomly planted anomalies of length $50$ channels, we report the average (and show std.dev. bars) \emph{localization score} (it is the geometric mean of recall and precision). 
Then Figure \ref{fig:poly-localization}(right) shows as we fix the anomaly depth as $2.5\times$ the background noise, and the vary width of the planted anomaly from $5$ to $50$ channels. 

\begin{figure}[ht]
  \includegraphics[width=0.49\linewidth]{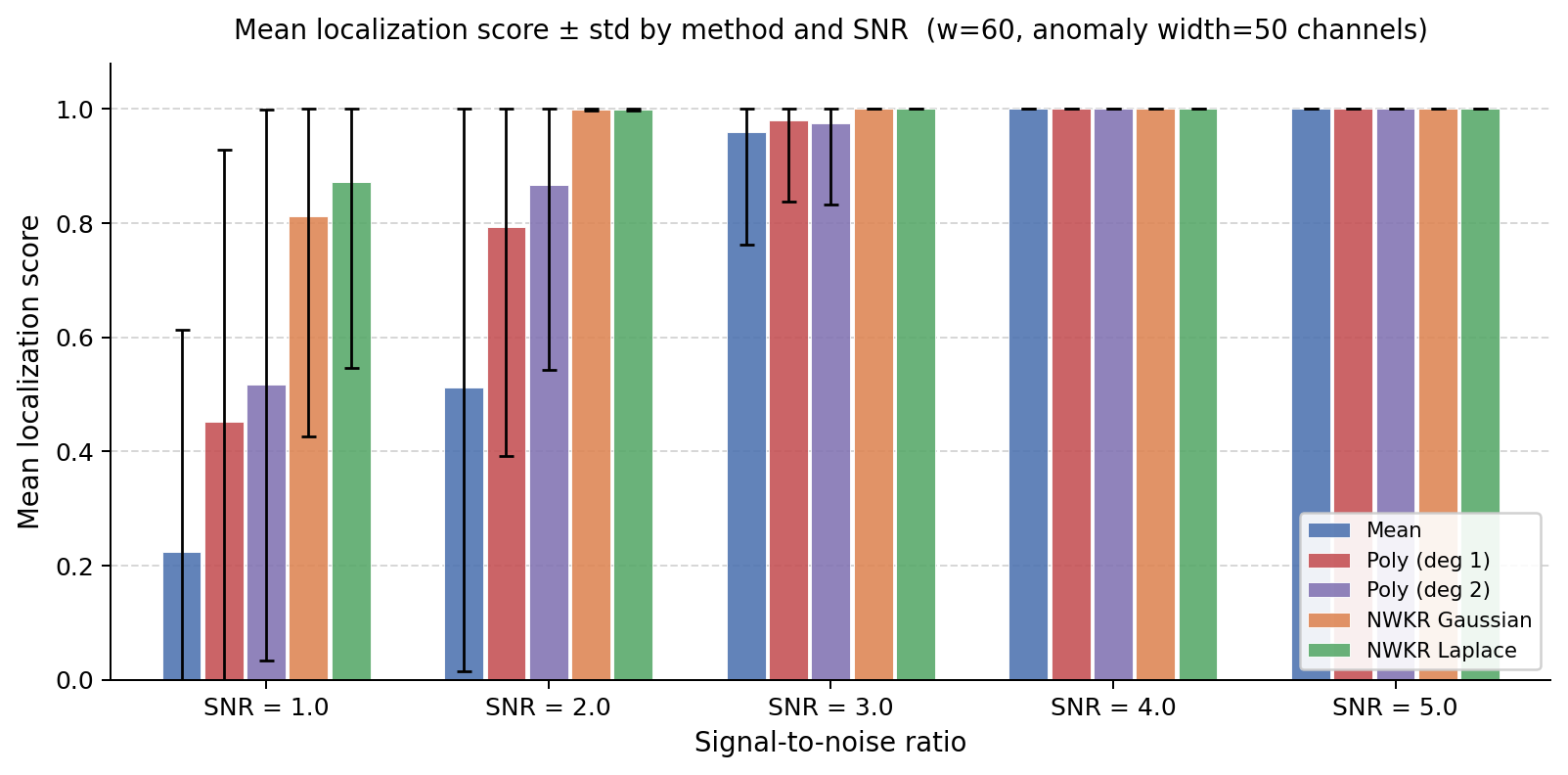} \includegraphics[width=0.49\linewidth]{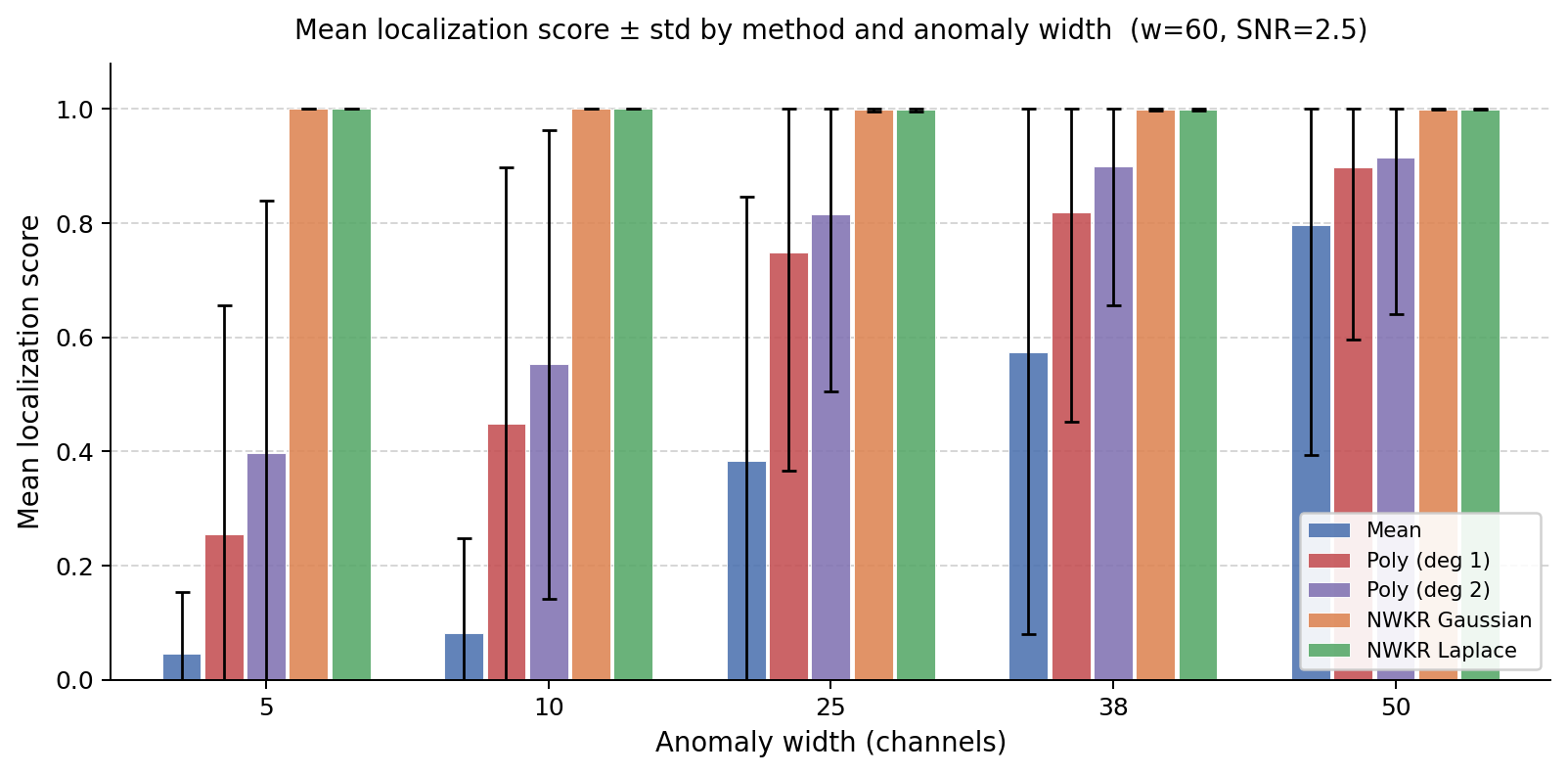}
  \vspace{-2mm}
  \caption{Mean localization score comparison on quadratic trend data by varying SNR or anomaly width.}
  \label{fig:poly-localization}
\end{figure}

\begin{figure}[ht]
  \includegraphics[width=0.49\linewidth]{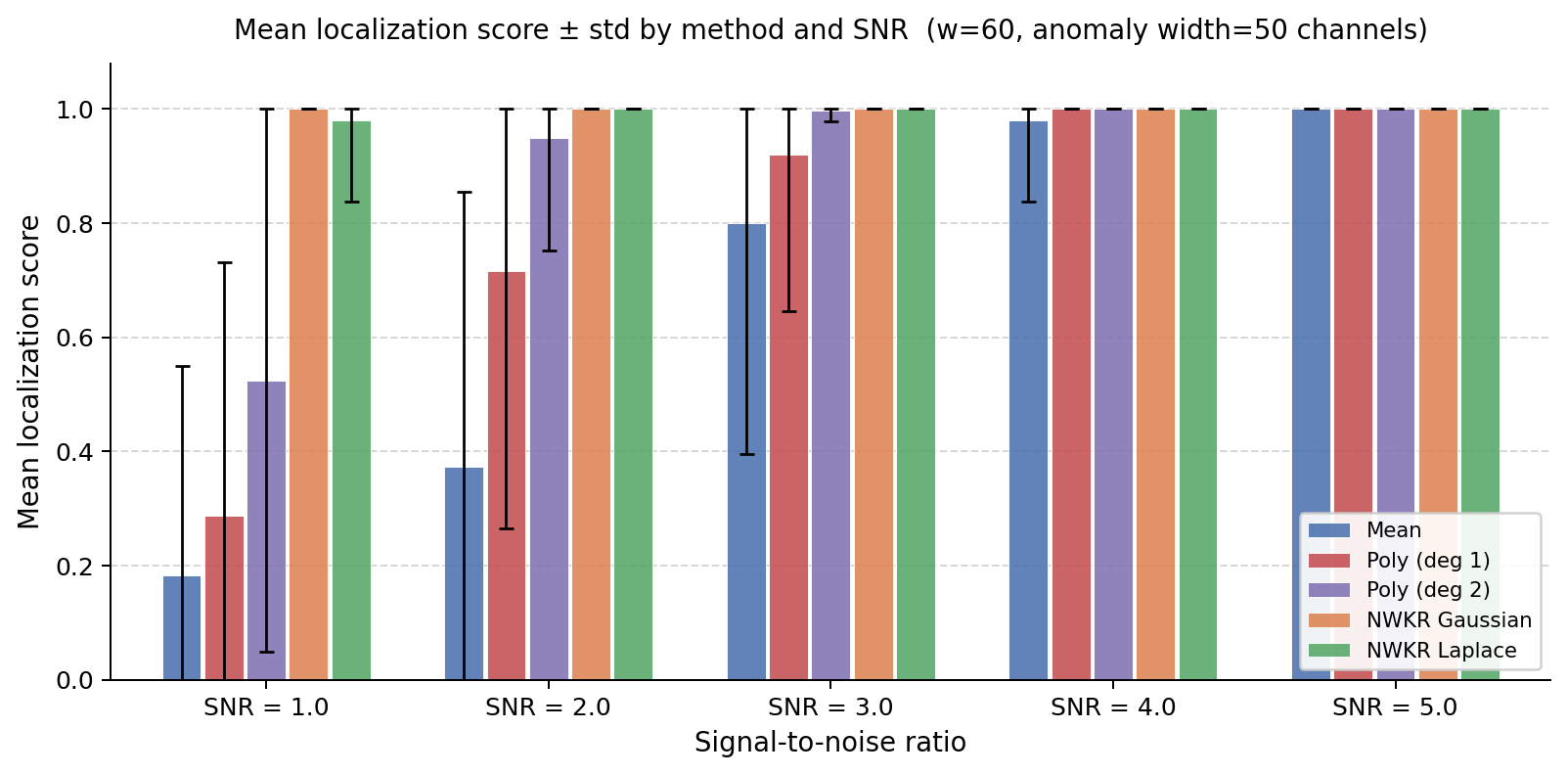} \includegraphics[width=0.49\linewidth]{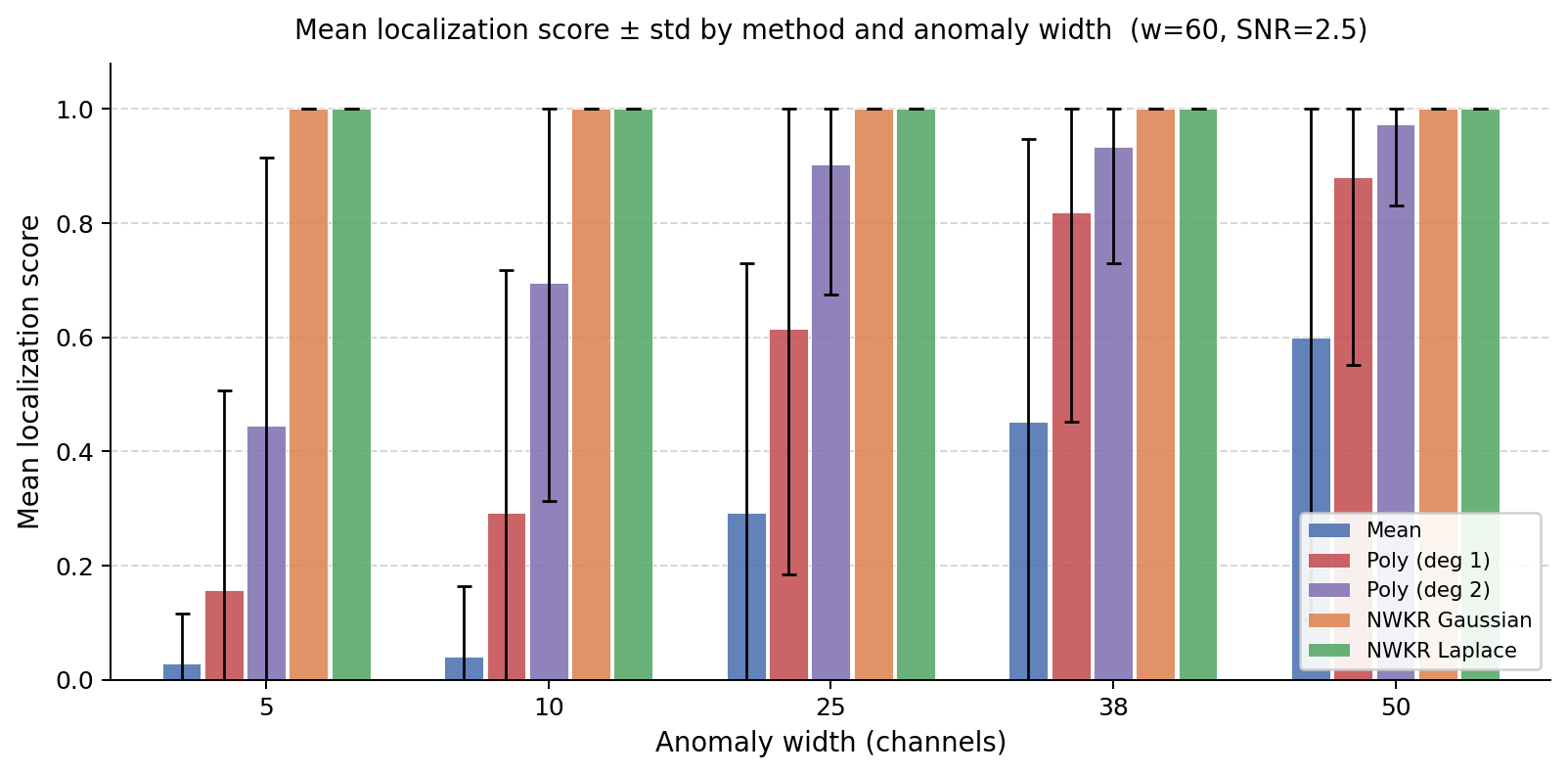}
  \vspace{-2mm}
  \caption{Mean localization score comparison on AR(2) data by varying SNR or anomaly width.}
  \label{fig:ar2-localization}
\end{figure}

\begin{figure}[ht]
  \includegraphics[width=\linewidth]{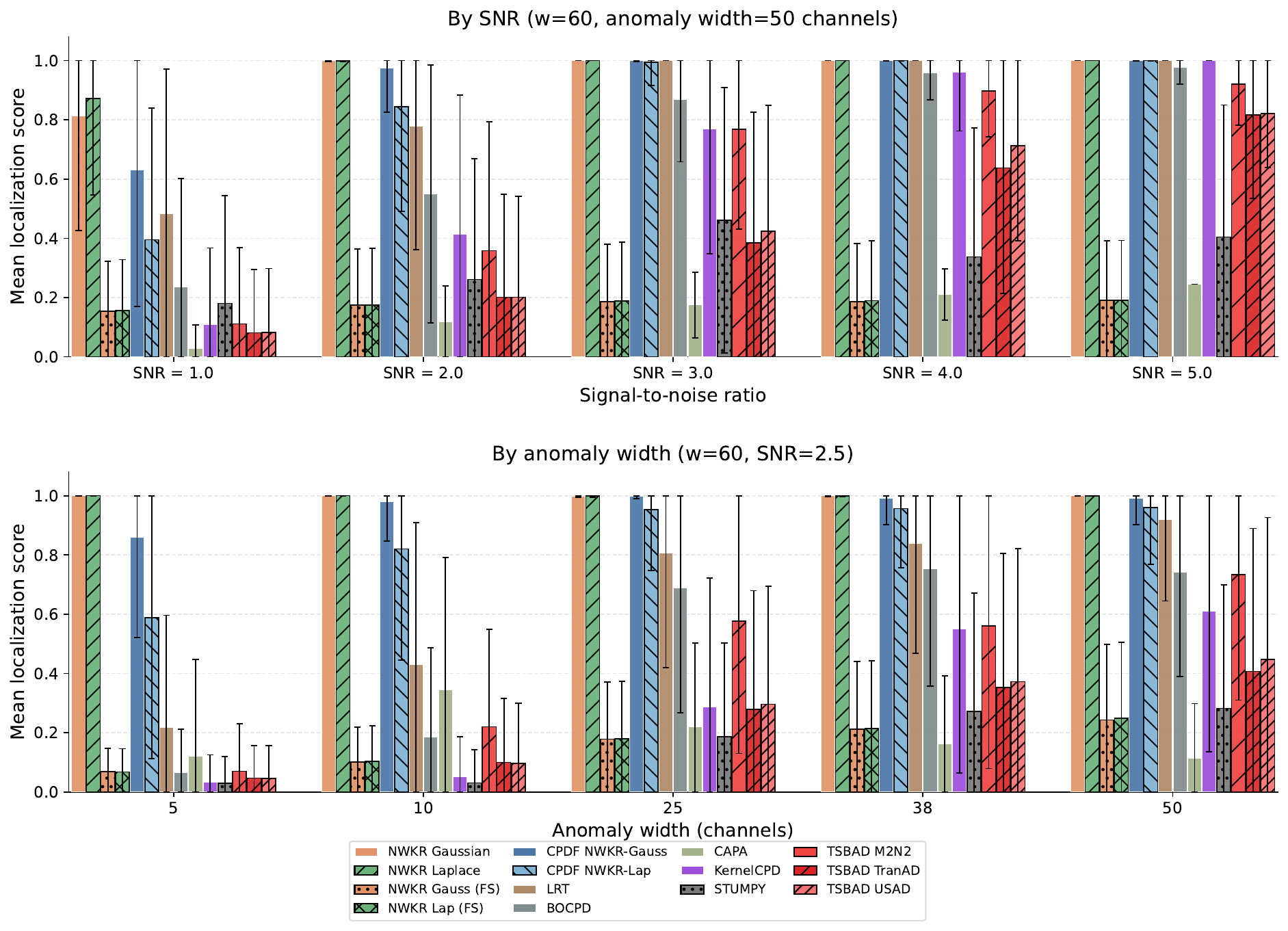}
\vspace{-2mm}
  \caption{Comparison of NWKR with other baselines on quadratic trend data in a varying SNR (left) and anomaly width (right).}
  \label{fig:methods-local}
\end{figure}

Comparing the performance of the different function families ($\F_0$, $\F_1$, $\F_2$, $\F_\KR$ with Gaussian and  Laplace), we see that with large SNR (at $5\times\!\sigma$) all methods have the maximum localization score of $1$.  But \textbf{as SNR decreases, the NWKR methods retains near-perfect localization}, while the mean ($\F_0$) or polynomial ($\F_1, \F_2$) lose the ability to localize the anomalies.  Similarly, the $\F_\KR$ models can localize perfectly at all widths, while parametric models (especially $\F_2$) do ok at large width of $50$, but do poorly at smaller widths.  This shows the $\F_\KR$ family is most robust at localizing anomalies.  

We also run the same experiments compared against the change point baselines and discord mining approaches in Figure \ref{fig:methods-local}.  Similarly we find that they perform poorly with small SNR and width, with much worse localization ability compared to our NWKR models.  However, as SNR and anomalous interval width increases they perform better, with the LRT method (and to less extent KernelCPD and BOCDP) approaching or matching performance of NWKR models at very high SNR or width. CPDF variants and TSBAD M2N2 also benefit from the easier regimes, but their gains are less consistent and they remain less competitive overall than the NWKR family.

\begin{figure}
  \includegraphics[width=0.33\linewidth]{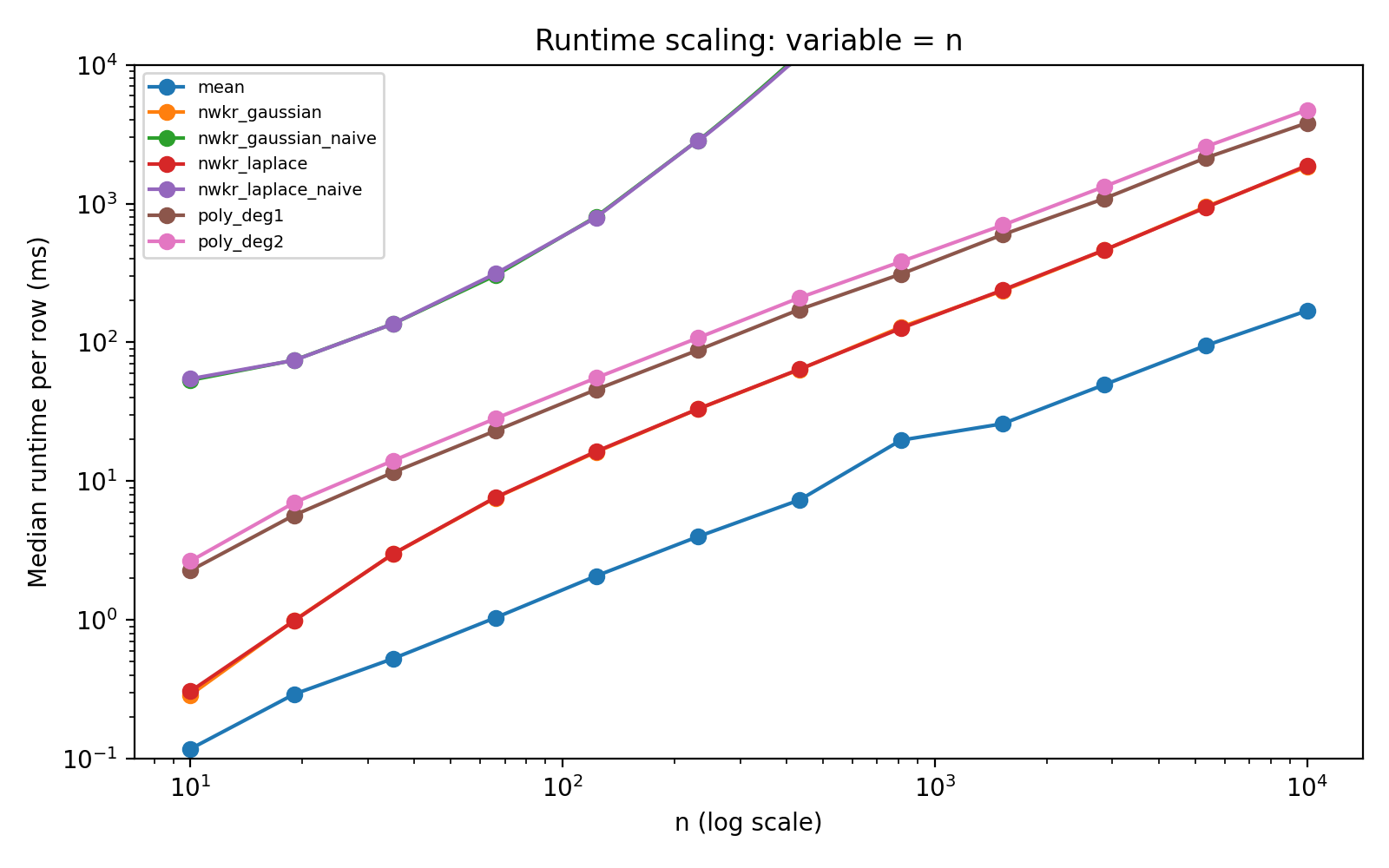}
  \includegraphics[width=0.33\linewidth]{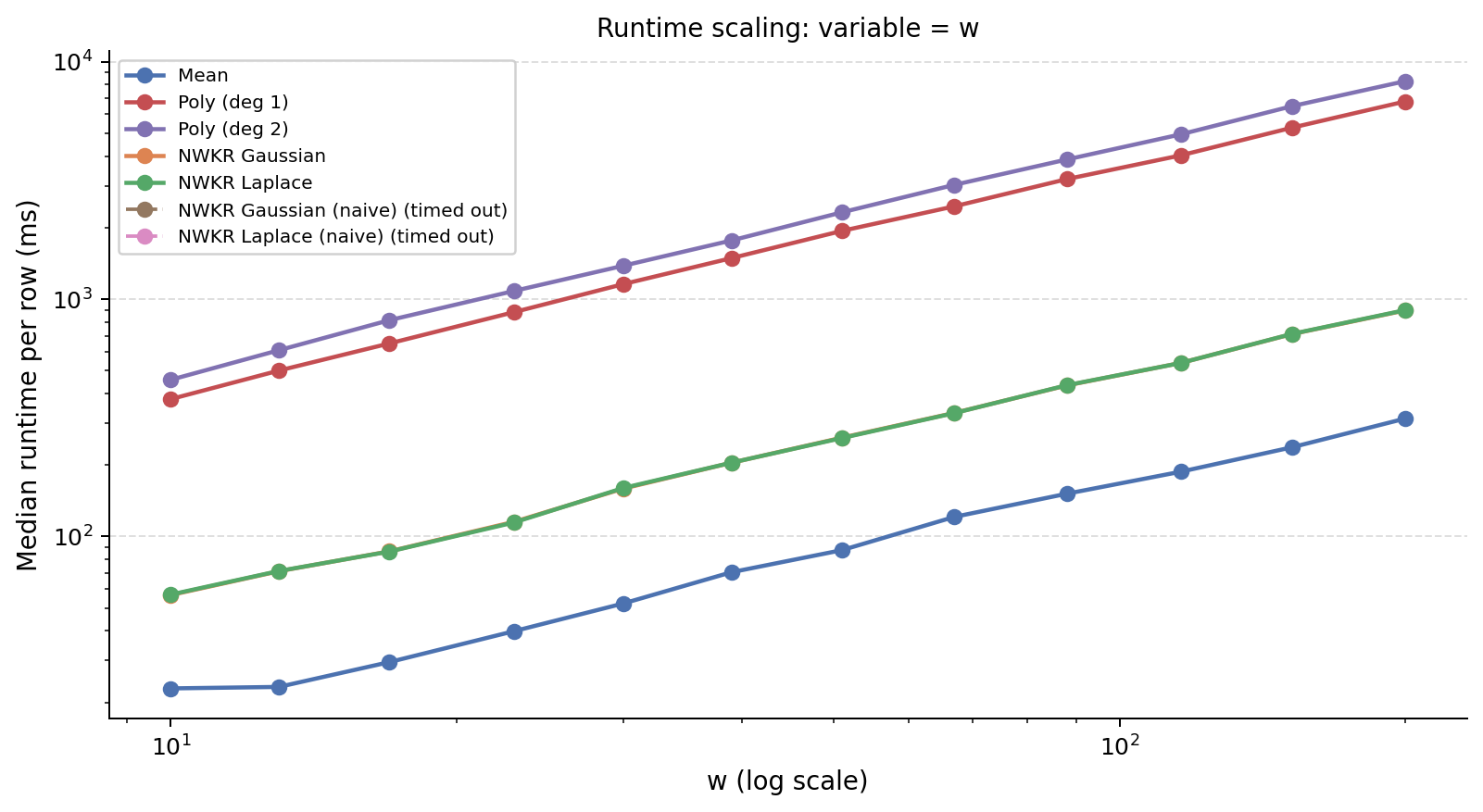}
  \includegraphics[width=0.33\linewidth]{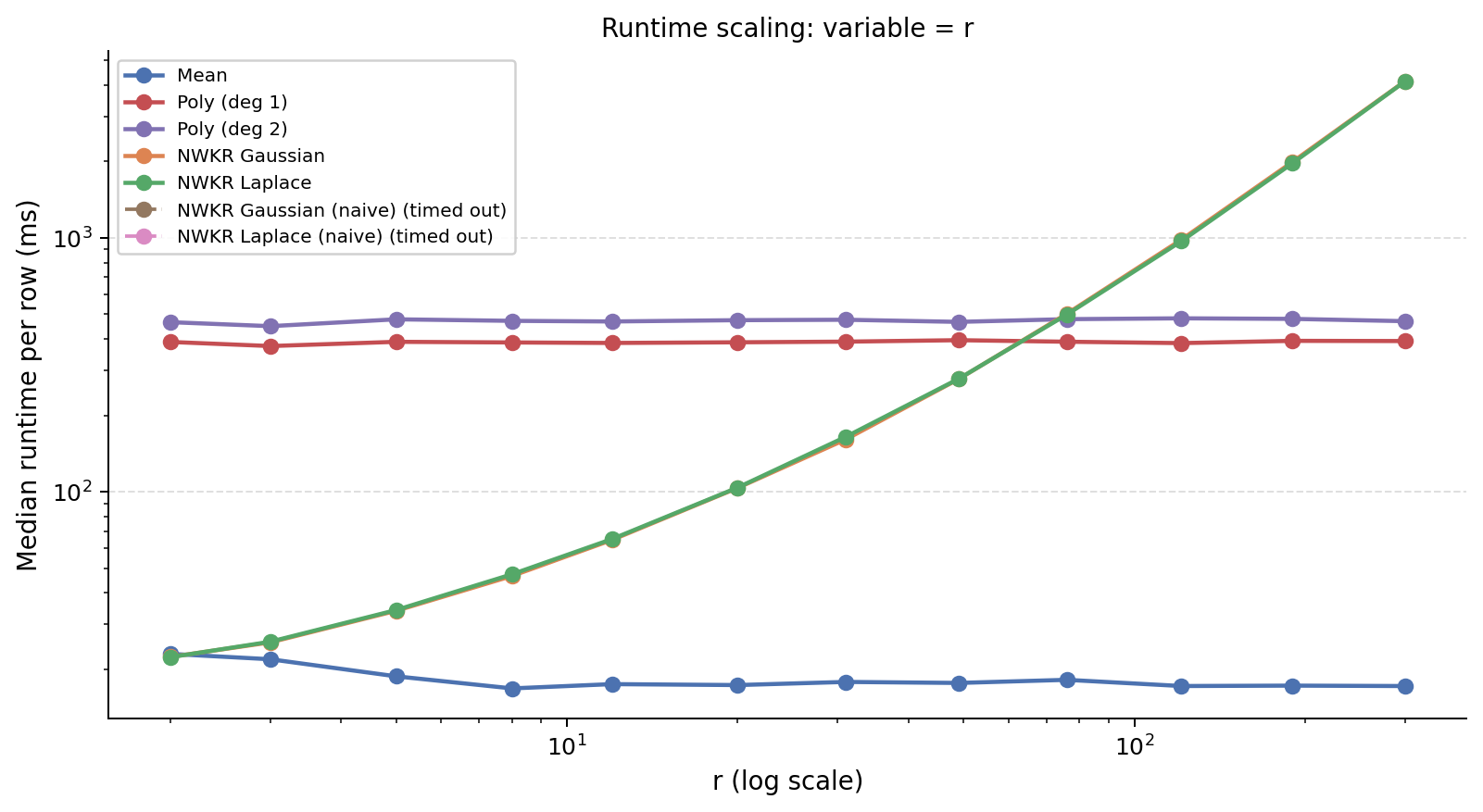}
  \caption{Log-log plots for runtime scaling on signal as a function of signal length $n$ (left),  $w$ (middle), and truncation range $r$ (right).  Otherwise parameters fixed ($n=10,000$, $w=200$, $r=300$)}
  \label{fig:runtime}
\end{figure}


\clearpage

\begin{wraptable}{r}{0.4\textwidth}
\vspace{-9mm}
\centering
\setlength{\tabcolsep}{4pt}
\renewcommand{\arraystretch}{0.92}
\caption{Median runtime at $n=1000$.}
\label{tab:runtimes}
\vspace{-2mm}
\begin{tabular}{@{}lr@{}}
\toprule
\textbf{Method} & \textbf{time (ms)} \\
\midrule
$\mathcal{F}_0$                  &      17 \\
\rowcolor{gray!15}
$\mathcal{F}_\text{NWKR}$ Laplace          &      69 \\
\rowcolor{gray!15}
$\mathcal{F}_\text{NWKR}$ Gaussian         &      70 \\
$\mathcal{F}_1$ Poly (deg 1)          &     392 \\
$\mathcal{F}_2$ Poly (deg 2)          &     472 \\
\midrule
$\mathcal{F}_\text{NWKR}$ Gaussian {\tiny (naive)} & 759{,}396 \\
$\mathcal{F}_\text{NWKR}$ Laplace {\tiny (naive)}  & 762{,}321 \\
$\mathcal{F}_\text{KRR}$ Gaussian & 708{,}696 \\
$\mathcal{F}_\text{KRR}$ Laplace & 709{,}193 \\
\midrule
CAPA                  &       3 \\
KernelCPD             &  5 \\
LRT                   &      37 \\
BOCPD                 &   2{,}785 \\
STUMPY                &   2{,}776 \\
\midrule
USAD                  &     173 \\
TranAD                &     210 \\
M2N2                  &     297 \\
\midrule
NWKR-FS Gaussian &       2 \\
NWKR-FS Laplace &       2 \\
NWKR-CPD Gaussian     &       5 \\
NWKR-CPD Laplace      &       5 \\
\bottomrule
\end{tabular}
\vspace{-0.8\baselineskip}
\end{wraptable}

\subsection{Runtime Scaling}
\label{sec:runtime-exp}

We confirm the efficiency of our regression scan algorithms in Table \ref{tab:runtimes}, and also plot runtime scaling plots Figure \ref{fig:runtime}.  We observe that for our (non-naive) methods for $\F_\KR, \F_d$ that there is linear scaling with $n$ and with $w$, and that $\F_\KR$ has linear scaling with $r$.  Moreover, while the simple constant rate mode $\F_0$ is the most efficient, our optimized $\F_\KR$ is also very efficient (about $70$ ms for $n=1000$), and almost an order of magnitude faster than the $\F_d$ models which need an expensive matrix inverse; even though the inverse is on a small matrix, the repeated call adds up.  Moreover, the naive implementations of $\F_\KR$ (before our optimizations in Section \ref{sec:faster}) become intractable for large values of $n$, taking almost $1000\times$ longer than ours on a signal of length $n=1000$.  
Moreover, some baseline methods (like BOCPD, STUMPY) are at least an order of magnitude slower than our approach.  While the simpler LRT, CAPA, and KernelCDP are faster than ours, as we see in Figure \ref{fig:methods-local}, they do not model the anomalies as well.  
We also run against some modern unsupervised learning methods USAD, TranAD, and M2N2 which are slower than ours.  
Finally, we consider the NWKR variants: the NWKR-FS (the fixed start variant) and NWKR-CPD (which subtracts the fixed NWKR model and then runes linear CPD); both are faster than our methods, but do not perform as well.

\section{Application: Radio Telescope Spectra Calibration Anomalies}
\label{sec:astro}





We have collected a dataset of $N = 38{,}881$ bandpass calibration solutions from the QA2 (quality assurance, phase 2) step that is part of the processing that converts raw interferometric radio telescope signals collected at ALMA~\citep{yus2020towards,Nakos20} into hyperspectral data cubes.  These data cubes are one of the primary objects of study in modern astronomy; they are an image with hundreds or thousands of frequency values per pixel.  Constructing these data cubes from the raw measurements is a complex process~\citep{alma_pipeline_heuristics_paper}.  One relatively common, instrumental problem is platforming~\citep{alma_pipeline_team_2025_users_guide} (see Figure \ref{fig:AR2-w5}) where a misalignment between parts of the measurement causes an interval drop in the raw frequency readings coming off of a pair of radio telescopes measured by the correlator \citep{escoffier2007alma}.  If this issue is not detected, the ultimately constructed data cubes can end up with strange artifacts that interfere with scientific inference.  
This platforming effect is most evident at QA2 during bandpass calibration review:  $n$-dimensional arrays of amplitude values at different frequencies, used to transform raw signals into a unbiased representation.  The expected input is not ``flat" as it varies smoothly with frequency in a way that corrects for telescope specific variation, for which it is used to correct.  Each value has noise, which can be modeled well as iid Normal, with variance depending on the measurement source and sensing conditions.  

\begin{figure}[ht]
    \includegraphics[width=0.95\textwidth]{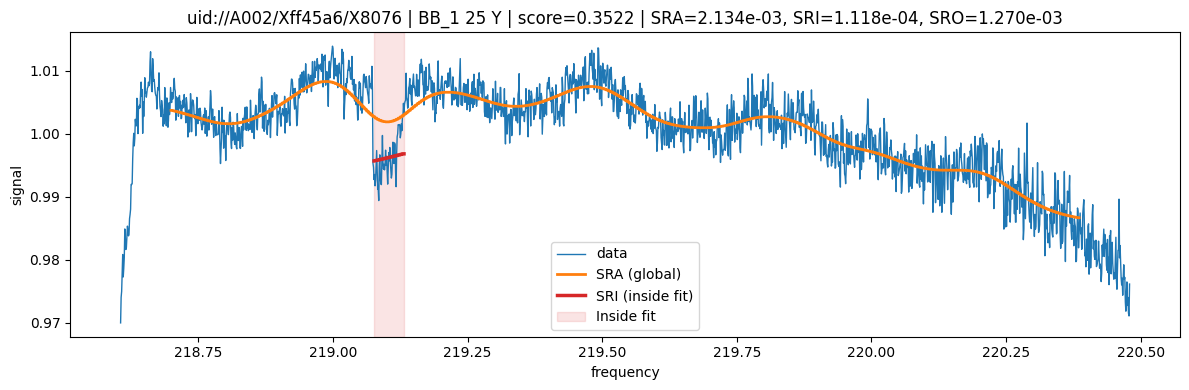}

    \caption{{ALMA calibration signal with platforming.}}
    \label{fig:AR2-w5}
    \vspace{-10pt}
\end{figure}

Until recently, platforming anomalies were identified and flagged by a person, known as a \emph{data reducer}, who looked at many (thousands of) frequency-amplitude plots try to spot pernicious issues that would be likely to cause reconstruction error.  While only a small fraction of flags were false-positives, the false-negative rate is quite high at about $50\%$.  This means about half of the platforming anomalies were missed, and can lead to corrupted data cubes.  
With new higher-throughput telescopes coming online in the new future, this failure rate is not acceptable, and moreover the process of having a human inspect each calibration is not scalable.

\paragraph{ALMA pipeline heuristic.}  
In October 2025 (Cycle 12), the ALMA pipeline installed a platforming detector~\citep{alma_pipeline_team_2025_users_guide}.  It is a hand-tuned set of threshold-based heuristics applied to bandpass calibration solutions, restricted to baseline-correlator FDM data.  It takes advantage that platforming anomalies are likely to occur at certainly locations, known as \emph{subbands}.  
Each spectral window is partitioned into effective $62.5\,\mathrm{MHz}$ subbands (implemented as $62.5\,\mathrm{MHz}\times 15/16$), flagged channels and known WVR local-oscillator leakage channels are masked, and subbands whose centers lie inside fitted atmospheric absorption features or where modeled transmission falls below $0.3$ are excluded using a Lorentzian fit to an atmospheric transmission profile. 
On the remaining data, the code defines a local noise scale as 
\[
\min(\mathrm{std}(x_{i+4}-x_i), \mathrm{std}(x_i)),
\]
computes per-subband phase/amplitude RMS and mean or median, and then applies five threshold tests: 
\begin{enumerate}
    \item anomalously high phase RMS:
    greater than $5\times$ the median subband phase RMS, excluding the largest
    subband, and greater than $10^\circ$, with a Sobel-gradient precheck;

    \item anomalously high amplitude RMS:
    greater than $5\times$ the corresponding median amplitude RMS, again with a
    Sobel precheck;

    \item anomalous phase offsets between adjacent subbands:
    jump greater than $3\sigma$ for interior subbands or greater than $6\sigma$
    at the band edges, plus boundary-step constraints and an absolute jump
    greater than $5^\circ$;

    \item anomalous amplitude offsets between adjacent subbands:
    analogous jump tests with $5\sigma$ boundary-step thresholds for interior
    subbands;

    \item amplitude spikes near subband boundaries:
    detected by comparing a target window of width $\approx 0.3$ subband to
    neighboring windows and requiring both greater than $6\sigma$ deviation and
    greater than $10\%$ relative excursion.
\end{enumerate}
A spectral window is marked affected if any antenna/polarization triggers one or more of these rules, with failures labeled as phase, amplitude, or both. While this combination of metrics is fully automated, and achieves slightly improved false positive rate, it still has a similar false positive rate as the data reducers.  This tradeoff is still undesirable for the detection of rare but consequential anomalies, and motivates the development of richer feature representations within a more principled detection framework.

\paragraph{Additional challenges.} 
Another challenge is that the raw signals can have other anomalous issues.  
The boundaries of the sensed frequency ranges sometimes include unreliable regions right near the edges of the signals; see left side in Figure \ref{fig:AR2-w5}.  So we pre-filter a small buffer of rows at the beginning and end of each signal to not be fooled by these measurement issues.  

Moreover, there can be other known (and independently modeled) challenges such as some sensing conditions have interference from the Earth's atmosphere which cause other absorption features in the amplitude values.  These are somewhat understood and can be inferred to some degree at the time of sensing.  While this atmospheric interference is an accepted and useful part of calibrating visibility signals ahead of the data cube reconstruction, it visually appears anomalous (somewhat similar to platforming, but more pointy), and can occur in conjunction with platforming issues.   In general, observational schedules for the telescopes try to avoid conditions that will cause this interference, but it is sometimes unavoidable.  To make sure we isolate the core issue of detecting the interval anomalies in platforming, and not getting into the nuance of how finely we model these rare atmospheric interference cases, in this work we omit any signal that has this sort of interference.  

Finally, anything used must be incorporated into a very high-throughput process~\citep{alma_pipeline_heuristics_paper} and cannot take time significantly longer than reading the data.

\paragraph{The Large Cleaned ALMA dataset.}  
Ultimately, our central labeled dataset was constructed from ALMA QA2 calibration tables. Each row $\bfx$ corresponds to a single polarization of a single spectral window observed by a single antenna. We then filter out the signals with atmospheric interference or fewer than $n=128$ channels, and trim a buffer of 5\% of the spectrum length from each end of each signal.
Ultimately, our domain experts identified a set of $231$ confirmed cases of platforming among all of this refined, large set of signals.  Rows were assigned a binary label: positive if there is an anomaly present, and negative otherwise. 
  The result is a set of $N = 38{,}881$ rows, of which $231$ are positive ($0.6\%$), and $38{,}650$ are negative.

We have released this:
\begin{itemize}
    \item data set: \url{https://github.com/BeardyMan37/RegressionScanStats/blob/main/dataset/labelled_dataset.parquet}
    \item code: \url{https://github.com/BeardyMan37/RegressionScanStats/}
    \item smaller sampled data set used in some experiments: \url{https://github.com/BeardyMan37/RegressionScanStats/blob/main/dataset/sampled_labelled_dataset.parquet}
\end{itemize}

\begin{figure}[b]
\vspace{-2mm}
\includegraphics[width=\linewidth]{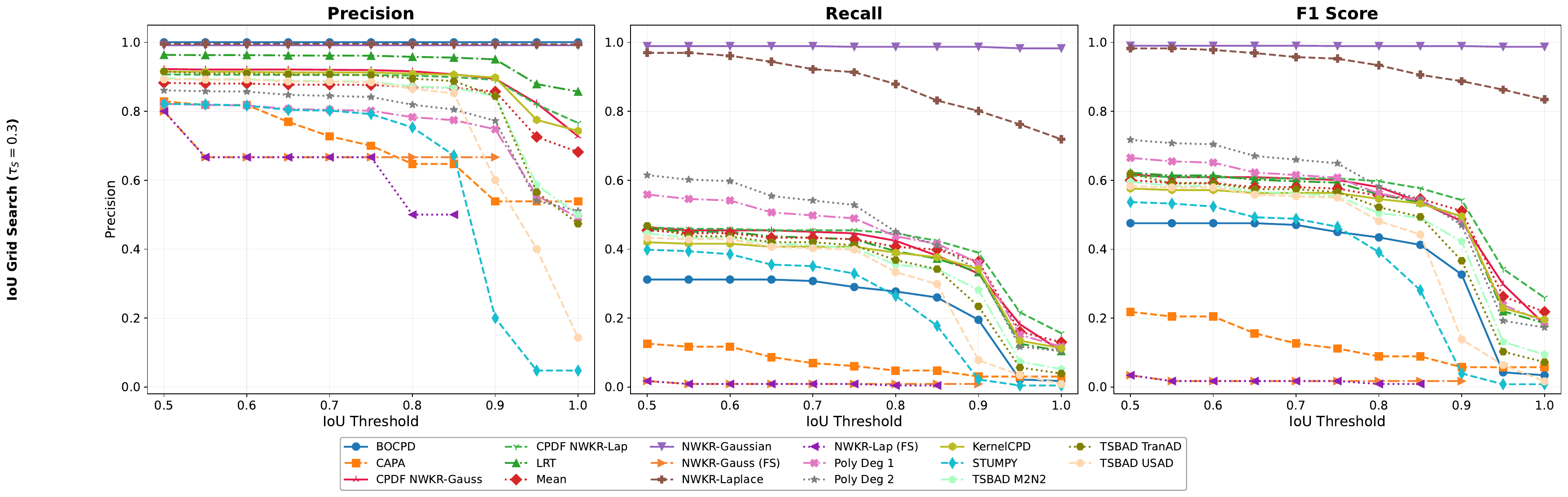}
  \vspace{-6mm}
  \caption{Performance analysis of different methods on variable IOU settings on ALMA dataset}
  \label{fig:qa2_IoU-performance-main}
\end{figure}

\subsection{Precision and Recall}

In Figure \ref{fig:qa2_IoU-performance-main} we show the performance in identifying the correct interval among the $231$ platforming examples collected.  We measure IoU (intersection-over-union, aka the Jaccard similarity), and show precision, recall, and F1 score as we vary the IoU threshold $\tau_I$.  We observe that among the $\F_d$ and $\F_\KR$ models, the NWKR methods significantly outperform the mean $\F_0$, the polynomial $\F_d$ methods, and all other baselines.  At an IoU threshold of $0.75$, then F1 score for $\F_\KR$ Gaussian is above $0.95$, whereas it is below $0.72$ for all other approaches, with some significantly worse. The advantage is consistent across all thresholds: as $\tau_I$ increases and the localization requirement becomes stricter, the NWKR methods degrade more gracefully than the baselines, reflecting their ability to precisely identify the interval boundaries rather than merely detecting the presence of an anomaly.


\begin{figure}[htbp]
\vspace{-2mm}
\includegraphics[width=\linewidth]{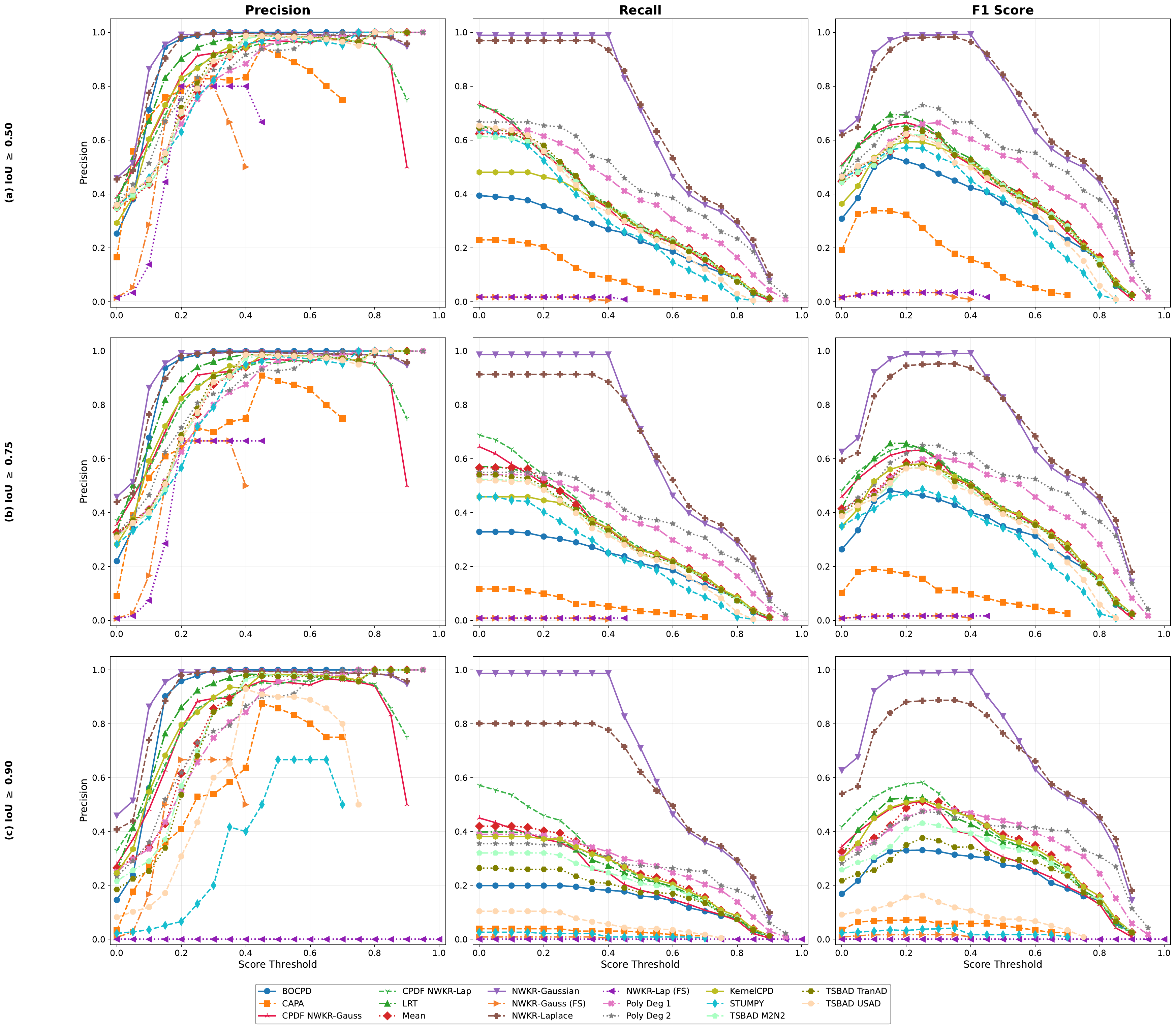}
  \vspace{-6mm}
  \caption{Performance analysis of different methods on variable score settings on ALMA dataset}
  \label{fig:qa2_score-performance-main}
\end{figure}

Then in Figures~\ref{fig:qa2_score-performance-main} (a), (b), and (c) we show the results of filtering on two criteria: an IoU threshold $\tau_I$ and also a score threshold.  In each figure we fix $\tau_I$ at $0.5$, $0.75$, and $0.9$ respectively and show precision, recall, and F1 as a function of the score threshold $\tau_S$.  To predict an anomaly correctly, a method needs to succeed in both the score $\geq \tau_S$ and the Iou $\geq \tau_I$.   Across all three settings, $\mathcal{F}_\text{KR}$ Gaussian achieves the highest peak F1 and maintains it over a wide range of $\tau_S$ values, demonstrating robustness to the choice of score threshold. At the strictest localisation requirement ($\tau_I = 0.9$, Figure~\ref{fig:qa2_score-performance-main} (c), most baselines collapse to near-zero F1 while the NWKR methods retain meaningful performance. 
We ablate the choice of $\tau_I$ further in Appendix~\ref{app:exp-astro}.

\subsection{Score-based Filtering}  
We also consider filtering among \emph{all} signals, based on a normalized likelihood scan statistic $\Phi(\bfx)$ (eq \ref{eq:scan-score}) computed for each signal $\bfx$.  
For non-$F_d$, $\F_\KR$ baselines which we configure to find an interval $\hat I$, we compute $S(\hat I)$ using the prior art $\F_0$ model.  We use a scan statistic threshold $\tau_S = 0.3$ (results are stable in $\tau_S \in [0.25, 0.4]$) to mark as anomalous, otherwise mark not anomalous.  

\begin{table}[htbp]
\centering
\caption{Performance metrics on the large ALMA dataset ($N = 38{,}881$) at $\tau_S = 0.3$}
\label{tab:full_alma_0.3}
\begin{tabular}{lrrrrrrrr}
\toprule
Method & TP & FP & TN & FN & Accuracy & Precision & Recall & F1 \\
\midrule
$\mathcal{F}_0$ Mean        & 118 & 2{,}561 & 36{,}089 & 113 & 0.9312 & 0.0440 & 0.5108 & 0.0811 \\
$\mathcal{F}_1$ Poly (deg 1) & 154 & 3{,}980 & 34{,}670 &  77 & 0.8957 & 0.0373 & 0.6667 & 0.0706 \\
$\mathcal{F}_2$ Poly (deg 2) & 174 & 2{,}932 & 35{,}718 &  57 & 0.9231 & 0.0560 & 0.7532 & 0.1043 \\
\rowcolor{gray!15}
$\mathcal{F}_\text{KR}$ Gaussian & 231 & 191 & 38{,}459 & 0 & 0.9951 & 0.5474 & 1.0000 & 0.7075 \\
\rowcolor{gray!15}
$\mathcal{F}_\text{KR}$ Laplace  & 231 & 277 & 38{,}373 & 0 & 0.9929 & 0.4547 & 1.0000 & 0.6252 \\
\midrule
CAPA      &  31 &    554 & 38{,}096 & 200 & 0.9806 & 0.0530 & 0.1342 & 0.0760 \\
LRT       & 111 &    718 & 37{,}932 & 120 & 0.9784 & 0.1339 & 0.4805 & 0.2094 \\
\midrule
FS NWKR Gaussian & 164 &  69 & 38{,}581 & 67 & 0.9965 & 0.7039 & 0.7100 & 0.7069 \\
FS NWKR Laplace & 162 & 112 & 38{,}538 & 69 & 0.9953 & 0.5912 & 0.7013 & 0.6416 \\
CPD NWKR Gaussian & 113 &    838 & 37{,}812 & 118 & 0.9754 & 0.1188 & 0.4892 & 0.1912 \\
CPD NWKR Laplace  & 113 & 1{,}264 & 37{,}386 & 118 & 0.9645 & 0.0821 & 0.4892 & 0.1405 \\
\midrule
USAD   & 104 & 2{,}318 & 36{,}332 & 127 & 0.9371 & 0.0429 & 0.4502 & 0.0784 \\
TranAD & 110 & 2{,}293 & 36{,}357 & 121 & 0.9379 & 0.0458 & 0.4762 & 0.0835 \\
M2N2   & 112 & 2{,}401 & 36{,}249 & 119 & 0.9352 & 0.0446 & 0.4848 & 0.0816 \\
\bottomrule
\end{tabular}
\end{table}

For our methods, and the efficient and best alternatives methods, we show results in Table \ref{tab:full_alma_0.3}, using $\tau_S = 0.3$.  
Both $\F_\KR$ models have recall of $1.0$ and precision above $0.45$ (Gaussian above $0.54$) with almost no other method hits $0.53$ recall and $0.13$ precision. The expectation is our fixed-side variants of NWKR Gaussian and Laplace that achieve higher precision ($0.70$ and $0.59$) but at the cost of a substantially elevated false negative rate ($0.29$ and $0.30$), missing roughly a third of anomalies whose onset does not align with the start of the search region.
The improved false negative rate of our proposed methods is extremely important, since each missed case of platforming is likely to cause corruption in the resulting hyperspectral data cube, leading to a possibly unusable or deceiving scientific product.  
These limited false positives are also not very devastating towards corruption of the data cube products, as they mean slightly fewer observations are averaged over in the reconstruction.

In Table \ref{tab:ALMA-compare} we report these as false-negative (FN) and false-positive (FP) \emph{rates}, and compare to the accuracy of the two in-production alternatives.  The first option is from data reducer (these are people) who manually review the calibrations, and the second is the recently introduced heuristic.  
The false positive rates of these methods ($0.0053$ and $0.0013$) are comparable to ours ($0.0049$), with the heuristic better.  However, those most meaningful and time saving aspect is the false negative rate, where our method obtains $0$, while these alternatives are much higher at $0.49$ and $0.58$; this means they fail to identify about half of the platforming anomalies.  
In other words, with existing methods, about half of the platforming anomalies slip through this flag, where they can cause artifacts in the science products, whereas our proposed approach virtually eliminates this concern.



\begin{table}[h]
    \centering
    \caption{Comparison with estimated existing ALMA False Positive and False Negative rates}
    \begin{tabular}{lccccc}
    \toprule
        Method & Precision & Recall & F1 & FN rate & FP rate \\
        \midrule
        Data reducer (a person) & 0.3665 & 0.5108 & 0.4268 & 0.4892 & 0.0053 \\
        Existing ALMA heuristic & 0.7018 & 0.4233 & 0.5281 & 0.5767 & 0.0013 \\
        \rowcolor{gray!15}
        Our $\F_\KR$ Gaussian & 0.5474 & 1.0000 & 0.7075 & 0 & 0.0049 \\
    \bottomrule
    \end{tabular}
    \label{tab:ALMA-compare}
\end{table}

In Appendix \ref{app:exp-astro} we show effects of jointly filtering over IoU, ablate parameter choices, and discuss both some caveats and significance of this improvement on the ALMA application. A richer study integrating these aspects, further calibrating scores, and looking towards integrating this new methodology in future iterations of the ALMA pipeline is important future work.
In summary, these $\F_\KR$ models are a perfect for the ALMA challenge, and a general efficient new method.

\section{Discussion}
\label{sec:discussion}

We present a new scan statistic model for identifying and scoring interval anomalies in smoothly varying 1-dimensional signals.  They extend prior work by allowing a more complex regression model to be fit on the background data.  We develop and implement very efficient algorithms for computing these statistics.  
The method using a Nadaraya-Watson Kernel Regression model is shown especially efficient and effective.  A deep scientific application in detecting ``platforming" effects in radio telescope quality control highlights the usefulness of this approach.  

Our runtime for the NWKR model is $O(nwr)$.  While the linear in $n$ is necessary by just reading the data, and truncating exponentially decaying kernels (at $C r$), or just using bounded support kernels is standard; the $w$ factor might be as large as $n$.  That is, if we do not have a bound on the maximum window length, or need to set $w$ at $20\%$ of the length (e.g., $w = 0.2\times n$); then, the runtime is actually quadratic in $n$ at $O(nwr) = O(n^2 r)$.  

The models we use assume a smoothly varying background signal, with either $\F_d$ or $\F_\KR$ families.  If the background has other sharp changes in, it may be easy to confuse this structure for anomalies.


\begin{ack}

This work was supported by the National Science Foundation under Cooperative Agreement 2421782 and the Simons Foundation award MPS-AI-00010515
(NSF-Simons AI Institute for Cosmic Origins -- CosmicAI, https://www.cosmicai.org/).
We also thank John Horel for encouraging us to explore the MesoWest data.


\end{ack}



\bibliographystyle{plainnat}
\bibliography{bibliography}

\clearpage
\appendix

\section{Background on Statistical Models and Anomalies}
\label{sec:related_work}

The scan statistics framework detects localized departures from an assumed background model by maximizing a windowed test statistic over a family of candidate intervals or regions~\cite{abolhassani2021up,glaz2024handbook,kulldorff1999spatial,costa2009applications,mcfowland2013fast}. 
%
Early work on scan statistics formalized the paradigm of moving a window across a 1-dimensional signal: fix a window length, slide it across the data, record the most extreme data distribution observed, and quantify against a null model~\cite{naus1965line,naus1965twodim,wallenstein1980test}. Much of this work focused on events not recorded at regular intervals or locations.  Typically, much denser regions corresponded with more anomalous events as compared against a null model of a uniform distribution.  
While two-dimensional extensions existed earlier~\cite{naus1965twodim}, Kulldorffs's work~\cite{kulldorff1997spatial} and widely deployed software SatScan~\cite{satscan_software} extended many of these models into the spatial domain; moreover, with Nagarwalla~\cite{kulldorff1995spatial}, they formulated the statistic as a likelihood ratio test.  This allowed for clean and rigorous generalization to various baseline data models.  The closest model to our work is by \cite{huang2007spatial,kulldorff2009normal} (see also similar derivation by \cite{agarwal2006spatial}); it allows each point to have a continuous measurement value which is assumed drawn independently from a normal distribution.  The null model considers this normal distribution as constant across data; the alternative hypothesis allows this to differ inside and outside of the identified scan window.  While these works were designed for readings on irregularly distributed data observations, it applies naturally to regularly observed values as well.   

Beyond scan statistics, change point detection is a classic anomaly detection approach for 1-d signals like time series.  Classically CUSUM~\cite{page1954continuous} works by incrementally maintaining average statistics for online detection of changes~\cite{lorden1971procedures,basseville1993detection}.  
Retrospective time-series methods often rely on dynamic programming to identify outliers and level shifts~\cite{fox1972outliers,bai1998estimating,killick2012pelt}, and then has polynomial runtime with the exponent dependent on the number of change points.  
This led to likelihood-ratio testing for unknown change points between two distinct mean-centered sequences~\cite{quandt1960tests,hinkley1970inference,siegmund1995using}; these model the data generating process analogous to the normal model in scan statistics discussed above~\cite{agarwal2006spatial,kulldorff2009normal}, but with different considerations for data splits. 
In the change point detection literature this was extended to replace the mean-centered models with polynomials~\cite{bai1998estimating} and kernel smoothing~\cite{loader1996changepoint}.  
Refined software exists for certain versions from this class of models such as \textsf{ruptures}~\cite{truong2020selective} or BOCPD~\cite{adams2007bayesian}.  





There is also broader work in time series anomaly detection.  One line is discord mining based on, for instance, Matrix profile methods~\cite{yeh2016matrixprofile,keogh2005hotsax}.  These methods build a database of overlapping fix-length segments of a time series and identify discords which are not similar to any other segment; the most common library is STUMPY~\cite{law2019stumpy}.  
Another recent variant is CAPA~\cite{fisch2022capa} which looks for intervals which deviate from an assumed mean-0, variance-1 baseline of the data.  

The scan statistics paradigm and also extends to other structured domains~\cite{chitra2021quantifying} like graphs~\cite{wang2008spatial,sharpnack2013near,sharpnackchangepoint13}.

\subsection{Benchmarking against Other Methods}
We compare our proposed NWKR-based approach against five baselines, each adapted to produce a scoreable interval using the same anomaly scoring formula of $1 - (\text{SSE}_\text{in} + \text{SSE}_\text{out}) / \text{SSE}_\text{all}$ with a constant mean baseline, so that scores are directly comparable across methods.  

\emph{Kernel-based change-point detection (Ruptures KernelCPD)}~\cite{truong2020selective} fits a piecewise-constant $\ell_2$ cost model using dynamic programming with exactly two breakpoints, partitioning the spectrum into three segments. Each candidate segment is then scored with the scoring formula and the highest-scoring segment is returned as the predicted interval. Segments wider than $w$ are trimmed to $w$ channels around their center. The $\ell_2$ cost directly minimizes within-segment variance, matching the mean-model scoring definition, and runs in $O(n^2)$ time via the \texttt{ruptures} library~\cite{truong2020selective}.

The ``kernel" in the library name refers to the its use in a kernel two-sample test~\cite{gretton2012kernel,celisse2018new,arlot2019kernel}, which treats the intervals between change points as distributions of values.  That is, the ordering of the values within that interval does not play a role; for given changes points, it treats values as unordered.  
We use the linear kernel, which maps to the $F_0$ model.  An RBF kernel would not map well into our setting.

\emph{Bayesian Online Changepoint Detection (BOCPD)}~\cite{adams2007bayesian} maintains a posterior distribution over the current run length (the number of steps since the last changepoint) updated recursively at each observation using a Normal-Gamma conjugate prior and a constant hazard function $H = 1/\lambda$. A sudden drop in the maximum a posteriori run length signals a changepoint. For our synthetic data, pairs of consecutive such drops define the onset and return boundaries of the anomalous interval. BOCPD is fully probabilistic and models both boundaries through the same posterior update. The expected changepoint interval $\lambda$ is set to $n/2$ to encode a prior of approximately two changepoints per spectrum.

\emph{Likelihood Ratio Test (LRT)}~\cite{siegmund1995using} directly maximises the Gaussian log-likelihood ratio between a two-segment anomalous model (background $\sim \mathcal{N}(\mu_\text{out}, \sigma_\text{out}^2)$, anomaly $\sim \mathcal{N}(\mu_\text{in}, \sigma_\text{in}^2)$) and a one-segment null ($\text{all} \sim \mathcal{N}(\mu, \sigma^2)$) over all candidate window positions and widths up to $w$. Under the Gaussian model the log-likelihood ratio reduces to a closed-form expression involving only segment means, variances, and lengths, computable in $O(nw)$ time via prefix sums. This makes LRT the theoretically optimal parametric baseline for rectangular step detection under Gaussian noise~\cite{enikeeva2019high}, and requires no hyperparameters beyond the window cap $w$.

\emph{Collective and Point Anomaly detection (CAPA)}~\cite{fisch2022capa} minimises a penalised Gaussian negative log-likelihood cost over all possible anomalous segment placements using a dynamic program with PELT-style pruning. Robust estimates of the background mean and variance are obtained once via the median and scaled MAD, after which the dynamic program selects the best collective anomaly interval subject to a log-penalty that controls the false positive rate. CAPA is theoretically grounded for the anomalous changepoint model and achieves near-linear runtime in practice.

\emph{STUMPY (FLOSS)}~\cite{law2019stumpy} uses the matrix profile~\cite{yeh2016matrixprofile} to compute a corrected arc curve (CAC) over the signal, whose minima identify regime-change boundaries corresponding to the onset and return of the anomalous segment. The anomalous interval is constructed from the two deepest CAC minima and scored with the same anomaly formula as all other methods.

\emph{NWKR-based change-point detection (CPD NWKR)} applies the same Nadaraya-Watson kernel regression estimator as our scan statistic as a detrending step, then detects change points in the residuals. The signal is first fit with a truncated NWKR smoother using bandwidth derived from the physical spectral scale (same as for our methods), and the residuals are passed to a two-breakpoint dynamic program via the  \texttt{Ruptures} \texttt{KernelCPD}~\cite{truong2020selective} code. The three resulting segments are scored with the standard anomaly formula and the highest-scoring segment is returned as the predicted interval. This construction roughly follows the nonparametric regression change-point framework of~\citet{loader1996changepoint} and uses the same background family as our scan, differing only in the search formulation.

\emph{Fixed-Search NWKR scan (FS NWKR)} is a restricted variant of our full scan in which the left boundary of the search window is fixed at the first valid channel and only the right boundary is optimized over the full signal length. All other components of the pipeline are unchanged. This variant corresponds to a one-dimensional search in contrast to the two-dimensional search of the full scan.

\emph{Deep unsupervised detectors (USAD, TranAD, M2N2)} represent the class of reconstruction- and forecasting-based neural anomaly detectors. USAD~\cite{audibert2020usad} trains a shared encoder with two decoders adversarially: one decoder reconstructs the input while the other is trained to detect reconstructions that deviate from the training distribution, amplifying anomaly scores at inference. TranAD~\cite{tuli2022tranad} uses a transformer encoder with a self-conditioning mechanism in which a first reconstruction pass produces a per-channel focus score that concentrates attention in a second pass onto deviating regions, with both passes contributing to the final anomaly score. M2N2~\cite{kim2024model} is a test-time adaptation method that detrends the signal via exponential moving average and updates its internal model online on test instances judged consistent with the training distribution, allowing it to track slow distributional shifts during inference. All three produce a per-timestamp anomaly score. Since our evaluation metric requires a single predicted interval, we convert the score vector to an interval by finding the contiguous window that maximises total excess score mass above the signal mean, subject to the same window constraint applied to all other baselines.

\section{Synthetic Dataset Construction}
\label{sec:synthetic-data}

We construct multiple synthetic datasets to evaluate the ability of methods to localize compact anomalous intervals in one-dimensional signals with smooth, heterogeneous backgrounds. Each signal is defined on a uniform grid and is composed of a slowly varying quadratic trend, augmented by low-frequency correlated structure or AR(2) process based variation to mimic realistic spectral behavior. The baseline is further perturbed by white noise. A fixed fraction of signals (typically 5\%) contains a single localized anomaly, implemented as a rectangular step of controllable width and strength, added on top of the existing background and noise. This design allows precise control over anomaly characteristics while preserving realistic background complexity, and provides ground-truth intervals for evaluating localization performance.

\begin{figure}[htbp]
  \centering
  \begin{subfigure}[b]{0.48\linewidth}
    \centering
    \vfill
    \includegraphics[width=\linewidth]{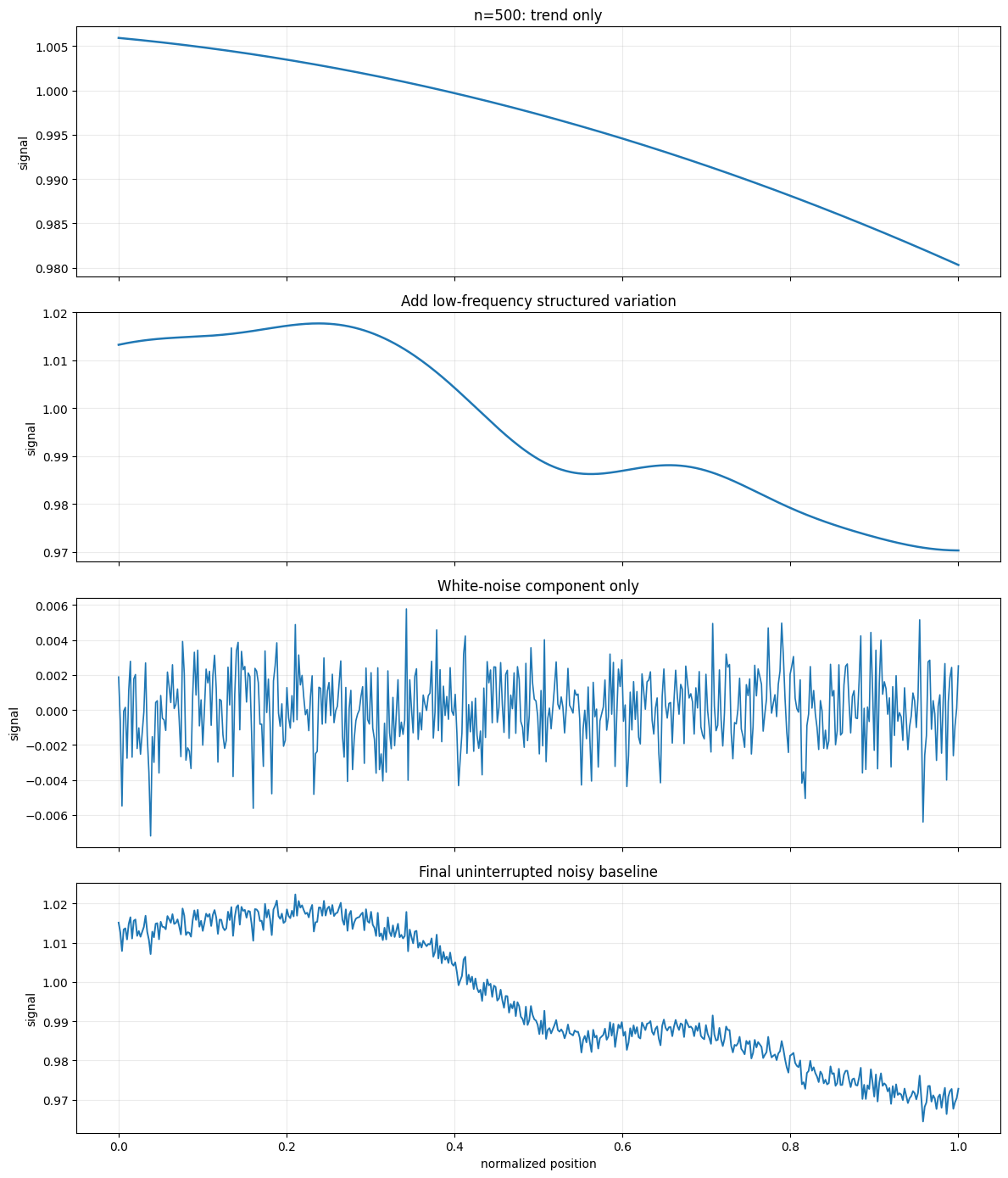}
    \caption{Synthetic components}
    \label{fig:signal_1}
    \vfill
  \end{subfigure}
  \hfill
  \begin{minipage}[b]{0.48\linewidth}
    \centering
    \begin{subfigure}[b]{\linewidth}
      \centering
      \includegraphics[width=\linewidth]{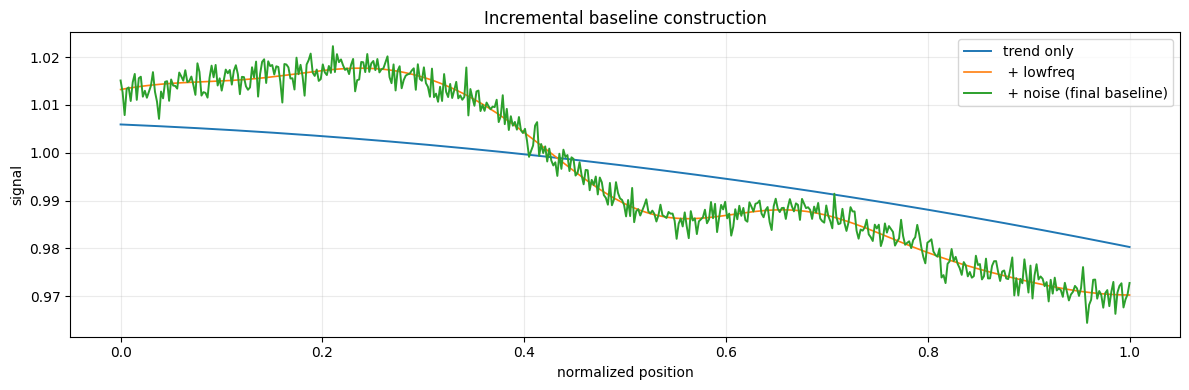}
      \caption{Construction step by step}
      \label{fig:signal_2}
    \end{subfigure}

    \vspace{0.3em}

    \begin{subfigure}[b]{\linewidth}
      \centering
      \includegraphics[width=\linewidth]{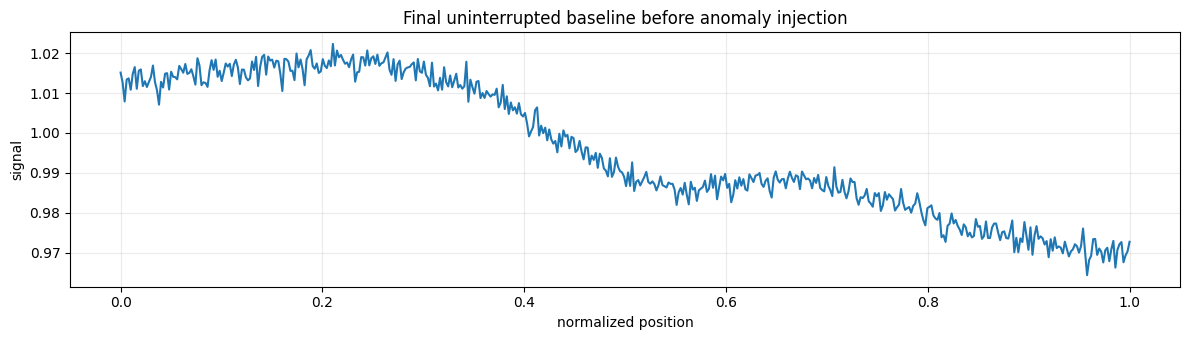}
      \caption{Baseline}
      \label{fig:signal_3}
    \end{subfigure}

    \vspace{0.3em}

    \begin{subfigure}[b]{\linewidth}
      \centering
      \includegraphics[width=\linewidth]{figures/signal_4.png}
      \caption{Anomaly injected into baseline}
      \label{fig:signal_4}
    \end{subfigure}
  \end{minipage}
  \caption{Synthetic dataset construction and anomaly injection}
  \label{fig:signals}
\end{figure}

\begin{figure}
  \centering
  \includegraphics[width=0.9\linewidth]{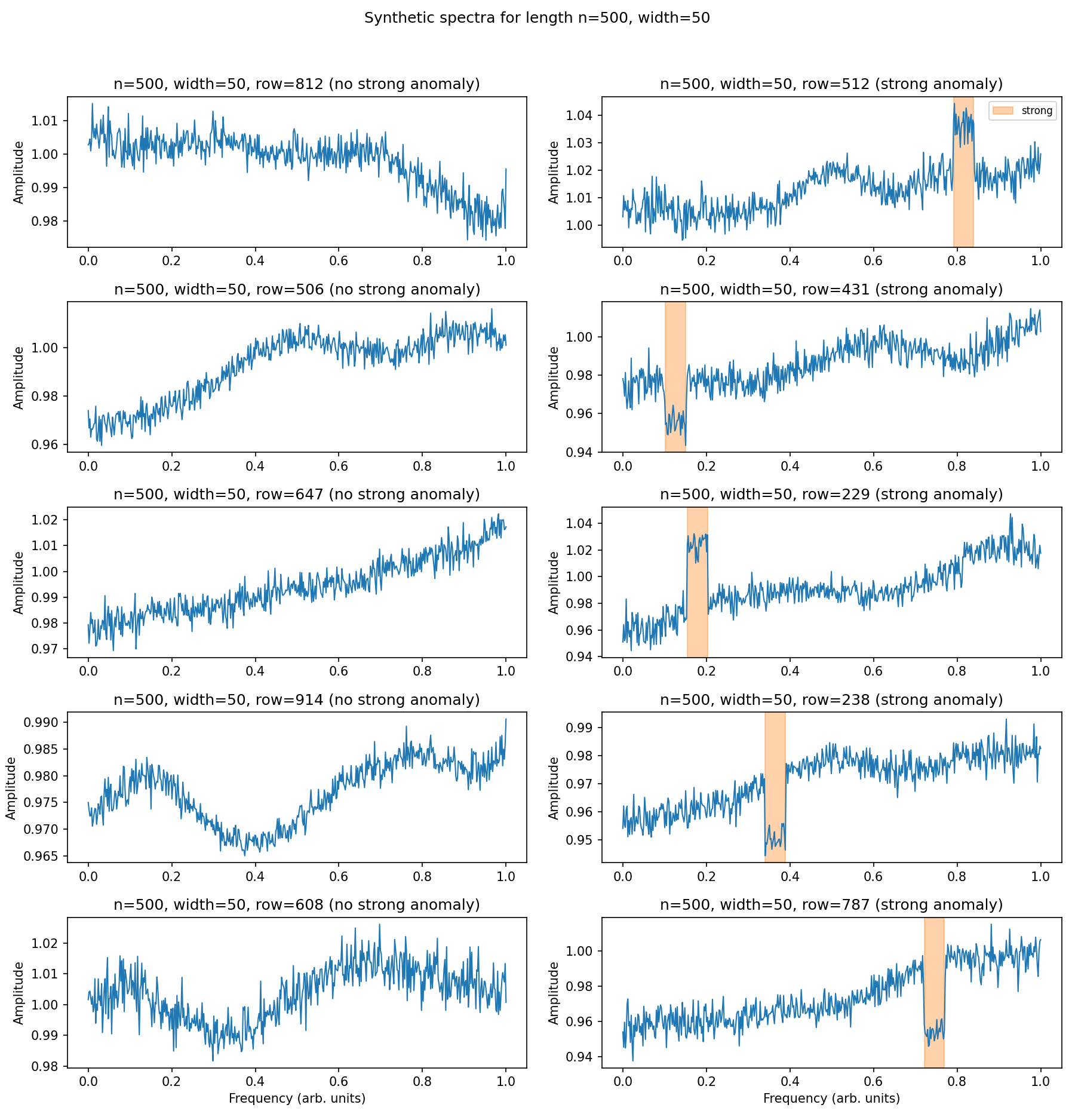}
  \caption{Different generated signals}
  \label{fig:signal_5}
\end{figure}

\begin{figure}
  \centering
  \includegraphics[width=0.9\linewidth]{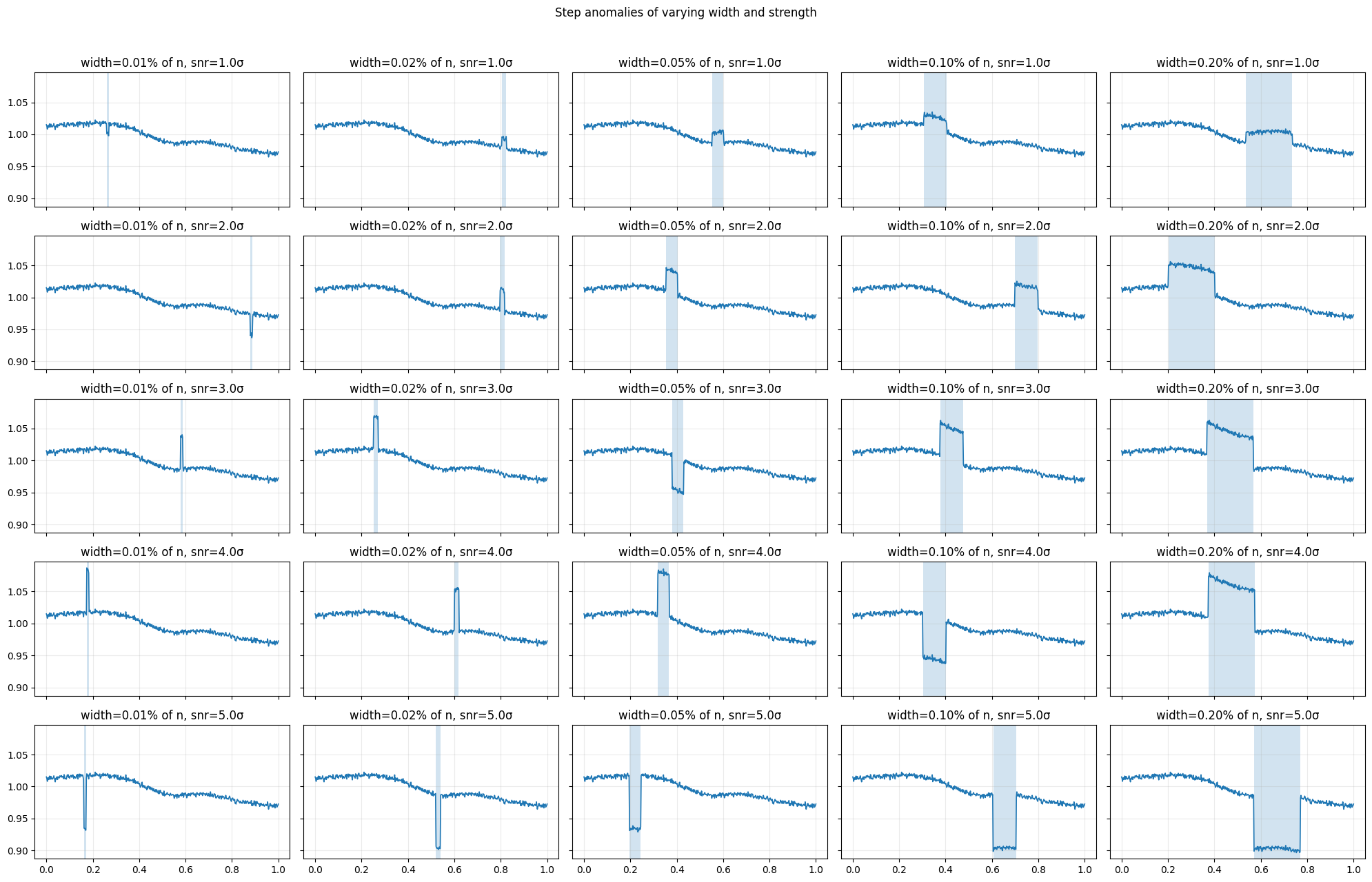}
  \caption{Different anomalies visualized in signal}
  \label{fig:signal_6}
\end{figure}

\begin{figure}[htbp]
  \centering
  \begin{subfigure}[b]{0.48\linewidth}
    \centering
    \includegraphics[width=\linewidth]{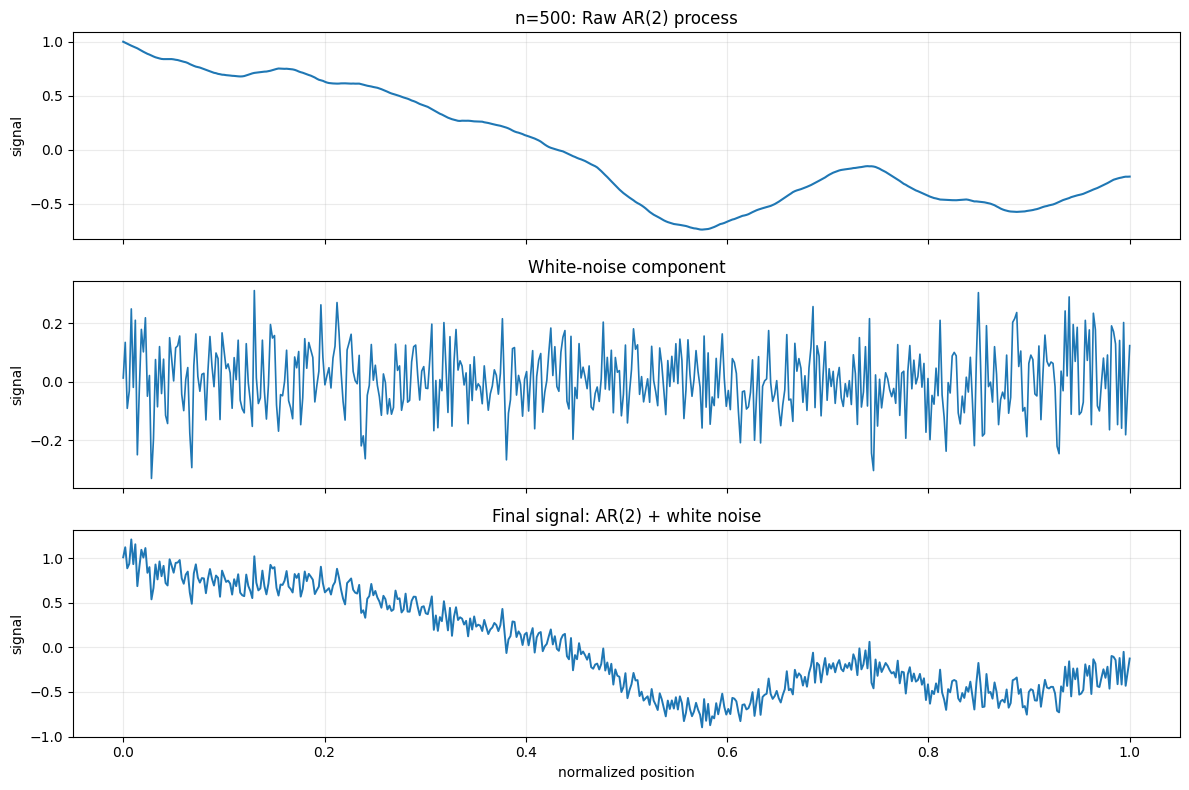}
    \caption{AR(2) components}
    \label{fig:signal_7}
    \vspace{4em}
  \end{subfigure}
  \hfill
  \begin{minipage}[b]{0.48\linewidth}
    \centering
    \begin{subfigure}[b]{\linewidth}
      \centering
      \includegraphics[width=\linewidth]{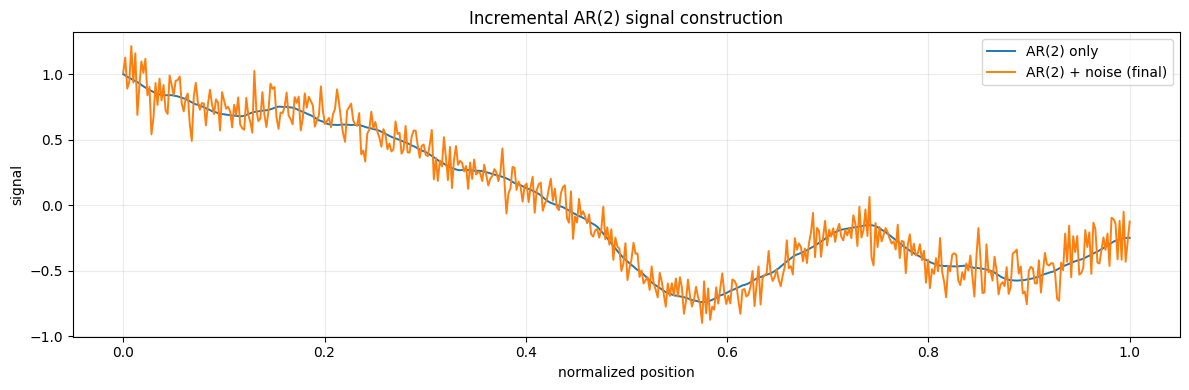}
      \caption{Construction step by step}
      \label{fig:signal_8}
    \end{subfigure}
    \vspace{0.3em}
    \begin{subfigure}[b]{\linewidth}
      \centering
      \includegraphics[width=\linewidth]{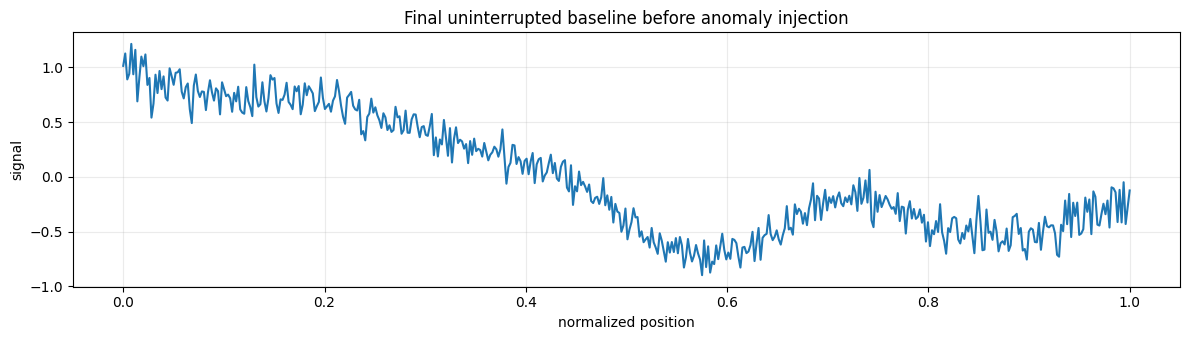}
      \caption{AR(2) baseline}
      \label{fig:signal_9}
    \end{subfigure}
    \vspace{0.3em}
    \begin{subfigure}[b]{\linewidth}
      \centering
      \includegraphics[width=\linewidth]{figures/signal_10.png}
      \caption{Anomaly injected into baseline}
      \label{fig:signal_10}
    \end{subfigure}
  \end{minipage}
  \caption{AR(2) dataset construction and anomaly injection}
  \label{fig:signals_ar2}
\end{figure}

\begin{figure}
  \centering
  \includegraphics[width=0.9\linewidth]{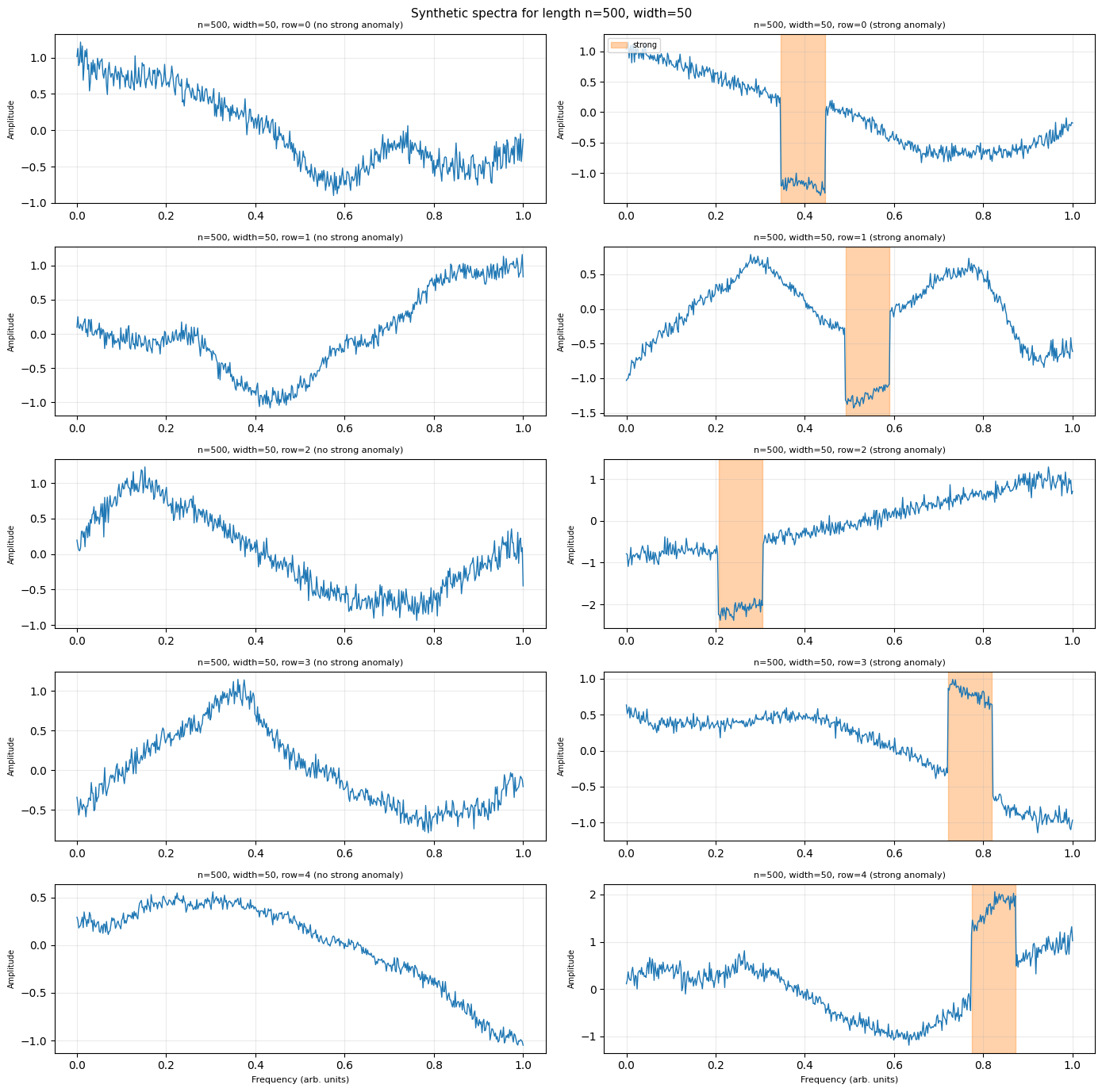}
  \caption{Different generated AR(2) signals}
  \label{fig:signal_11}
\end{figure}

\begin{figure}
  \centering
  \includegraphics[width=0.9\linewidth]{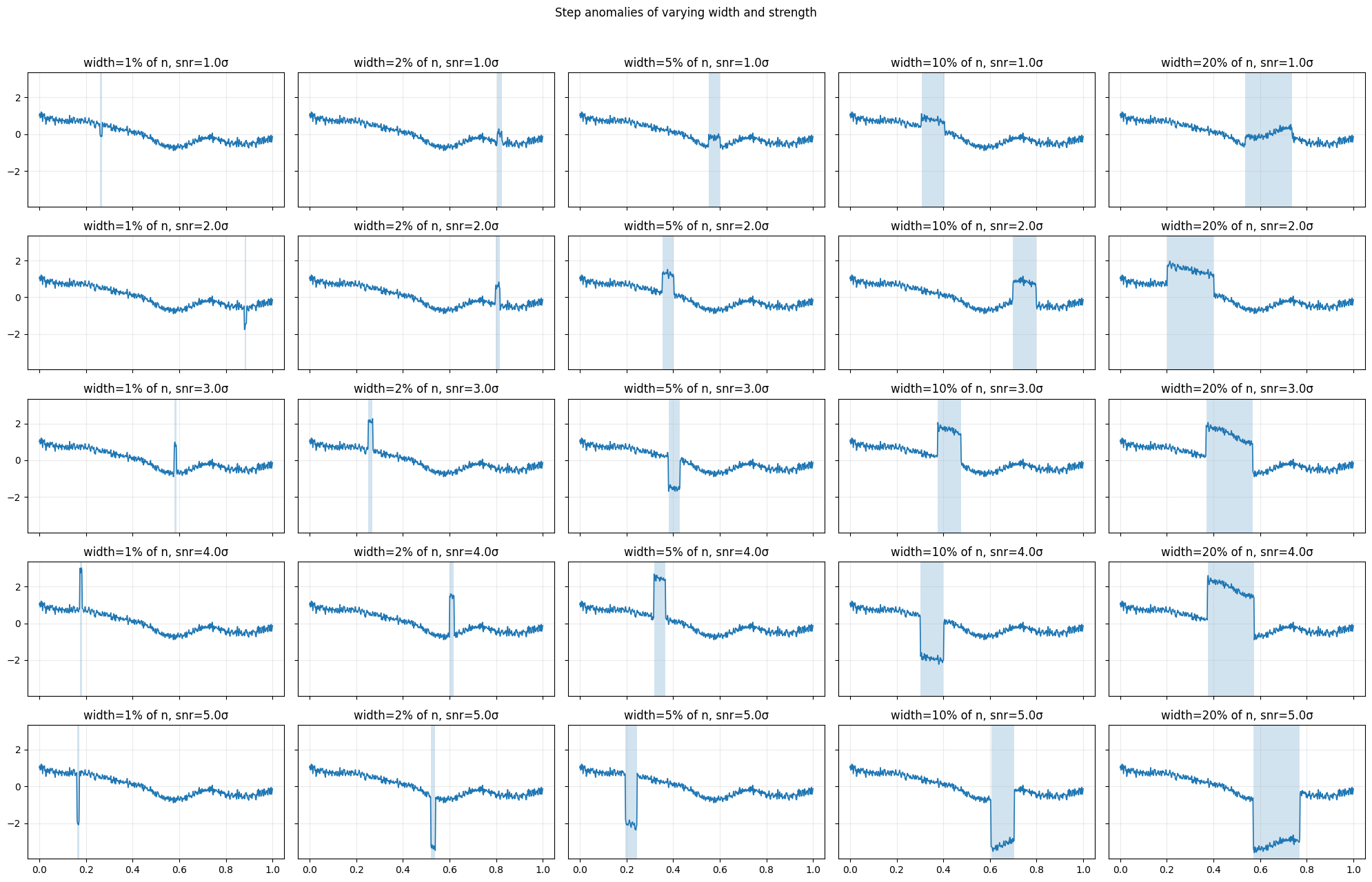}
  \caption{Different anomalies visualized in AR(2) signal}
  \label{fig:signal_12}
\end{figure}

\paragraph{Grouped dataset specification.}
Signals are generated in groups defined by a triple $(n, w, N)$, where $n$ is the signal length, $w$ is the scan-window cap used during evaluation, and $N$ is the number of signals per group. In the window-sensitivity experiments, we vary $w$ systematically across a length of $500$ while keeping $N=1000$. This produces regimes in which the scan window is substantially narrower than the anomaly width, approximately matched to it, or substantially broader. By explicitly varying $w$ across scales, we isolate the interaction between anomaly geometry and search breadth.

In separate experiments, anomaly strength is parameterized relative to the empirical standard deviation of the signal. This allows direct control of signal-to-noise ratio and clarifies detection behavior under varying contrast conditions.

Overall, the dataset construction enables controlled analysis of localization behavior as a function of (i) anomaly width, (ii) anomaly amplitude relative to noise, and (iii) scan-window configuration.

\subsection{Localization Results on Synthetic Data}
\label{app:localize}

\paragraph{Evaluation metrics.}
Localization performance is evaluated only on rows containing a ground-truth anomaly. For a predicted interval $\widehat{I}$ and ground-truth interval $I^\star$, we compute

\[
\text{recall} = \frac{| \widehat{I} \cap I^\star |}{| I^\star |}, 
\qquad
\text{precision} = \frac{| \widehat{I} \cap I^\star |}{| \widehat{I} |},
\]

and define the \emph{localization score} as their geometric mean,
\[
\text{Loc} = \sqrt{\text{recall} \times \text{precision}}.
\]

We report mean and median localization score together with mean recall, mean precision, and mean runtime per row.

\paragraph{Effect of scan-window cap $w$.}
The window-sensitivity experiments show that performance depends critically on the alignment between the anomaly width and the search breadth.

For small window caps (e.g., $w\leq 3$ at $n=100$), localization scores remain modest across all methods. Polynomial baselines often outperform NWKR in this regime because the scan window is too narrow to capture the full anomaly support, leading kernel smoothers to under-cover the anomalous region.

As $w$ increases to moderate values (e.g., $w\in\{5,6\}$ for $n=100$ and $w\in\{10,12\}$ for $n=200$), localization performance improves substantially. In this regime, Gaussian NWKR achieves perfect or near-perfect localization (mean Loc $\approx 1.0$), outperforming polynomial baselines. The Laplace NWKR variant typically matches Gaussian accuracy but at higher computational cost.

For excessively large window caps (e.g., $w\geq 20$ for $n=100$), the performance of simple baselines deteriorates due to over-extended detections, while the Gaussian and Laplace NWKR remain stable and achieve perfect localization. This demonstrates that kernel-based scanning is more robust to over-large search breadth than polynomial or mean models.

\paragraph{Effect of anomaly amplitude relative to noise.}
When anomaly amplitude is expressed as a multiple of the signal standard deviation, detection behavior becomes strongly SNR-dependent.

For weak anomalies (e.g., $1\sigma$), localization scores are low across all methods, with median scores often zero. This reflects intrinsic detectability limits rather than model deficiencies.

At moderate amplitudes (e.g., $2\sigma$), Gaussian NWKR exhibits sharp performance transitions, achieving near-perfect localization even for narrow anomalies, while polynomial methods improve more gradually.

For high amplitudes ($\geq 4\sigma$), all methods approach perfect localization, though NWKR typically achieves this regime at lower amplitude thresholds. Laplace NWKR occasionally matches Gaussian performance but does not consistently exceed it.

\section{ALMA Calibration Anomaly Evaluation with IoU Match}
\label{app:ALMA-IoU-exp}

\paragraph{IoU Evaluation protocol.}
In this section, we further require an bandpass calibration anomaly detection to be counted as found if the predicted interval achieved an intersection-over-union (IoU; aka Jaccard Similarity) of at least $\tau_I$ (e.g. $\tau_I = 0.75$) with the ground truth interval, where
\[
    \text{IoU}([a_1, b_1],\, [a_2, b_2])
    = \frac{\max(0,\, \min(b_1, b_2) - \max(a_1, a_2) + 1)}
           {\max(b_1, b_2) - \min(a_1, a_2) + 1}.
\]

But not every signal has an anomaly, so we also considered detecting among a mix of anomalous and non-anomalous signals.  For this setting, to predict something is anomalous, we need a score threshold $\tau_S$  (e.g., $\tau_S = 0.3$) which is a minimal value of the normalized scan statistic $\Phi(\bfx)$ to predict as non-trivial.  
In this setting, for a marked anomalous interval to be predicted correctly (a True-Positive), the signal must have $\Phi(\bfx) > \tau_S$ and then the identified interval $\hat I$ must have IoU at least $\tau_I$.  
For a Negative prediction it only needs score below $\tau_S$ (for a True-Negative), since there is no ground-truth interval to consider.  But we can still have a False-Negative for a anomalous interval if its score is above $\tau_S$, but its IoU with the ground-truth interval is below $\tau_I$.  
For methods which do not compute a (normalized) generalized log-likelihood ratio (as we do for $\F_d$ and $\F_\KR$ in Section \ref{sec:scan}), we can still use their predicted $\hat I$ for IoU, and then use the baseline $\F_0$ model to produce $S(\hat I)$ as a score for comparison.

\subsection{Experimental Results}
\label{app:exp-astro}

\paragraph{Ablation of Score Threshold.}
We first next revisiting fixing the IoU threshold $\tau_I$  and showing the ability to filter based on score at that threshold.  
Table \ref{tab:f1_iou} (a) fixes $\tau_S = 0.3$, and shows all increments of $\tau_I$ from $0.5$ to $1$ in increments of $0.05$.  

As a result it is again clear that our $\F_\KR$ is the best choice for this data, and that there is a wide range of score thresholds $\tau_S$ where it is effective.  Since the method is robust to this choice, we do not perform a detailed test/train split evaluation.  It will depend more on a user preference for favoring precision over recall in this sense.




\begin{table}[htbp]
\centering
\caption{F1 Score across IoU Thresholds $\tau_I$ for each method, with $\tau_S = 0.3$}
\label{tab:f1_iou}
\resizebox{\textwidth}{!}{%
\begin{tabular}{lccccccccccc}
\toprule
Method & 0.50 & 0.55 & 0.60 & 0.65 & 0.70 & 0.75 & 0.80 & 0.85 & 0.90 & 0.95 & 1.00 \\
\midrule
$\F_0$ Mean & 0.6000 & 0.5920 & 0.5920 & 0.5797 & 0.5797 & 0.5756 & 0.5546 & 0.5460 & 0.5106 & 0.2624 & 0.2182 \\
$\F_1$ Poly (deg 1) & 0.6649 & 0.6545 & 0.6510 & 0.6223 & 0.6150 & 0.6075 & 0.5611 & 0.5408 & 0.4854 & 0.2381 & 0.1888 \\
$\F_2$ Poly (deg 2) & 0.7172 & 0.7074 & 0.7041 & 0.6702 & 0.6596 & 0.6489 & 0.5810 & 0.5444 & 0.4699 & 0.1922 & 0.1727 \\
\rowcolor{gray!15}
$\F_\KR$ Gaussian & 0.9902 & 0.9902 & 0.9902 & 0.9902 & 0.9902 & 0.9892 & 0.9892 & 0.9892 & 0.9892 & 0.9870 & 0.9870 \\
\rowcolor{gray!15}
$\F_\KR$ Laplace & 0.9825 & 0.9825 & 0.9780 & 0.9689 & 0.9573 & 0.9526 & 0.9333 & 0.9057 & 0.8873 & 0.8627 & 0.8342 \\
FS NWKR Gaussian & 0.0339 & 0.0171 & 0.0171 & 0.0171 & 0.0171 & 0.0171 & 0.0171 & 0.0171 & 0.0171 & 0.0000 & 0.0000 \\
FS NWKR Laplace & 0.0339 & 0.0171 & 0.0171 & 0.0171 & 0.0171 & 0.0171 & 0.0086 & 0.0086 & 0.0000 & 0.0000 & 0.0000 \\
CPD NWKR Gaussian & 0.6167 & 0.6087 & 0.6087 & 0.6087 & 0.6047 & 0.6006 & 0.5799 & 0.5366 & 0.4810 & 0.2979 & 0.1818 \\
CPD NWKR Laplace & 0.6132 & 0.6092 & 0.6092 & 0.6052 & 0.6052 & 0.6052 & 0.5971 & 0.5765 & 0.5422 & 0.3425 & 0.2590 \\
\midrule
CAPA & 0.2180 & 0.2045 & 0.2045 & 0.1556 & 0.1265 & 0.1116 & 0.0887 & 0.0887 & 0.0574 & 0.0574 & 0.0574 \\
BOCPD & 0.4752 & 0.4752 & 0.4752 & 0.4752 & 0.4702 & 0.4497 & 0.4339 & 0.4124 & 0.3261 & 0.0424 & 0.0340 \\
KernelCPD & 0.5757 & 0.5714 & 0.5714 & 0.5629 & 0.5629 & 0.5629 & 0.5455 & 0.5321 & 0.4953 & 0.2288 & 0.1955 \\
LRT & 0.6217 & 0.6136 & 0.6136 & 0.6012 & 0.5970 & 0.5928 & 0.5583 & 0.5358 & 0.4936 & 0.2197 & 0.1853 \\
STUMPY & 0.5364 & 0.5322 & 0.5235 & 0.4925 & 0.4880 & 0.4648 & 0.3910 & 0.2808 & 0.0391 & 0.0079 & 0.0079 \\
\midrule
TSBAD M2N2 & 0.5954 & 0.5831 & 0.5831 & 0.5664 & 0.5621 & 0.5536 & 0.5046 & 0.4907 & 0.4221 & 0.1308 & 0.0941 \\
TSBAD TranAD & 0.6189 & 0.5906 & 0.5906 & 0.5740 & 0.5740 & 0.5655 & 0.5215 & 0.4937 & 0.3661 & 0.1024 & 0.0720 \\
TSBAD USAD & 0.5831 & 0.5789 & 0.5789 & 0.5579 & 0.5536 & 0.5493 & 0.4812 & 0.4423 & 0.1379 & 0.0637 & 0.0163 \\
\bottomrule
\end{tabular}}
\end{table}


\subsection{Detection with Score and Interval Overlap on Balanced Subset}
\label{app:IoU-ALMA-balanced}
We next show in Table \ref{tab:metrics_075} the results on $N'= 500$ signals, with $269$ random non-anomalous signals added to the $231$ marked anomalies, and $\tau_I = 0.75$ and $\tau_S = 0.3$. Here we count TP as any anomaly boundary having IoU greater than $\tau_I$ and score being greater than $\tau_S$. And any positive label breaking these conditions are considered as FP. Also, we count TN as any predicted score for negative labels less than $\tau_S$, and any predicted score breaking this condition for negative labels are considered as FN. From here on, we calculate accuracy as $\frac{TP + TN}{500}$, precision as $\frac{TP}{TP + FP}$, and recall as $\frac{TP}{TP + FN}$.
For this setting both $\F_\KR$ models have precision above $0.99$ and both accuracy and F1 score above $0.95$ (Gaussian above $0.98$).  While other methods can have high precision, no other approach has F1 above $0.72$ or accuracy above $0.75$.  

A few methods are faster than ours, as shown on average time (in ms) per signal.  CAPA is much faster, but with F1 around $0.1$.  Also LRT (which has an $\F_0$ like model for change points) is about a factor 5 faster than $\F_\KR$, but like our $\F_0$ (which is factor 10 faster) has reasonable, but worse performance; it fits the background as constant, which is not a good fit for this data.  All other methods are slower than our $\F_\KR$ models.  

\begin{table}[htbp]
\centering
\caption{Performance metrics on sampled balanced dataset ($N = 500$) at $\tau_I = 0.75$, $\tau_S = 0.3$}
\label{tab:metrics_075}
\resizebox{\textwidth}{!}{%
\begin{tabular}{lcccccccc}
\toprule
Method & TP & FP & FN & Accuracy & Precision & Recall & F1 & Runtime (ms) \\
\midrule
$\F_0$ Mean & 99 & 14 & 132 & 0.708 & 0.8761 & 0.4286 & 0.5756 & 680.4 \\
$\F_1$ Poly (deg 1) & 113 & 28 & 118 & 0.708 & 0.8014 & 0.4892 & 0.6075 & 15,392.7 \\
$\F_2$ Poly (deg 2) & 122 & 23 & 109 & 0.736 & 0.8414 & 0.5281 & 0.6489 & 18,561.5 \\
\rowcolor{gray!15}
$\F_\KR$ Gaussian & 228 & 2 & 3 & 0.990 & 0.9913 & 0.9870 & 0.9892 & 223.0 \\
\rowcolor{gray!15}
$\F_\KR$ Laplace & 211 & 1 & 20 & 0.958 & 0.9953 & 0.9134 & 0.9526 & 214.1 \\
FS NWKR Gaussian & 2 & 1 & 229 & 0.540 & 0.6667 & 0.0087 & 0.0171 & 143.1 \\
FS NWKR Laplace & 2 & 1 & 229 & 0.540 & 0.6667 & 0.0087 & 0.0171 & 139.4 \\
CPD NWKR Gaussian & 103 & 9 & 128 & 0.726 & 0.9196 & 0.4459 & 0.6006 & 24.0 \\
CPD NWKR Laplace & 105 & 11 & 126 & 0.726 & 0.9052 & 0.4545 & 0.6052 & 24.9 \\
\midrule
CAPA & 14 & 6 & 217 & 0.554 & 0.7000 & 0.0606 & 0.1116 & 15.4 \\
BOCPD & 67 & 0 & 164 & 0.672 & 1.0000 & 0.2900 & 0.4497 & 25,869.1 \\
KernelCPD & 94 & 9 & 137 & 0.708 & 0.9126 & 0.4069 & 0.5629 & 24.9 \\
LRT & 99 & 4 & 132 & 0.728 & 0.9612 & 0.4286 & 0.5928 & 1,219.4 \\
STUMPY & 76 & 20 & 155 & 0.650 & 0.7917 & 0.3290 & 0.4648 & 47.0 \\
\midrule
TSBAD M2N2 & 93 & 12 & 138 & 0.700 & 0.8857 & 0.4026 & 0.5536 & 260.9 \\
TSBAD TranAD & 95 & 10 & 136 & 0.708 & 0.9048 & 0.4113 & 0.5655 & 281.0 \\
TSBAD USAD & 92 & 12 & 139 & 0.698 & 0.8846 & 0.3983 & 0.5493 & 231.8 \\
\bottomrule
\end{tabular}}
\end{table}


Then Table \ref{tab:summary} shows for three different $\tau_I$ thresholds ($0.5$, $0.75$, and $0.9$) the maximum operating point with respect to the score threshold $\tau_S$, denoted $\tau_S^*$ (checked in $0.05$ increments).  Notably, this is $\tau_S^* = 0.35$ for our $\F_\KR$ Gaussian, and this is the maximum Precision, Recall, and F1 for each $\tau_I$ among all options we compare to.  The F1 score is always at least $0.99$.  The next best is $\F_\KR$ Laplace, and then $\F_2$ before the first technique not developed in this paper LRT.  However, its F1 score is consistently at least $0.3$ below that for our $\F_\KR$ Gaussian; for the high IoU threshold of $0.9$, its F1 score is close to $0.5$.

\begin{table}[htbp]
\centering
\caption{Summary of best F1 operating points per method across IoU thresholds. For each method, the optimal score threshold ($\tau_S^*$) is selected to maximize F1. Precision, Recall, and F1 are reported at that threshold.}
\label{tab:summary}
\resizebox{\textwidth}{!}{%
\begin{tabular}{l c c c c c c c c c c c c}
\toprule
 & \multicolumn{4}{c}{IoU $\geq$ 0.50} & \multicolumn{4}{c}{IoU $\geq$ 0.75} & \multicolumn{4}{c}{IoU $\geq$ 0.90} \\
\cmidrule(lr){2-5} \cmidrule(lr){6-9} \cmidrule(lr){10-13}
Method & $\tau_S^*$ & Prec. & Rec. & F1 & $\tau_S^*$ & Prec. & Rec. & F1 & $\tau_S^*$ & Prec. & Rec. & F1 \\
\midrule
$\mathcal{F}_0$ Mean        & 0.20 & 0.690 & 0.558 & 0.617 & 0.25 & 0.766 & 0.480 & 0.590 & 0.25 & 0.728 & 0.394 & 0.511 \\
$\mathcal{F}_1$ Poly (deg 1) & 0.30 & 0.822 & 0.558 & 0.665 & 0.30 & 0.801 & 0.489 & 0.608 & 0.30 & 0.748 & 0.359 & 0.485 \\
$\mathcal{F}_2$ Poly (deg 2) & 0.25 & 0.833 & 0.649 & 0.730 & 0.25 & 0.808 & 0.546 & 0.651 & 0.25 & 0.730 & 0.351 & 0.474 \\
\rowcolor{gray!15}
$\mathcal{F}_\text{KR}$ Gaussian & 0.35 & \textbf{0.996} & \textbf{0.987} & \textbf{0.991} & 0.35 & \textbf{0.996} & \textbf{0.987} & \textbf{0.991} & 0.35 & \textbf{0.996} & \textbf{0.987} & \textbf{0.991} \\
\rowcolor{gray!15}
$\mathcal{F}_\text{KR}$ Laplace & 0.30 & 0.996 & 0.957 & 0.976 & 0.30 & 0.995 & 0.913 & 0.953 & 0.30 & 0.995 & 0.801 & 0.887 \\
\midrule
CAPA      & 0.10 & 0.684 & 0.225 & 0.339 & 0.10 & 0.529 & 0.117 & 0.192 & 0.25 & 0.529 & 0.039 & 0.073 \\
LRT       & 0.15 & 0.831 & 0.597 & 0.695 & 0.15 & 0.819 & 0.550 & 0.658 & 0.25 & 0.924 & 0.368 & 0.526 \\
BOCPD     & 0.15 & 0.946 & 0.377 & 0.539 & 0.15 & 0.938 & 0.325 & 0.482 & 0.25 & 0.979 & 0.199 & 0.331 \\
KernelCPD & 0.20 & 0.830 & 0.463 & 0.594 & 0.25 & 0.863 & 0.437 & 0.580 & 0.25 & 0.843 & 0.372 & 0.516 \\
STUMPY    & 0.20 & 0.630 & 0.524 & 0.572 & 0.25 & 0.720 & 0.368 & 0.487 & 0.35 & 0.417 & 0.022 & 0.041 \\
\midrule
FS NWKR Gaussian & 0.20 & 0.800 & 0.017 & 0.034 & 0.20 & 0.667 & 0.009 & 0.017 & 0.20 & 0.667 & 0.009 & 0.017 \\
FS NWKR Laplace  & 0.20 & 0.800 & 0.017 & 0.034 & 0.20 & 0.667 & 0.009 & 0.017 & 0.00 & 0.000 & 0.000 & 0.000 \\
CPD NWKR Gaussian & 0.20 & 0.841 & 0.550 & 0.665 & 0.25 & 0.911 & 0.485 & 0.633 & 0.25 & 0.883 & 0.359 & 0.511 \\
CPD NWKR Laplace  & 0.20 & 0.804 & 0.550 & 0.653 & 0.20 & 0.801 & 0.541 & 0.646 & 0.25 & 0.857 & 0.442 & 0.583 \\
\midrule
USAD   & 0.20 & 0.705 & 0.558 & 0.623 & 0.25 & 0.774 & 0.446 & 0.566 & 0.25 & 0.434 & 0.100 & 0.162 \\
TranAD & 0.20 & 0.720 & 0.580 & 0.643 & 0.20 & 0.689 & 0.498 & 0.578 & 0.25 & 0.682 & 0.260 & 0.376 \\
M2N2   & 0.20 & 0.699 & 0.563 & 0.624 & 0.25 & 0.770 & 0.450 & 0.568 & 0.25 & 0.699 & 0.312 & 0.431 \\
\bottomrule
\end{tabular}}
\end{table}

\clearpage
\subsection{Detection with Score and Interval Overlap on Full ALMA}
\label{app:ALMA-IoU-full}

Table~\ref{tab:full_alma} reports the detection performance on the large ALMA dataset ($N = 38{,}881$ rows, of which $231$ are positive) at and $\tau_S = 0.3$. Due to the extreme class imbalance ($0.6\%$ positive rate), as expected, precision drops substantially for all methods compared to the balanced setting in Table~\ref{tab:metrics_075}. Nevertheless, $\F_\KR$ Gaussian achieves the highest precision ($0.544$) and recall ($0.995$) among all methods, with only $191$ false positives out of $38{,}650$ negative signals and missing just $3$ of the $231$ true anomalies. $\F_\KR$ Laplace follows with comparable recall ($0.992$) but lower precision ($0.432$) due to additional false positives. The parametric models $\F_1$ and $\F_2$ achieve moderate recall ($0.489$ and $0.528$, respectively) but yield thousands of false positives, driving their precision below $0.04$. Among the efficient baselines, LRT achieves the best balance with precision $0.121$ and recall $0.429$, while CAPA produces very few false positives ($554$) but detects only $14$ of the $231$ anomalies.

The NWKR-CPD baselines achieve recall of approximately $0.45$ for both kernel families, roughly matching LRT, while the three deep detectors (USAD, TranAD, and M2N2) reach recall between $0.40$ and $0.41$ at precision below $0.04$, comparable to the parametric baselines despite substantially higher false positive counts.    

The fixed-left variants tell a particularly informative story at the $\tau_I = 0.75$ threshold. Whereas at $\tau_I = 0$ (Table~\ref{tab:full_alma_0.3}) these methods achieved recall of $0.710$ and $0.701$, imposing the IoU constraint collapses recall to just $0.009$ for both variants. This confirms that while the fixed-left scan correctly identifies many anomalous signals, it almost never recovers the correct interval: the left boundary, anchored at the start of the search region, rarely coincides with the true anomaly onset. This isolates the contribution of the two-dimensional interval search in the full scan, which is responsible for the precise localisation underlying the high IoU performance.

Notably, $\F_\KR$ Gaussian also offers competitive runtime, with a median of $45$\,ms per signal, faster than $\F_0$ Mean ($95$\,ms), and orders of magnitude faster than the polynomial models ($\F_1$ at $2{,}214$\,ms, $\F_2$ at $2{,}697$\,ms).  Note that due to the scale, we used parallelization for this experiment, and the runtime measurements may be noisy and not reflect the precise expected runtime values under ideal conditions.


\begin{table}[h]
\centering
\caption{Performance metrics on the full ALMA dataset ($N = 38{,}881$) at $\tau_I = 0.75$, $\tau_S = 0.3$}
\label{tab:full_alma}
\resizebox{\textwidth}{!}{%
\begin{tabular}{lrrrrrrrrrr}
\toprule
Method & TP & FP & TN & FN & Accuracy & Precision & Recall & F1 & Mean (ms) & Median (ms) \\
\midrule
$\mathcal{F}_0$ Mean        &  99 & 2{,}561 & 36{,}089 & 132 & 0.9307 & 0.0372 & 0.4286 & 0.0685 &   394.3 &    94.9 \\
$\mathcal{F}_1$ Poly (deg 1) & 113 & 3{,}980 & 34{,}670 & 118 & 0.8946 & 0.0276 & 0.4892 & 0.0523 & 9{,}200.6 & 2{,}213.6 \\
$\mathcal{F}_2$ Poly (deg 2) & 122 & 2{,}932 & 35{,}718 & 109 & 0.9218 & 0.0399 & 0.5281 & 0.0743 & 11{,}181.2 & 2{,}697.3 \\
\rowcolor{gray!15}
$\mathcal{F}_\text{KR}$ Gaussian & 228 & 191 & 38{,}459 &  3 & 0.9950 & 0.5442 & 0.9870 & 0.7015 & 190.3 & 45.2 \\
\rowcolor{gray!15}
$\mathcal{F}_\text{KR}$ Laplace  & 211 & 277 & 38{,}373 & 20 & 0.9924 & 0.4324 & 0.9134 & 0.5869 & 185.5 & 45.1 \\
\midrule
CAPA      &  14 &    554 & 38{,}096 & 217 & 0.9802 & 0.0246 & 0.0606 & 0.0350 &   8.6 &   5.4 \\
LRT       &  99 &    718 & 37{,}932 & 132 & 0.9781 & 0.1212 & 0.4286 & 0.1889 & 756.3 & 182.6 \\
\midrule
FS NWKR Gaussian &   2 &  69 & 38{,}581 & 229 & 0.9923 & 0.0282 & 0.0087 & 0.0132 & 162.9 &  6.6 \\
FS Laplace  &   2 & 112 & 38{,}538 & 229 & 0.9912 & 0.0175 & 0.0087 & 0.0116 & 161.1 &  6.3 \\
CPD NWKR Gaussian & 103 &    838 & 37{,}812 & 128 & 0.9752 & 0.1095 & 0.4459 & 0.1758 & 26.6 & 7.4 \\
CPD NWKR Laplace  & 105 & 1{,}264 & 37{,}386 & 126 & 0.9642 & 0.0767 & 0.4545 & 0.1313 & 26.7 & 7.4 \\
\midrule
USAD   &  92 & 2{,}318 & 36{,}332 & 139 & 0.9368 & 0.0382 & 0.3983 & 0.0697 & 237.8 & 175.7 \\
TranAD &  95 & 2{,}293 & 36{,}357 & 136 & 0.9375 & 0.0398 & 0.4113 & 0.0725 & 281.5 & 210.8 \\
M2N2   &  93 & 2{,}401 & 36{,}249 & 138 & 0.9347 & 0.0373 & 0.4026 & 0.0683 & 279.8 & 309.8 \\
\bottomrule
\end{tabular}}
\end{table}

Furthermore, we searched over the score threshold $\tau_S$ for $\F_\KR$ Gaussian in Table \ref{tab:fine_score_grid}, and even at $\tau_S = 0.1$ it does not identify all anomalies.  The issue with the remaining $3$ is not the score, but an interval mismatch.  One has IoU at $0.7$, and the other two have multiple instrumental issues, and our method identified ones different from those marked by the experts.  
For this reason, in the main paper Table \ref{tab:full_alma_0.3} we show results which only filter by score threshold at $\tau_S = 0.3$.  Moreover, if the signal is marked as having platforming, the standard procedure is to discard it entirely, and pinpointing the interval is not issue of the central importance.

\begin{table}[ht]
\centering
\caption{Fine-grained score threshold grid search at $\tau_I = 0.75$ for $\F_\KR$ Gaussian on the full ALMA dataset ($N = 38{,}881$).}
\label{tab:fine_score_grid}
\begin{tabular}{c rrrrrrrrr}
\toprule
$\tau_S$ & TP & FP & FN & Accuracy & Precision & Recall & F1 & FPR & FNR \\
\midrule
0.10 & 228 & 6,166 & 3 & 0.8413 & 0.0357 & 0.9870 & 0.0688 & 0.1595 & 0.0130 \\
0.11 & 228 & 4,544 & 3 & 0.8831 & 0.0478 & 0.9870 & 0.0911 & 0.1176 & 0.0130 \\
0.12 & 228 & 3,365 & 3 & 0.9134 & 0.0635 & 0.9870 & 0.1192 & 0.0871 & 0.0130 \\
0.13 & 228 & 2,542 & 3 & 0.9345 & 0.0823 & 0.9870 & 0.1519 & 0.0658 & 0.0130 \\
0.14 & 228 & 1,957 & 3 & 0.9496 & 0.1043 & 0.9870 & 0.1887 & 0.0506 & 0.0130 \\
0.15 & 228 & 1,507 & 3 & 0.9612 & 0.1314 & 0.9870 & 0.2319 & 0.0390 & 0.0130 \\
0.16 & 228 & 1,212 & 3 & 0.9688 & 0.1583 & 0.9870 & 0.2729 & 0.0314 & 0.0130 \\
0.17 & 228 & 995 & 3 & 0.9743 & 0.1864 & 0.9870 & 0.3136 & 0.0257 & 0.0130 \\
0.18 & 228 & 809 & 3 & 0.9791 & 0.2199 & 0.9870 & 0.3596 & 0.0209 & 0.0130 \\
0.19 & 228 & 670 & 3 & 0.9827 & 0.2539 & 0.9870 & 0.4039 & 0.0173 & 0.0130 \\
0.20 & 228 & 571 & 3 & 0.9852 & 0.2854 & 0.9870 & 0.4427 & 0.0148 & 0.0130 \\
0.21 & 228 & 500 & 3 & 0.9871 & 0.3132 & 0.9870 & 0.4755 & 0.0129 & 0.0130 \\
0.22 & 228 & 437 & 3 & 0.9887 & 0.3429 & 0.9870 & 0.5089 & 0.0113 & 0.0130 \\
0.23 & 228 & 387 & 3 & 0.9900 & 0.3707 & 0.9870 & 0.5390 & 0.0100 & 0.0130 \\
0.24 & 228 & 348 & 3 & 0.9910 & 0.3958 & 0.9870 & 0.5651 & 0.0090 & 0.0130 \\
0.25 & 228 & 310 & 3 & 0.9919 & 0.4238 & 0.9870 & 0.5930 & 0.0080 & 0.0130 \\
0.26 & 228 & 282 & 3 & 0.9927 & 0.4471 & 0.9870 & 0.6154 & 0.0073 & 0.0130 \\
0.27 & 228 & 254 & 3 & 0.9934 & 0.4730 & 0.9870 & 0.6396 & 0.0066 & 0.0130 \\
0.28 & 228 & 233 & 3 & 0.9939 & 0.4946 & 0.9870 & 0.6590 & 0.0060 & 0.0130 \\
0.29 & 228 & 213 & 3 & 0.9944 & 0.5170 & 0.9870 & 0.6786 & 0.0055 & 0.0130 \\
0.30 & 228 & 191 & 3 & 0.9950 & 0.5442 & 0.9870 & 0.7015 & 0.0049 & 0.0130 \\
0.31 & 228 & 168 & 3 & 0.9956 & 0.5758 & 0.9870 & 0.7273 & 0.0043 & 0.0130 \\
0.32 & 228 & 154 & 3 & 0.9960 & 0.5969 & 0.9870 & 0.7439 & 0.0040 & 0.0130 \\
0.33 & 228 & 134 & 3 & 0.9965 & 0.6298 & 0.9870 & 0.7690 & 0.0035 & 0.0130 \\
0.34 & 228 & 117 & 3 & 0.9969 & 0.6609 & 0.9870 & 0.7917 & 0.0030 & 0.0130 \\
0.35 & 228 & 106 & 3 & 0.9972 & 0.6826 & 0.9870 & 0.8071 & 0.0027 & 0.0130 \\
0.36 & 228 & 97 & 3 & 0.9974 & 0.7015 & 0.9870 & 0.8201 & 0.0025 & 0.0130 \\
0.37 & 228 & 91 & 3 & 0.9976 & 0.7147 & 0.9870 & 0.8291 & 0.0024 & 0.0130 \\
0.38 & 228 & 83 & 3 & 0.9978 & 0.7331 & 0.9870 & 0.8413 & 0.0021 & 0.0130 \\
0.39 & 228 & 73 & 3 & 0.9980 & 0.7575 & 0.9870 & 0.8571 & 0.0019 & 0.0130 \\
0.40 & 228 & 63 & 3 & 0.9983 & 0.7835 & 0.9870 & 0.8736 & 0.0016 & 0.0130 \\
\bottomrule
\end{tabular}
\end{table}

\clearpage
\section{ALMA Examples with Differences in Methods}
\label{app:example-signals}

\paragraph{Example with NWKR-SS Advantage.}
The examples shown in Figure \ref{fig:qa2_high_rows} that our method consistently assigns higher scores to the true positive rows than the competing methods, indicating stronger agreement with the underlying target structure. In particular, the NWKR-based approach preserves both localization and scoring, yielding high scores on the rows of interest, whereas the alternative methods more frequently produce lower scores, or misplaced windows. This suggests that our method is more robust to noise and local spectral variation, and therefore more effective at identifying the relevant absorption features.

\begin{figure}[htbp]
  \centering
  \includegraphics[width=\linewidth]{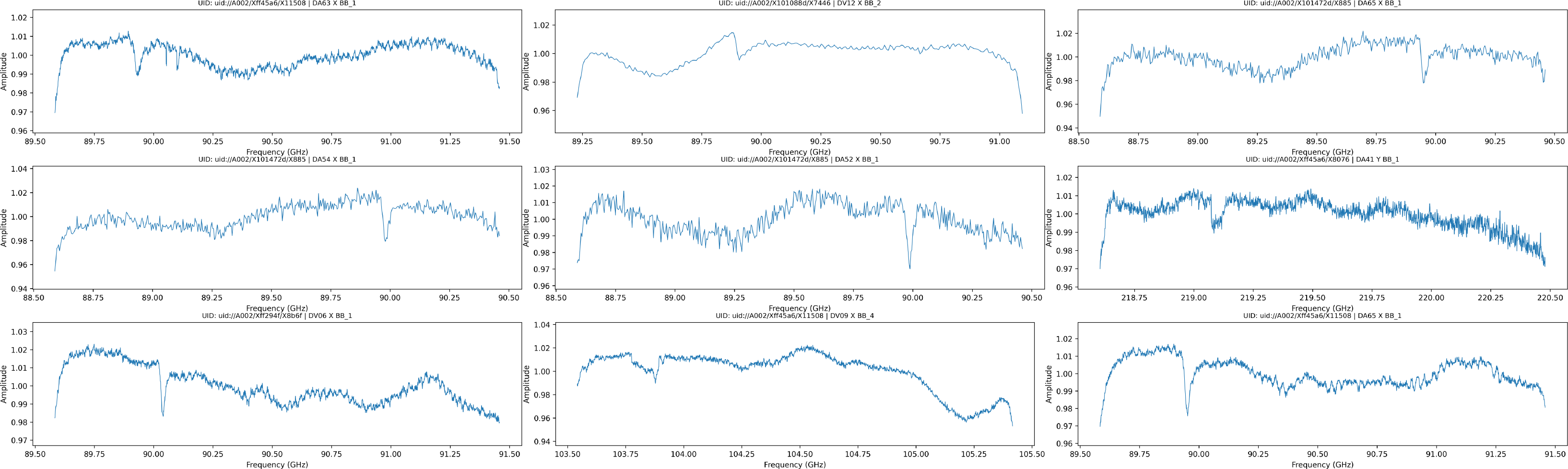}
  \caption{Example spectra where $\F_\KR$ methods outperform other baselines}
  \label{fig:qa2_high_rows}
\end{figure}

\paragraph{Example large and low scoring instances.}
Figures \ref{fig:qa2_high} and \ref{fig:qa2_low} show the contrast between the scores reported. These figures illustrate how NWKR Scan Statistics capture anomalies in real-life datasets.
Note that the low-scoring ones are not platforming anomalies.  Due to random variation, our scanning algorithms still find some most anomalous region, but since it is not that different from the global fit, the returned score is low.  

\begin{figure}[htbp]
  \begin{minipage}[b]{0.48\linewidth}
    \centering
    \begin{subfigure}[b]{\linewidth}
      \centering
      \includegraphics[width=\linewidth]{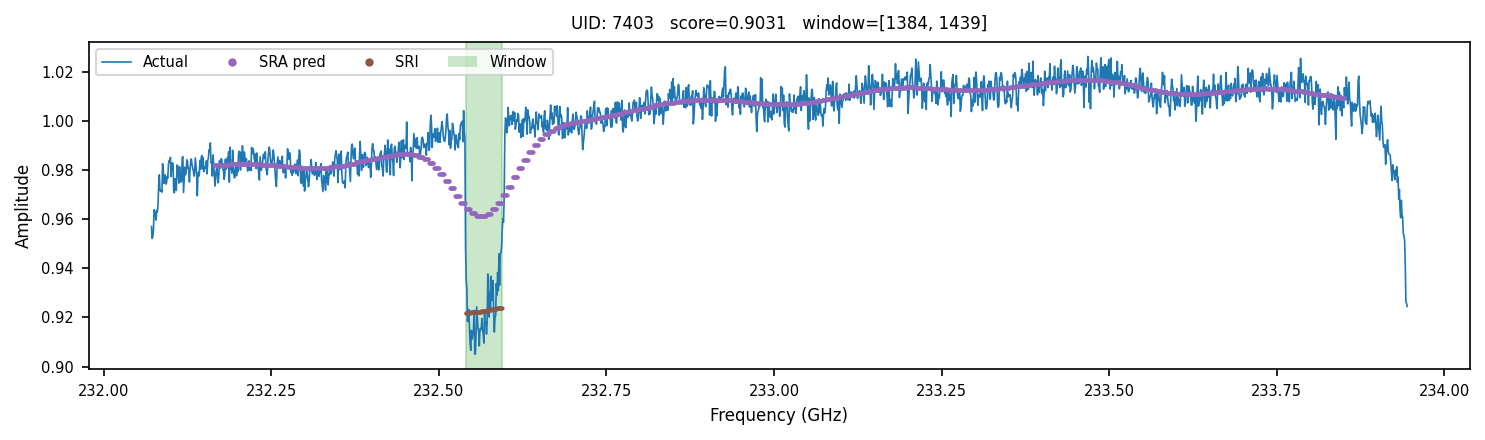}
      \label{fig:high_1}
    \end{subfigure}

    \vspace{0.3em}

    \begin{subfigure}[b]{\linewidth}
      \centering
      \includegraphics[width=\linewidth]{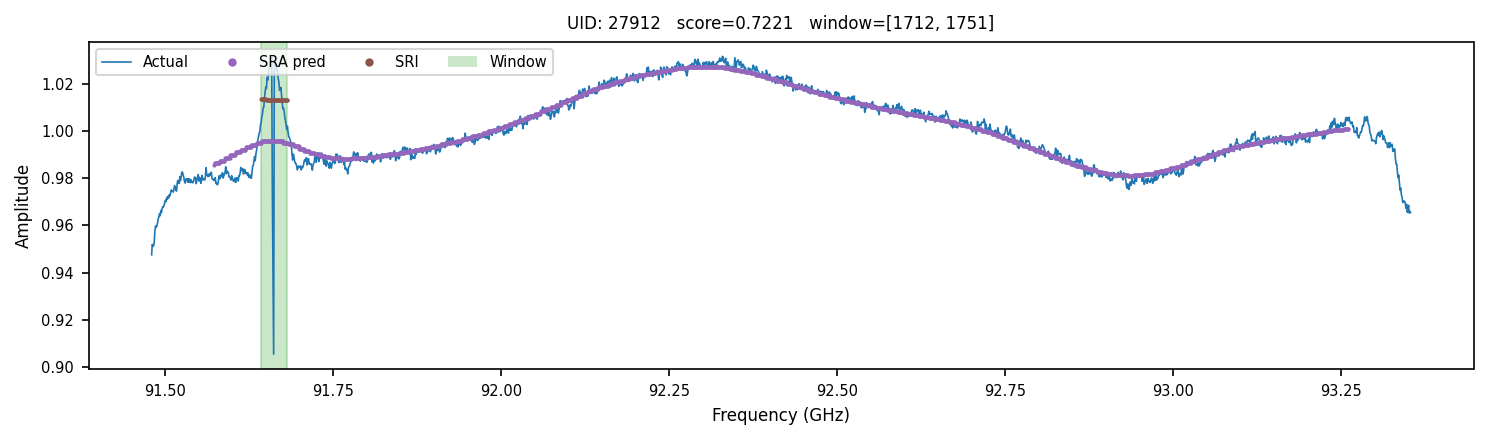}
      \label{fig:high_2}
    \end{subfigure}

  \end{minipage}
  \hfill
  \begin{minipage}[b]{0.48\linewidth}
    \centering
    \begin{subfigure}[b]{\linewidth}
      \centering
      \includegraphics[width=\linewidth]{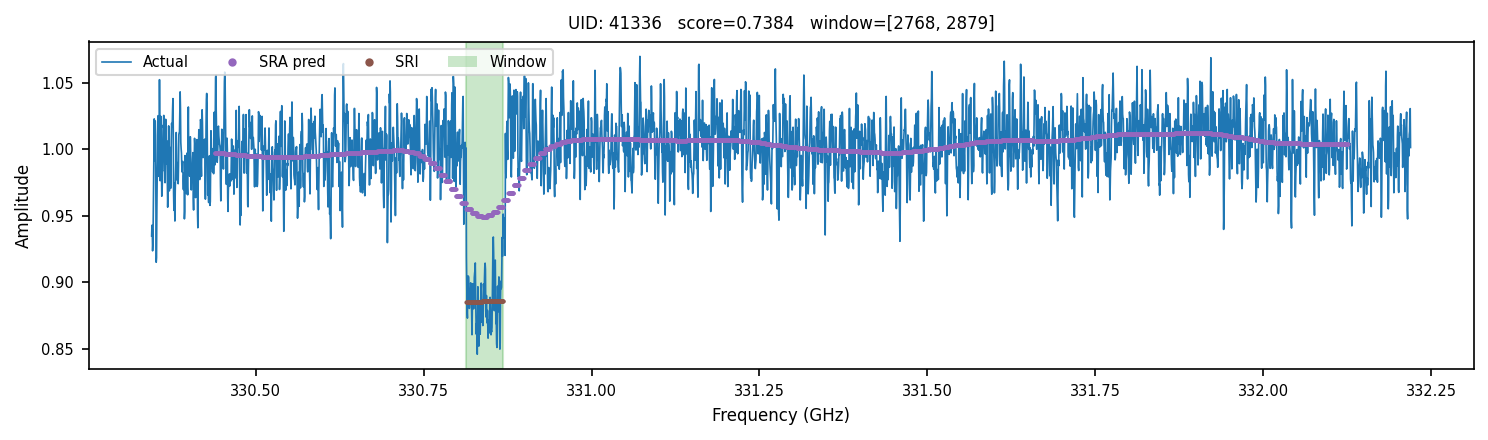}
      \label{fig:high_3}
    \end{subfigure}

    \vspace{0.3em}

    \begin{subfigure}[b]{\linewidth}
      \centering
      \includegraphics[width=\linewidth]{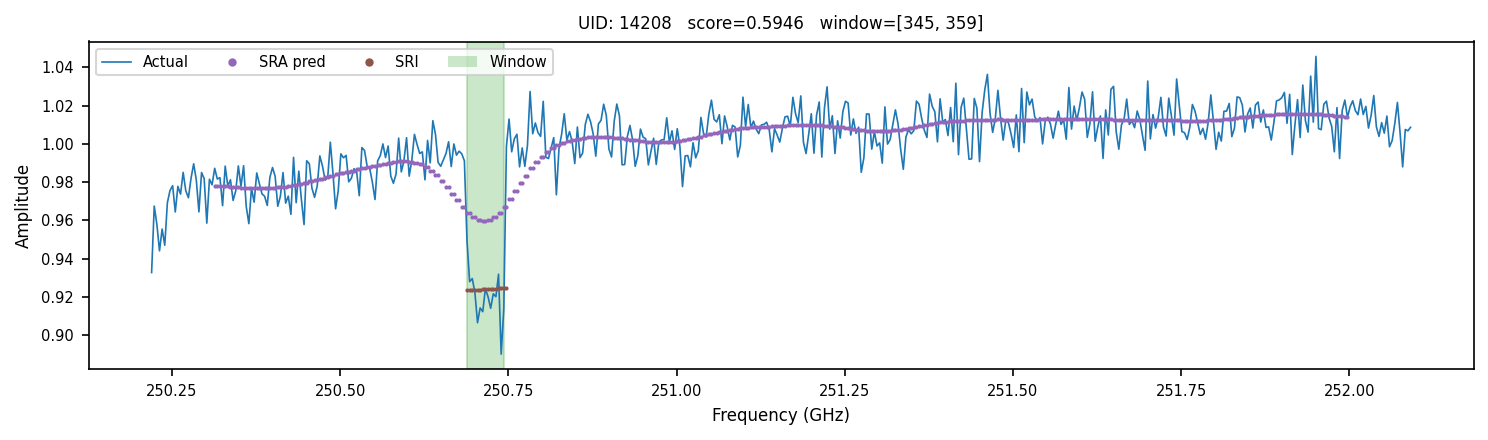}
      \label{fig:high_4}
    \end{subfigure}

  \end{minipage}
  \caption{High-scoring rows produced from ALMA dataset}
  \label{fig:qa2_high}
\end{figure}

\begin{figure}[htbp]
  \begin{minipage}[b]{0.48\linewidth}
    \centering
    \begin{subfigure}[b]{\linewidth}
      \centering
      \includegraphics[width=\linewidth]{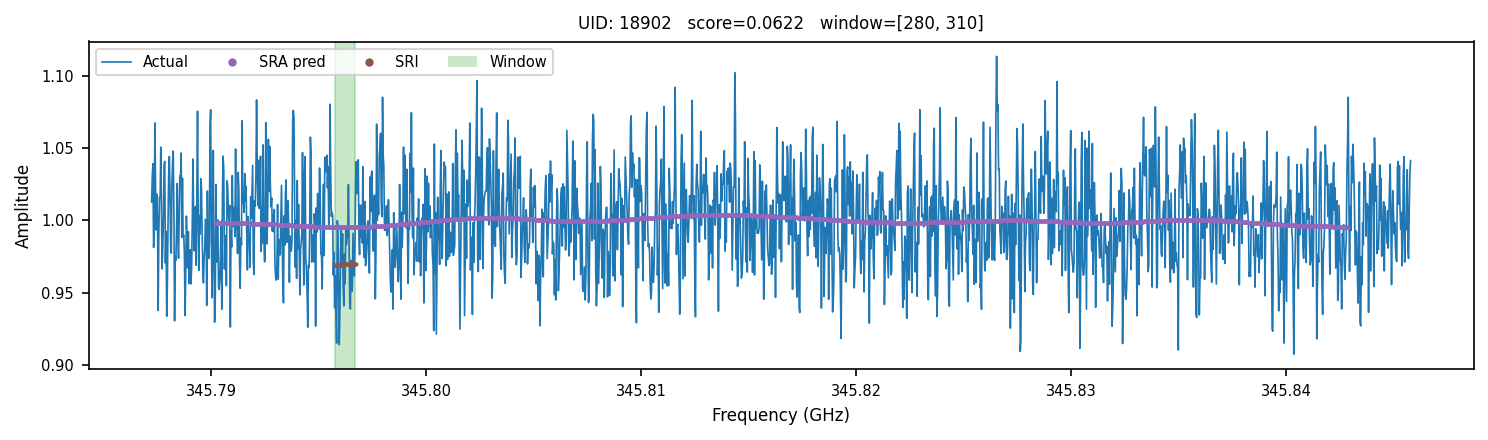}
      \label{fig:low_1}
    \end{subfigure}

    \vspace{0.3em}

    \begin{subfigure}[b]{\linewidth}
      \centering
      \includegraphics[width=\linewidth]{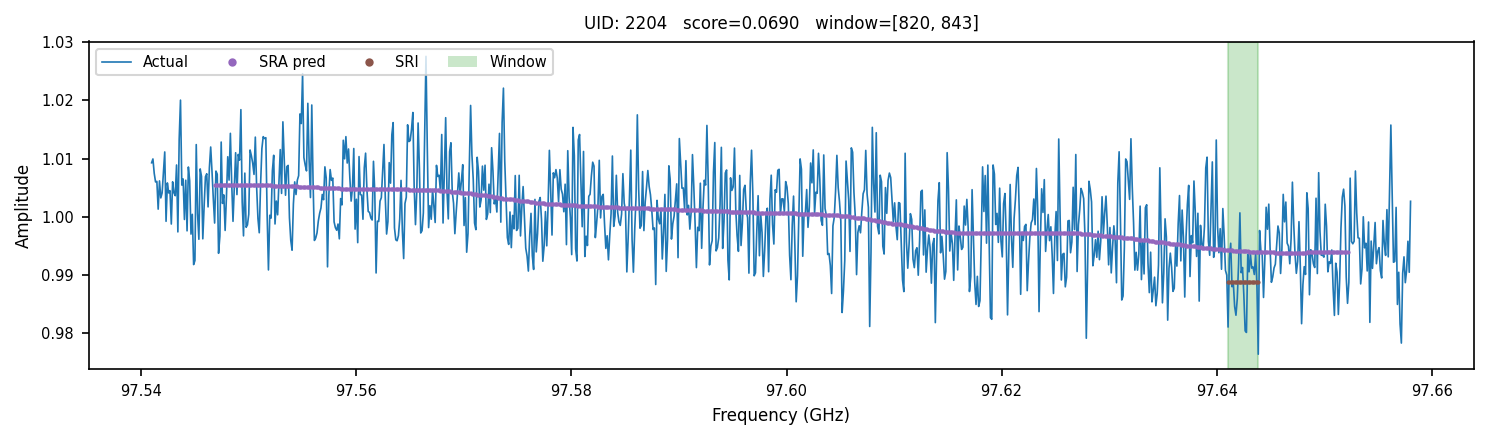}
      \label{fig:low_2}
    \end{subfigure}

  \end{minipage}
  \hfill
  \begin{minipage}[b]{0.48\linewidth}
    \centering
    \begin{subfigure}[b]{\linewidth}
      \centering
      \includegraphics[width=\linewidth]{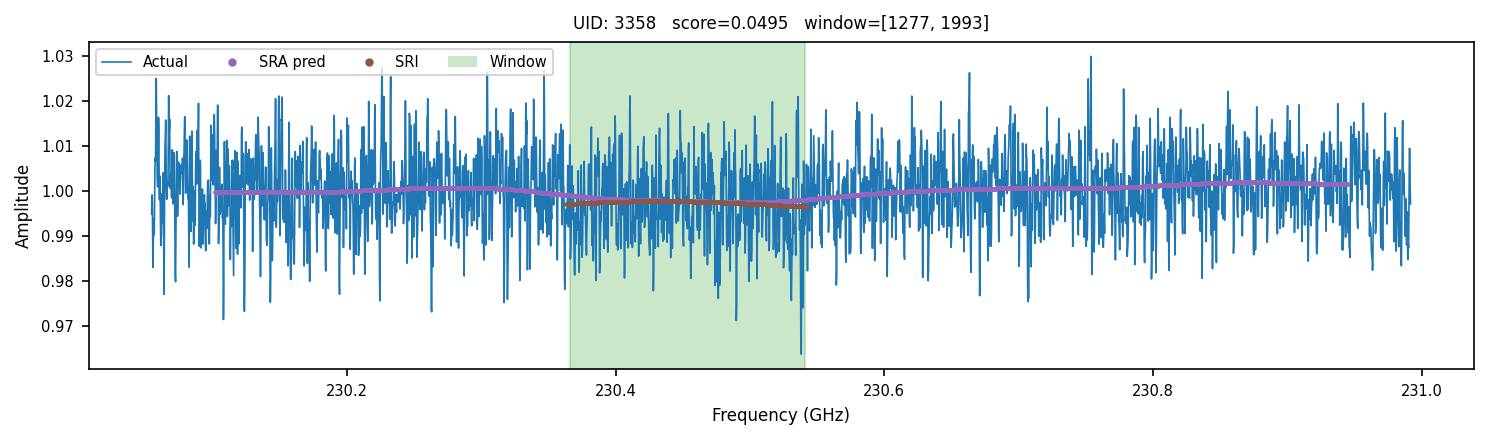}
      \label{fig:low_3}
    \end{subfigure}

    \vspace{0.3em}

    \begin{subfigure}[b]{\linewidth}
      \centering
      \includegraphics[width=\linewidth]{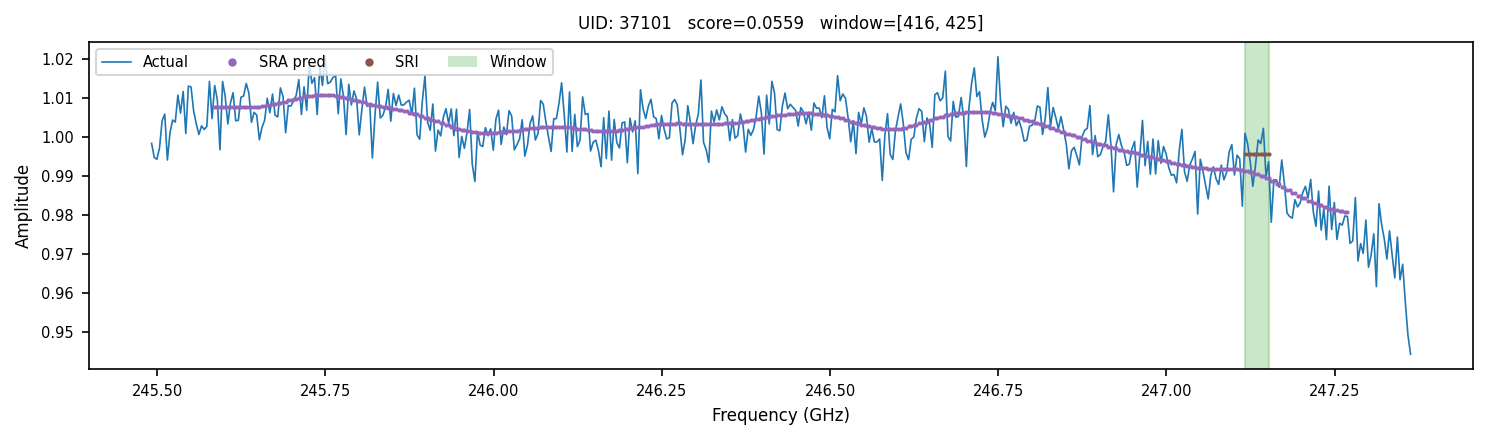}
      \label{fig:low_4}
    \end{subfigure}

  \end{minipage}
  \caption{Low-scoring rows produced from ALMA dataset}
  \label{fig:qa2_low}
\end{figure}

\clearpage
\section{Application:  Solar Radiation}
\label{app:solar}

To further demonstrate the generality of our approach, we apply our regression-based scan statistics framework to surface weather observations. We use \href{https://horel.chpc.utah.edu/uunet_time.html}{solar radiation data} from the default wbb station (William Browning Building), collected by MesoWest~\citep{horel2002mesowest}. Each day of solar radiation data produces a smooth curve as the sun rises, peaks, and sets. Occasionally, shadows interrupt this curve, caused by passing clouds, sensor obstruction, or instrumental glitches, creating interval anomalies.  Prominent and regular ones are likely caused by shadows from polls that depend on the sun-angle and do not reflect true solar radiation effects.  
We extracted individual daytime segments and ran all methods. 

Figures ~\ref{fig:wbb1} to \ref{fig:wbb5} show five representative examples. In all the figures, $\F_\KR$ manages to localize the anomaly almost perfectly, outperforming all the other methods. These results confirm the advantages of NWKR-based scan statistics in robust localization of interval anomalies in smoothly varying non-stationary signals.

\begin{figure}[htbp]
\centering
\includegraphics[width=\columnwidth]{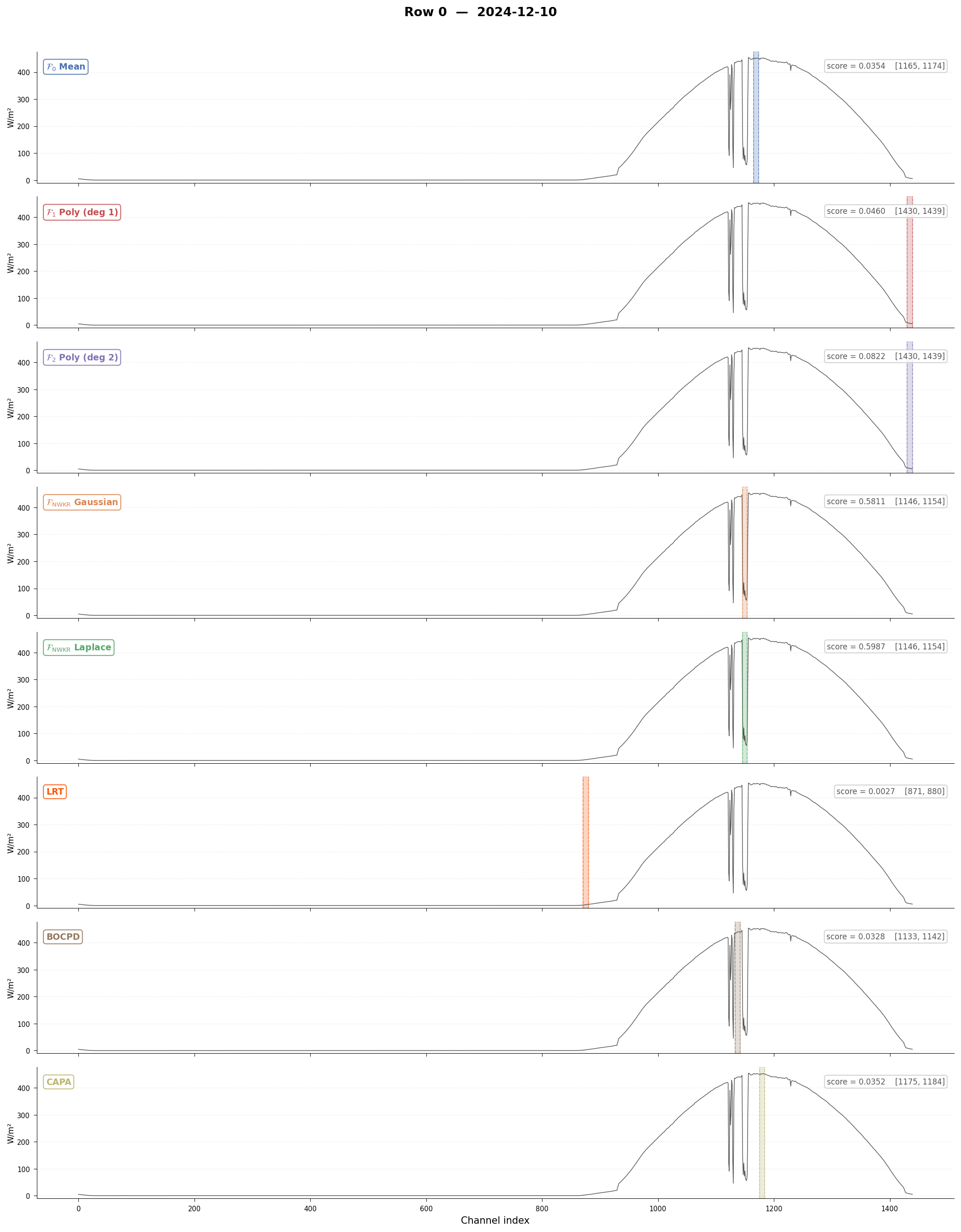}
\caption{Scan statistic results on WBB solar radiation signal from 2024-12-10.}
\label{fig:wbb1}
\end{figure}

\begin{figure}[htbp]
\centering
\includegraphics[width=\columnwidth]{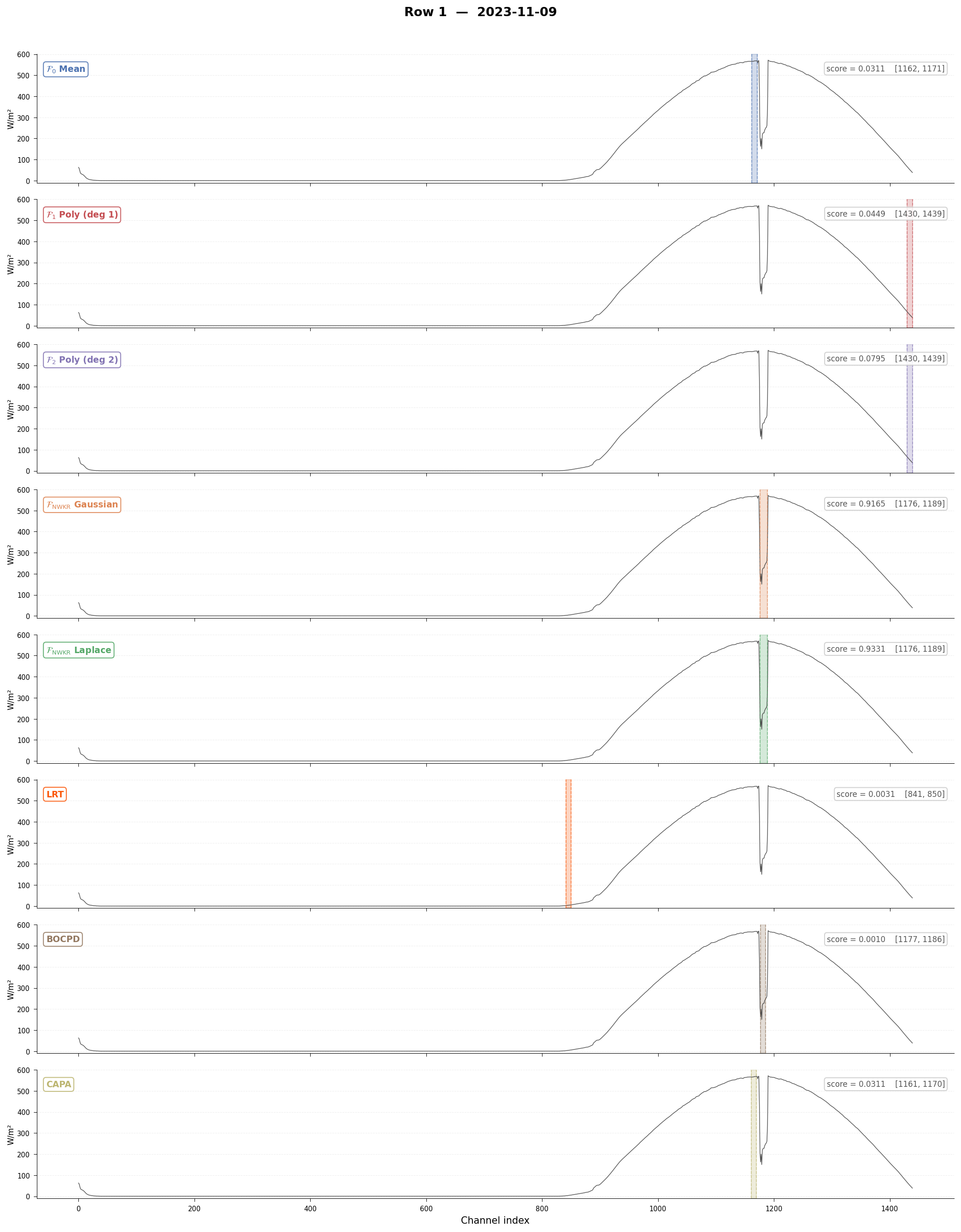}
\caption{WBB signal from 2023-11-09.}
\label{fig:wbb2}
\end{figure}

\begin{figure}[htbp]
\centering
\includegraphics[width=\columnwidth]{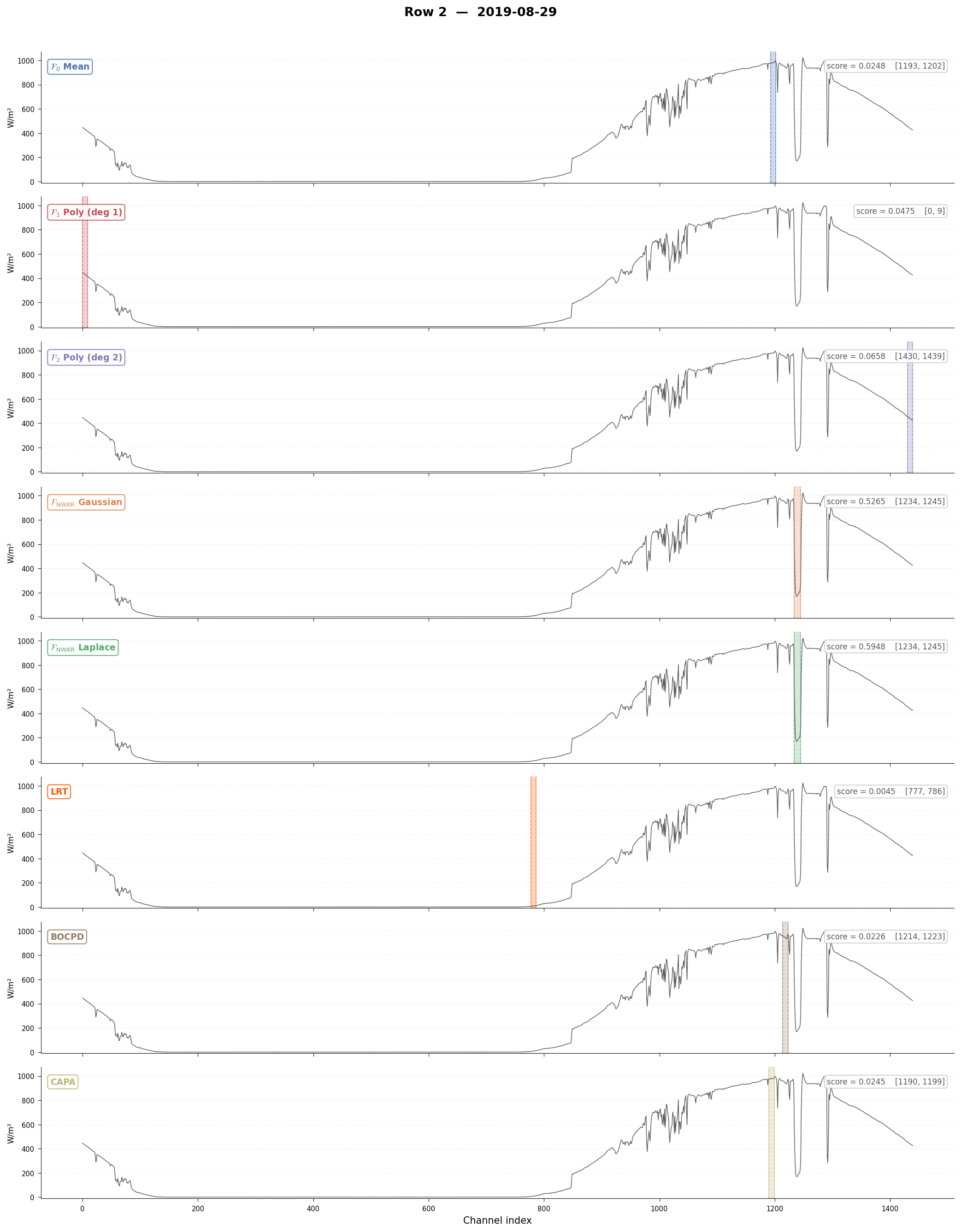}
\caption{WBB signal from 2019-08-29.}
\label{fig:wbb3}
\end{figure}

\begin{figure}[htbp]
\centering
\includegraphics[width=\columnwidth]{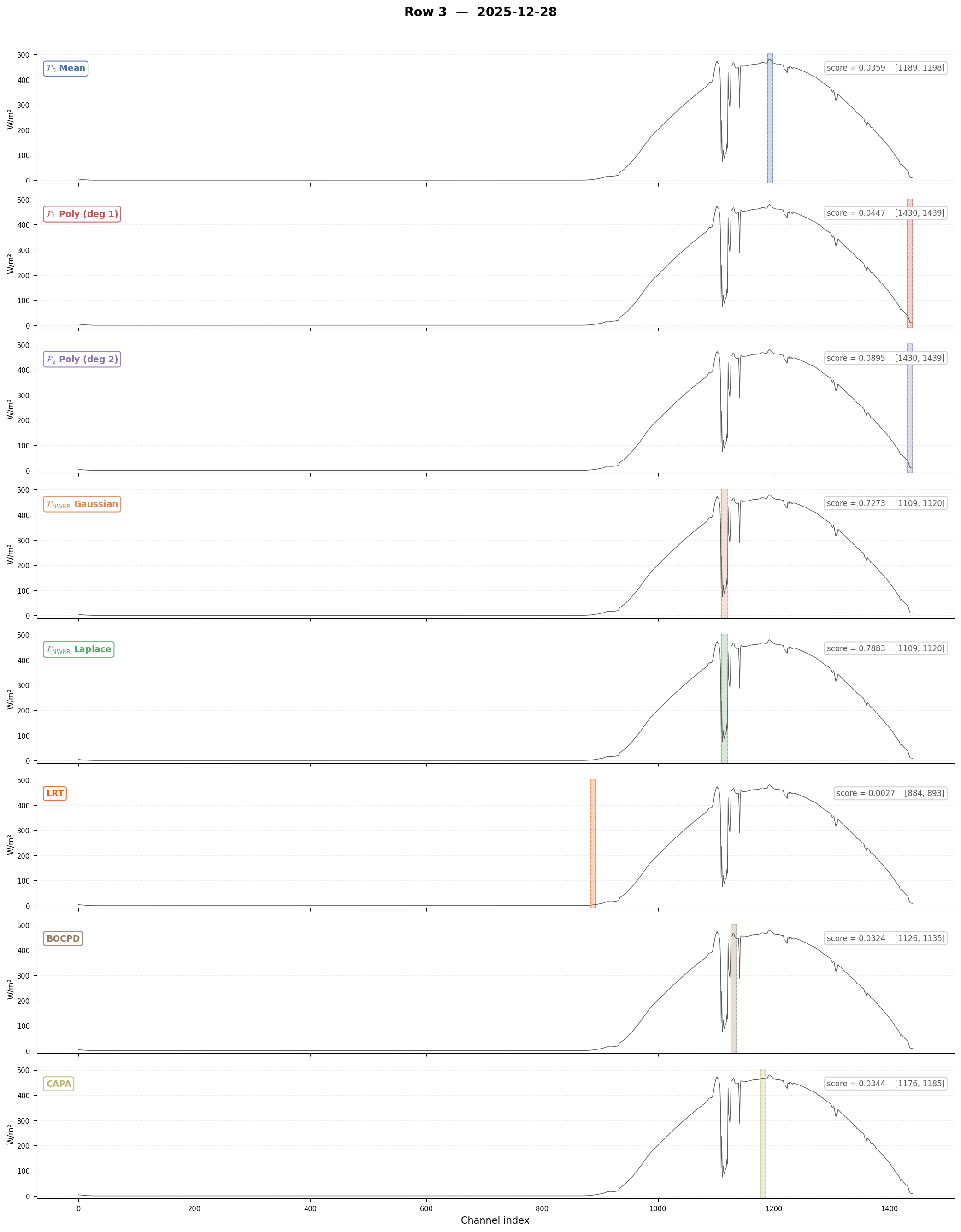}
\caption{WBB signal from 2025-12-28.}
\label{fig:wbb4}
\end{figure}

\begin{figure}[htbp]
\centering
\includegraphics[width=\columnwidth]{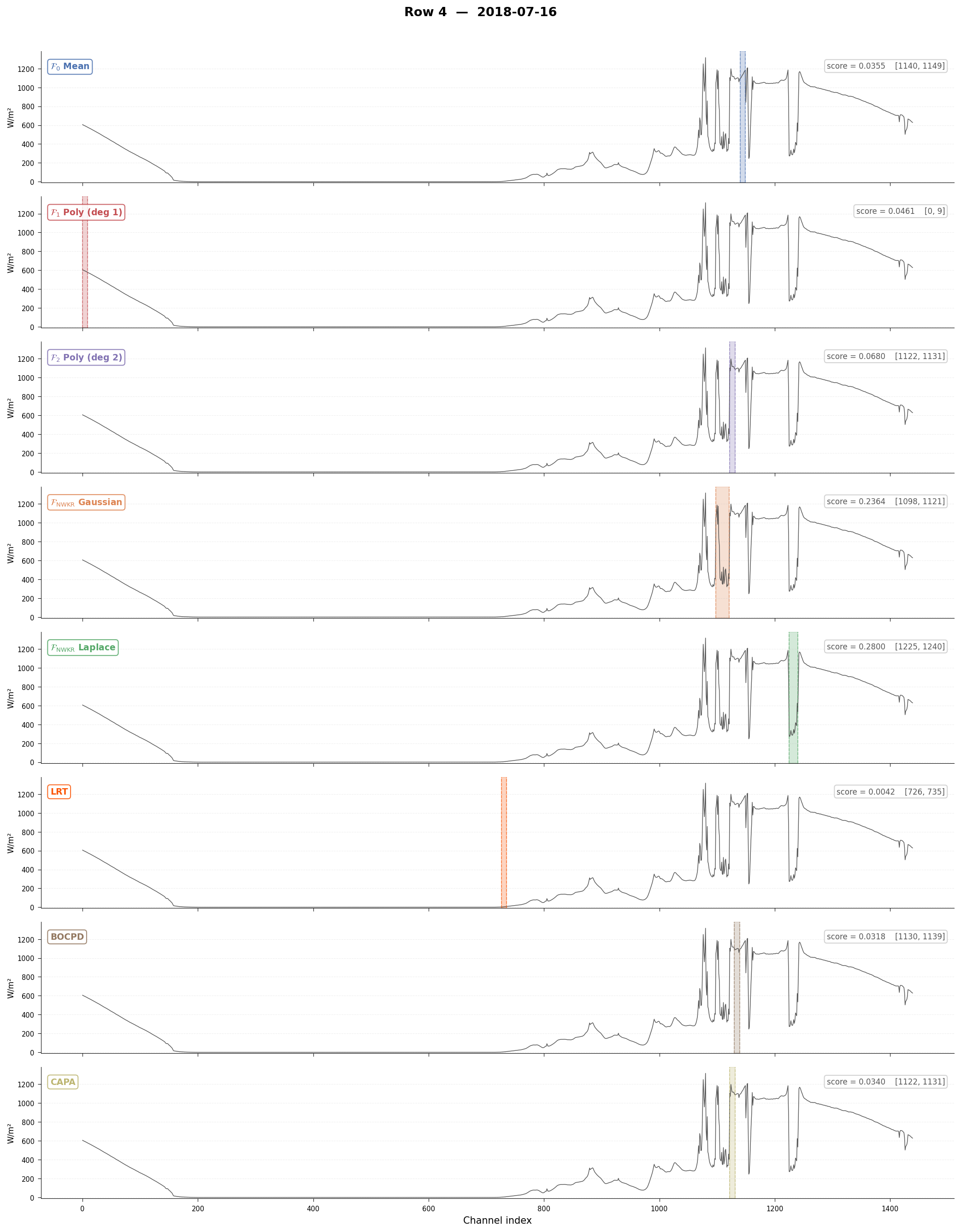}
\caption{WBB signal from 2018-07-16.}
\label{fig:wbb5}
\end{figure}


\clearpage

\end{document}